\newcommand{\eigmax}{\boldsymbol{\lambda_{\rm max}}}
\newcommand{\eigmin}{\boldsymbol{\lambda_{\rm min}}}
\newcommand{\svdmax}{\boldsymbol{\sigma_{\rm max}}}
\newcommand{\svdmin}{\boldsymbol{\sigma_{\rm min}}}
\DeclareMathOperator*{\argmin}{arg\,min}
\newcommand{\rmT}{{\rm T}}
\newcommand{\BBN}{{\mathbb N}}
\newcommand{\BBR}{{\mathbb R}}
\newcommand{\SD}{{\mathcal D}}
\newcommand{\SW}{{\mathcal W}}
\newtheorem{theo}{Theorem}
\newtheorem{cor}{Corollary}
\newtheorem{defin}{Definition}
\newtheorem{lem}{Lemma}
\def\BibTeX{{\rm B\kern-.05em{\sc i\kern-.025em b}\kern-.08em
    T\kern-.1667em\lower.7ex\hbox{E}\kern-.125emX}}
\newcommand{\myitem}[1]{%
\item[#1]\protected@edef\@currentlabel{#1}}
\newcommand\footnoteref[1]{\protected@xdef\@thefnmark{\ref{#1}}\@footnotemark}
\begin{document}
\title{Generalized Forgetting Recursive Least Squares: Stability and Robustness Guarantees}
\author{Brian Lai and Dennis S. Bernstein, \IEEEmembership{Life Fellow, IEEE}
\thanks{Manuscript received 19 July 2023. This work was supported
by the NSF Graduate Research Fellowship under Grant DGE 1841052. (Corresponding author: Brian Lai.)}
\thanks{The authors are with the Department of Aerospace Engineering,
University of Michigan, Ann Arbor, MI 48109 USA (e-mail: brianlai@
umich.edu; dsbaero@umich.edu). }}

\maketitle

\begin{abstract}
This work presents generalized forgetting recursive least squares (GF-RLS), a generalization of recursive least squares (RLS) that encompasses many extensions of RLS as special cases. 
First, sufficient conditions are presented for the 1) Lyapunov stability, 2) uniform Lyapunov stability, 3) global asymptotic stability, and 4) global uniform exponential stability of parameter estimation error in GF-RLS when estimating fixed parameters without noise.
Second, robustness guarantees are derived for the estimation of time-varying parameters in the presence of measurement noise and regressor noise.
These robustness guarantees are presented in terms of global uniform ultimate boundedness of the parameter estimation error.
A specialization of this result gives a bound to the asymptotic bias of least squares estimators in the errors-in-variables problem.
Lastly, a survey is presented to show how GF-RLS can be used to analyze various extensions of RLS from the literature.

\end{abstract}

\begin{IEEEkeywords}
%Enter key words or phrases in alphabetical order, separated by commas. Using the IEEE Thesaurus can help you find the best standardized keywords to fit your article. Use the thesaurus access request form for free access to the IEEE Thesaurus: \underline{https://www.ieee.org/publications/services/thesaurus-}\discretionary{}{}{}\underline{access-page.com.}
%
errors-in-variables, identification, recursive least squares, robustness, stability analysis
\end{IEEEkeywords}

\section{Introduction}
\label{sec:introduction}
\IEEEPARstart{R}{ecursive} least squares (RLS) is a foundational algorithm in systems and control theory for the online identification of fixed parameters \cite{ljung1983theory,aastrom2013adaptive,islam2019recursive}. 
A property of RLS is the eigenvalues of the covariance matrix are monotonically decreasing over time and may become arbitrarily small \cite[subsection 2.3.2]{sastry1990adaptive}, \cite{goel2020recursive}, resulting in eventual sluggish adaptation and inability to track time-varying parameters \cite{ortega2020modified,Salgado1988modified}.
Numerous extensions of RLS have been developed to improve identification of time-varying parameters, including exponential forgetting \cite{johnstone1982exponential,islam2019recursive}, variable-rate forgetting \cite{bruce2020convergence,fortescue1981implementation,Dasgupta1987Asymptotically,paleologu2008robust,Hung2005Gradient,mohseni2022recursive}, directional forgetting \cite{cao2000directional,kulhavy1984tracking,bittanti1990convergence}, resetting \cite{Salgado1988modified,shin2020new,Lai2023Exponential}, and multiple forgetting \cite{vahidi2005recursive}, among others.
Hence, we use the general term \textit{forgetting} to describe the processes in extensions of RLS which break the monotonicity of the covariance matrix.

Furthermore, several general frameworks have been developed which include extensions of RLS as special cases, for example \cite{parkum1992recursive} in discrete-time and \cite{shaferman2021continuous} in continuous-time. 
The recent work of \cite{bin2022generalized} develops a much more general framework of recursive estimators which contains RLS extensions as a special case.
These frameworks help to unify various RLS extensions and provide overarching analysis.
In discussing RLS extensions, we highlight three important points: 1) cost function, 2) stability, and 3) robustness.

\subsubsection{Cost Function}
The RLS update equations are derived as a recursive method to find the minimizer of a least-squares cost function \cite{islam2019recursive}.
While some RLS extensions can be derived from modified least-squares cost functions (e.g. exponential forgetting \cite{islam2019recursive} and variable-rate forgetting \cite{bruce2020convergence}), many have been developed as ad-hoc modifications to the RLS update equations, without an associated cost function (e.g. \cite{cao2000directional,kulhavy1984tracking,bittanti1990convergence,
Salgado1988modified,Lai2023Exponential,vahidi2005recursive,
paleologu2008robust,johnstone1982exponential,Dasgupta1987Asymptotically}). 
Therefore, there is an interest in developing a cost function from which extensions of RLS can be derived. 
In \cite{shaferman2021continuous}, a continuous-time cost functional is presented, from which continuous-time least-squares algorithms are derived. 
However, the least-squares algorithms derived from the cost function in \cite{shaferman2021continuous} are continuous-time versions of RLS extensions, not the original RLS extensions developed in discrete-time.

\subsubsection{Stability}
Many RLS extensions give conditions which guarantee stability of parameter estimation error to zero when estimating constant parameters \cite{bruce2020convergence,bittanti1990convergence,johnstone1982exponential,shin2020new}. General frameworks \cite{parkum1992recursive}, \cite{bin2022generalized}, and \cite{shaferman2021continuous} all present stability guarantees which apply to various RLS extensions. 
The stability analyses in \cite{parkum1992recursive} and \cite{shaferman2021continuous} consider RLS extensions with scalar measurements and present exponential stability guarantees for constant-parameter estimation. 
While the stability analysis in \cite{bin2022generalized} encompasses RLS extensions with vector measurements, the sufficient conditions for stability may be overly restrictive when applied to RLS extensions as a much more general class of recursive estimators is analyzed. 

\subsubsection{Robustness}

Several RLS extensions further analyze robustness to time-varying parameters and to bounded measurement noise \cite{Dasgupta1987Asymptotically,samson1983stability,milek1995time}. 
If there is noise in the measurement and regressor, this is known as the \textit{errors-in-variables} problem, which is significantly more challenging than the problem of robustness to measurement noise alone \cite{soderstrom2007errors,griliches1986errors,gillard2010overview}.
In particular, if the measurement noise and regressor noise are correlated, then, with the exception of some special cases, the least squares estimator is asymptotically biased \cite[p. 205]{ljung1998system} and it is often desirable to bound such bias.
Similarly to stability, while \cite{bin2022generalized} addresses the errors-in-variables problem, the required sufficient conditions may be overly restrictive and the bounds on asymptotic bias overly loose when applied specifically to RLS extensions.

\subsection{Contributions}

The contributions of this article are summarized as follows:
\begin{enumerate}[leftmargin=*]
    \item We derive generalized forgetting recursive least squares (GF-RLS), which is a discrete-time version of the continuous-time RLS generalization developed in \cite{shaferman2021continuous}. See section \ref{sec: GFRLS}.
    While \cite{shaferman2021continuous} presents continuous-time versions of RLS algorithms that fit into their framework,
    we will show in section \ref{sec: RLS extensions} how various RLS algorithms from the literature, without any modifications, can be derived from the GF-RLS cost function as special cases. 
    Hence, our subsequent analysis directly applies the original discrete-time RLS algorithms which have been widely used in their original discrete-time form.
    \item For constant-parameter estimation, we use Lyapunov methods to develop stability guarantees for GF-RLS.
    These guarantees extend results obtained in \cite{parkum1992recursive} by generalizing to vector measurements and by providing weaker stability guarantees when not all the conditions for exponential stability are met. See section \ref{sec: GFRLS stability}.
    The sufficient conditions for stability we present are similar to the heuristic conditions in \cite{Salgado1988modified} for design of RLS algorithms.
    Thus, we believe our sufficient conditions may serve as a more precise guideline in the design of future RLS algorithms.
    \item In addition, we develop robustness guarantees of GF-RLS to bounded parameter variation, bounded measurement noise, and bounded regressor noise. In particular, we obtain sufficient conditions for the global uniform ultimate boundedness of the parameter-estimation error.
    None of these three additional factors were considered in \cite{parkum1992recursive} or \cite{shaferman2021continuous}.
    A specialization of this result provides a bound on the asymptotic bias of parameter estimation error in the context of the errors-in-variables problem.
    See section \ref{subsec: GFRLS robustness}.

\end{enumerate}

\subsection{Notation and Terminology}
$\BBN_0$ denotes the set of non-negative integers $\{0,1,2,\hdots\}$.
$I_n$ denotes the $n \times n$ identity matrix, and $0_{m \times n}$ denotes the $m \times n$ zero matrix. 
For symmetric $A\in \BBR^{n \times n}$, let the $n$ real eigenvalues of $A$ be denoted by
$\eigmin(A) \triangleq \boldsymbol{\lambda_n}(A) \le \cdots \le \eigmax(A) \triangleq \boldsymbol{\lambda_1}(A)$. 
For $B \in \BBR^{m \times n}$, $\svdmax(B)$ denotes the largest singular value of $B$, and $\svdmin(B)$ denotes the smallest singular value of $B$.

For symmetric $P,Q \in \BBR^{n \times n}$, $P \prec Q$ (respectively, $P \preceq Q$) denotes that $Q-P$ is positive definite (respectively, positive semidefinite).
For all $x \in \BBR^n$, $\Vert x \Vert$ denotes the Euclidean norm, that is $\Vert x \Vert \triangleq \sqrt{x^\rmT x}$. 
For positive-semidefinite $R \in \BBR^{n \times n}$ and $x \in \BBR^n$, $\Vert x \Vert_R \triangleq \sqrt{x^\rmT R x}$. 
For symmetric $S \in \BBR^{n \times n}$, $\Vert x \Vert_S^2 \triangleq x^\rmT S x$. Note that the notation $\Vert x \Vert_S^2$ is used only for convenience and that $\Vert x \Vert_S$ is not defined when $S$ is not positive semidefinite.
For $\varepsilon > 0$ and $x_e \in \BBR^n$, define the closed ball $\bar{\mathcal{B}}_\varepsilon(x_e) \triangleq \{x \in \BBR^n \colon \Vert x - x_e \Vert \le \varepsilon \}$.

\begin{defin}
    \label{defn: PE}
A sequence $(\phi_k)_{k=k_0}^\infty \subset \BBR^{p \times n}$ is \textit{persistently exciting} if there exist $N \ge 1$ and $\alpha>0$ such that, for all $k \ge k_0$, 
\begin{align}
    \alpha I_n \preceq \sum_{i=k}^{k+N-1}\phi_i^\rmT \phi_i.
\end{align}
Furthermore, $\alpha$ and $N$ are, respectively, the {\it lower bound}
 and {\it persistency window} of $(\phi_k)_{k=k_0}^\infty$.
\end{defin}
\begin{defin}
    \label{defn: BE}
A sequence $(\phi_k)_{k=k_0}^\infty \subset \BBR^{p \times n}$ is \textit{bounded} if there exists
$\beta\in(0,\infty)$
such that, for all $k \ge k_0$, 
\begin{align}
    \label{eqn: bounded sequence}
    \phi_k^\rmT \phi_k \preceq \beta I_n.
\end{align}
Furthermore, $\beta$ is the \textit{upper bound} of $(\phi_k)_{k=k_0}^\infty$.
\end{defin}

% Note that \eqref{eqn: bounded sequence} is equivalent to $\svdmax(\phi_k) \le \sqrt{\beta}$.

\section{Generalized Forgetting Recursive Least Squares (GF-RLS)}
\label{sec: GFRLS}
The following theorem presents generalized forgetting recursive least squares, which is a discrete-time RLS generalization derived from minimizing a least-squares cost function. 

\begin{theo}
\label{theo: GFRLS}
For all $k \ge 0$, let $\Gamma_k \in \BBR^{p \times p}$ be positive definite,
let $\phi_k \in \BBR^{p \times n}$, and let $y_k \in \BBR^{p}$. 
Furthermore, let $P_0 \in \BBR^{n \times n}$ be positive definite, and let $\theta_0 \in \BBR^{n}$. 
For all $k \ge 0$, let $F_k \in \BBR^{n \times n}$ be symmetric and satisfy
\begin{align}
    \label{eqn: Pinv - F is pos def GFRLS}
    F_k \prec P_0^{-1} + \sum_{i=0}^{k-1} \left( - F_i + \phi_i^\rmT \Gamma_i^{-1} \phi_i  \right).
\end{align}
For all $k \ge 0$, define $J_k \colon \BBR^n \rightarrow \BBR$ by
\begin{equation}
    \begin{aligned}
        \label{eqn: GFRLS Cost}
        J_{k}(\hat{\theta}) \triangleq J_{k,{\rm loss}}(\hat{\theta}) - J_{k,{\rm forget}}(\hat{\theta}) + J_{k,{\rm reg}}(\hat{\theta}),
    \end{aligned}
\end{equation}
where
\begin{align}
    J_{k,{\rm loss}}(\hat{\theta}) &\triangleq \sum_{i=0}^{k} \Vert y_i - \phi_i \hat{\theta} \Vert _{\Gamma_i^{-1}}^2 , \\ 
    J_{k,{\rm forget}}(\hat{\theta}) &\triangleq \sum_{i=0}^{k} \Vert \hat{\theta} - \theta_i \Vert_{ F_i}^2 , \\
    J_{k,{\rm reg}}(\hat{\theta}) &\triangleq  \Vert \hat{\theta} - \theta_0 \Vert_{P_0^{-1}}^2.
\end{align}
Then, $J_k$ has a unique global minimizer, denoted
\begin{align}
    \theta_{k+1} \triangleq \argmin_{\hat{\theta} \in \BBR^n} J_k(\hat{\theta}),
\end{align}
which, for all $k \ge 0$, is given by
\begin{align}
    P_{k+1}^{-1} &= P_k^{-1} - F_k  + \phi_k^\rmT \Gamma_k^{-1} \phi_k, \label{eqn: GFRLS Pinv Update}\\
    \theta_{k+1} &= \theta_k + P_{k+1} \phi_k^\rmT \Gamma_k^{-1} (y_k - \phi_k \theta_k ). \label{eqn: GFRLS theta Update}
\end{align}
\end{theo}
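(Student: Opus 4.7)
The plan is to prove Theorem 1 by first establishing that $J_k$ is a strictly convex quadratic in $\hat{\theta}$, then directly solving $\nabla J_k(\hat\theta) = 0$ in its summation form, and finally converting the closed-form expression into the recursive updates \eqref{eqn: GFRLS Pinv Update} and \eqref{eqn: GFRLS theta Update} via a short induction on $k$.

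For the first step, I would expand each of $J_{k,\rm loss}$, $J_{k,\rm forget}$, and $J_{k,\rm reg}$ as quadratic forms in $\hat{\theta}$ and compute the Hessian, obtaining $\frac{1}{2}\nabla^2 J_k(\hat\theta) = P_0^{-1} + \sum_{i=0}^{k}\bigl(-F_i + \phi_i^\rmT \Gamma_i^{-1} \phi_i\bigr)$. To show this is positive definite, I rearrange hypothesis \eqref{eqn: Pinv - F is pos def GFRLS} as $P_0^{-1} - F_k + \sum_{i=0}^{k-1}(-F_i + \phi_i^\rmT \Gamma_i^{-1} \phi_i) \succ 0$; adding the positive semidefinite term $\phi_k^\rmT \Gamma_k^{-1} \phi_k$ preserves strict positive definiteness and reproduces precisely the Hessian expression. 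Hence $J_k$ is strictly convex and coercive, so it admits a unique global minimizer. Moreover, this Hessian expression equals $2 P_{k+1}^{-1}$ as defined by the telescoped recursion starting from $P_0^{-1}$, so equation \eqref{eqn: GFRLS Pinv Update} is established and each $P_{k+1}$ is positive definite.

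For the second step, I would set $\nabla J_k(\theta_{k+1}) = 0$ to obtain the normal equation $P_{k+1}^{-1} \theta_{k+1} = P_0^{-1} \theta_0 + \sum_{i=0}^{k} \phi_i^\rmT \Gamma_i^{-1} y_i - \sum_{i=0}^{k} F_i \theta_i$. Then, by induction on $k$, I would assume the analogous identity holds at index $k-1$ for $\theta_k$ and subtract to isolate the incremental update $P_{k+1}^{-1} \theta_{k+1} - P_k^{-1} \theta_k = \phi_k^\rmT \Gamma_k^{-1} y_k - F_k \theta_k$. Substituting $P_k^{-1} = P_{k+1}^{-1} + F_k - \phi_k^\rmT \Gamma_k^{-1} \phi_k$ from \eqref{eqn: GFRLS Pinv Update} and solving for $\theta_{k+1}$ yields \eqref{eqn: GFRLS theta Update} directly; the base case $k=0$ reduces to the same computation with the $\theta_k$-sums absent. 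The only subtle step is the careful use of hypothesis \eqref{eqn: Pinv - F is pos def GFRLS} to verify positive definiteness of the Hessian, since the condition is written asymmetrically with $F_k$ separated from the summation; once that is handled, the remainder of the proof is routine quadratic minimization and algebra.
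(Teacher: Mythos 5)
Your proposal is correct and follows essentially the same route as the paper's Appendix C proof: both identify $J_k$ as a strictly convex quadratic whose Hessian is $2\bigl(P_0^{-1}+\sum_{i=0}^{k}(-F_i+\phi_i^\rmT\Gamma_i^{-1}\phi_i)\bigr)=2P_{k+1}^{-1}$, verify positive definiteness from \eqref{eqn: Pinv - F is pos def GFRLS} plus the positive-semidefinite term $\phi_k^\rmT\Gamma_k^{-1}\phi_k$, and recover \eqref{eqn: GFRLS theta Update} by differencing the normal equations at consecutive indices (the paper phrases this as the recursion $b_k=b_{k-1}-\phi_k^\rmT\Gamma_k^{-1}y_k+F_k\theta_k$ together with $\theta_k=-P_k b_{k-1}$, which is exactly your induction step). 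No gaps.
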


\begin{proof}
    See Appendix C.
\end{proof}

For all $k \ge 0$, We call 
$y_k \in \BBR^p$ the \textit{measurement}, 
$\phi_k \in \BBR^{p \times n}$ the \textit{regressor}, 
and $\theta_k \in \BBR^n$ the \textit{parameter estimate}. 
Moreover, we call 
$\Gamma_k \in \BBR^{p \times p}$ the \textit{weighting matrix}, 
$F_k \in \BBR^{n \times n}$ the \textit{forgetting matrix}, 
and $P_k \in \BBR^{n \times n}$ the \textit{covariance matrix}.
Furthermore, \eqref{eqn: GFRLS Pinv Update} and \eqref{eqn: GFRLS theta Update} are the GF-RLS update equations.

Notice that condition \eqref{eqn: Pinv - F is pos def GFRLS} guarantees that $J_k$ has a unique global minimizer, as shown in the proof of Theorem \ref{theo: GFRLS}. Corollary \ref{cor: Pkinv - Fk is pos def GFRLS v2} gives an important interpretation to \eqref{eqn: Pinv - F is pos def GFRLS}.

\begin{cor}
\label{cor: Pkinv - Fk is pos def GFRLS v2}
    Consider the notation and assumptions of Theorem \ref{theo: GFRLS}. Then, for all $k \ge 0$,
    \begin{align}
    \label{eqn: Pinv - F is pos def GFRLS v2}
        P_k^{-1} = P_0^{-1} + \sum_{i=0}^{k-1} \left( - F_i + \phi_i^\rmT \Gamma_i^{-1} \phi_i \right).
    \end{align}
    Hence, for all $k \ge 0$, \eqref{eqn: Pinv - F is pos def GFRLS} holds if and only if
    \begin{align}
    \label{eqn: Pinv - F is pos def explicit}
        P_k^{-1} - F_k \succ 0.
    \end{align}
\end{cor}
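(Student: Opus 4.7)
The plan is to prove the two statements in sequence, with the first serving as the key identity that makes the second a trivial rewriting.

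For the identity \eqref{eqn: Pinv - F is pos def GFRLS v2}, I would proceed by induction on $k$, using the update equation \eqref{eqn: GFRLS Pinv Update} from Theorem \ref{theo: GFRLS} as the recursive step. The base case $k=0$ holds trivially since the sum is empty and both sides equal $P_0^{-1}$. For the inductive step, assuming the identity at index $k$, I would substitute it into the right-hand side of \eqref{eqn: GFRLS Pinv Update} to obtain
\begin{align}
    P_{k+1}^{-1} &= P_k^{-1} - F_k + \phi_k^\rmT \Gamma_k^{-1} \phi_k \notag \\
    &= P_0^{-1} + \sum_{i=0}^{k-1}\bigl( -F_i + \phi_i^\rmT \Gamma_i^{-1} \phi_i \bigr) - F_k + \phi_k^\rmT \Gamma_k^{-1} \phi_k,
\end{align}
which simplifies by absorbing the last two terms into the sum, giving the claim at index $k+1$.

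For the equivalence between \eqref{eqn: Pinv - F is pos def GFRLS} and \eqref{eqn: Pinv - F is pos def explicit}, I would just substitute the identity \eqref{eqn: Pinv - F is pos def GFRLS v2} into the right-hand side of \eqref{eqn: Pinv - F is pos def GFRLS}. The condition $F_k \prec P_0^{-1} + \sum_{i=0}^{k-1}(-F_i + \phi_i^\rmT \Gamma_i^{-1} \phi_i)$ becomes $F_k \prec P_k^{-1}$, which by definition of the $\prec$ ordering on symmetric matrices is exactly $P_k^{-1} - F_k \succ 0$.

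I do not anticipate a real obstacle here: the argument is a one-line induction plus a substitution. The only minor subtlety is that Theorem \ref{theo: GFRLS} is stated with condition \eqref{eqn: Pinv - F is pos def GFRLS} as a hypothesis guaranteeing that $P_{k+1}$ is well defined, so the induction should be framed as showing that the identity \eqref{eqn: Pinv - F is pos def GFRLS v2} holds whenever the recursion \eqref{eqn: GFRLS Pinv Update} produces a valid (positive-definite) $P_k$; the equivalence claim then follows immediately, closing the circular-sounding but in fact straightforward relationship between the hypothesis and its explicit form.
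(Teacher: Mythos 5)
Your proposal is correct and matches the paper's argument: the paper likewise obtains \eqref{eqn: Pinv - F is pos def GFRLS v2} by repeated substitution of \eqref{eqn: GFRLS Pinv Update} (your induction is just the formalized version of this) and then derives the equivalence by substituting the identity into \eqref{eqn: Pinv - F is pos def GFRLS}. Your closing remark about framing the induction around the well-definedness of $P_k$ is a reasonable extra precaution but not something the paper bothers to spell out.
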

\begin{proof}
    Equation \eqref{eqn: Pinv - F is pos def GFRLS v2} follows directly from repeated substitution of \eqref{eqn: GFRLS Pinv Update}. Next, \eqref{eqn: Pinv - F is pos def explicit} follows from substituting \eqref{eqn: Pinv - F is pos def GFRLS v2} into \eqref{eqn: Pinv - F is pos def GFRLS}.
\end{proof}
Corollary \ref{cor: Pkinv - Fk is pos def GFRLS v2} shows that to ensure that the GF-RLS cost \eqref{eqn: GFRLS Cost} has a unique global minimizer (i.e. \eqref{eqn: Pinv - F is pos def GFRLS} is satisfied), it suffices to, for all $k \ge 0$, choose $F_k$ such that $P_k^{-1} - F_k \succ 0$.

\begin{defin}
    \label{defn: GFRLS proper}
    GF-RLS is {\it proper} if, for all $k \ge 0$, $F_k \in \BBR^{n \times n}$ is positive semidefinite. GF-RLS is \textit{improper} if it is not proper.
\end{defin}

Note that the GF-RLS cost $J_k(\hat{\theta})$ is composed as the sum of three terms, namely, 
the {\it loss term} $J_{k,{\rm loss}}(\hat{\theta})$, 
the {\it forgetting term} $-J_{k,{\rm forget}}(\hat{\theta})$, 
and the {\it regularization term} $J_{k,{\rm reg}}(\hat{\theta})$. 
Note that, if GF-RLS is proper, then, for all $\hat{\theta} \in \BBR^n$, the forgetting term $-J_{k,{\rm forget}}(\hat{\theta})$ is nonpositive. 
In practice, if GF-RLS is proper, then the forgetting term rewards the difference between the estimate $\theta_{k+1}$ and $\theta_i$ for previous steps $0 \le i \le k$. 
This reward is weighted by the forgetting matrix $F_k \in \BBR^{n \times n}$. 
It is shown in Section \ref{sec: RLS extensions} that for particular choices of the forgetting matrix, we recover extensions of RLS with forgetting from GF-RLS.

\section{Stability of Fixed Parameter Estimation}
\label{sec: GFRLS stability}

For the analysis of this section, we make the assumption that there exist fixed parameters $\theta \in \BBR^n$ such that, for all $k \ge 0$, 
\begin{align}
    \label{eqn: y = phi theta}
    y_k = \phi_k \theta.
\end{align}
Furthermore, for all $k \ge 0$, we define the parameter estimation error $\tilde{\theta}_k \in \BBR^n$ by
\begin{align}
    \label{eqn: thetatilde defn}
    \tilde{\theta}_k \triangleq \theta_k - \theta.
\end{align}
Substituting into \eqref{eqn: GFRLS theta Update}, it then follows that
%
% \begin{align}
%     \theta_{k+1} - \theta = \theta_k - \theta + P_{k+1} \phi_k^\rmT \Gamma_k^{-1} (\phi_k \theta - \phi_k \theta_k),
% \end{align}
% %
% and hence
%
\begin{align}
\label{eqn: theta tilde update}
    \tilde{\theta}_{k+1} = M_k \tilde{\theta}_k,
\end{align}
where, for all $k \ge 0$, $M_k \in \BBR^{n \times n}$ is defined
\begin{align}
    \label{eqn: M_k defn}
    M_k \triangleq I_n - P_{k+1} \phi_k^\rmT \Gamma_k^{-1} \phi_k.
\end{align}
Hence, \eqref{eqn: theta tilde update} is a linear time-varying system with an equilibrium $\tilde{\theta}_k \equiv 0$.

Next, for all $k \ge 0$, let $\Gamma_k^{-\frac{1}{2}} \in \BBR^{p \times p}$ be the unique positive-definite matrix such that
\begin{align}
    \Gamma_k^{-1} = \Gamma_k^{-\frac{1}{2} \rmT} \Gamma_k^{-\frac{1}{2}}.
\end{align}
Furthermore, define the \textit{weighted regressor} $\bar{\phi}_k \in \BBR^{p \times n}$ by
\begin{align}
    \label{eqn: phibar defn}
    \bar{\phi}_k \triangleq \Gamma_k^{-\frac{1}{2}} \phi_k.
\end{align}
Substituting \eqref{eqn: phibar defn} into \eqref{eqn: M_k defn}, it follows that, for all $k \ge 0$,
\begin{align}
    \label{eqn: M_k v2}
    M_k = I_n - P_{k+1} \bar{\phi}_k^\rmT \bar{\phi}_k.
\end{align}

Finally, let $k_0 \ge 0$ and consider the following conditions:
\begin{enumerate}
    \myitem{\textit{A1)}}\label{item: cond F PSD} For all $k \ge k_0$, $F_k \succeq 0$.
    \myitem{\textit{A2)}}\label{item: cond inv(Pkinv - F) upper bound}   There exists $b \in (0,\infty)$ such that, for all $k \ge k_0$, $(P_k^{-1} - F_k)^{-1} \preceq b I_n$. 
    \myitem{\textit{A3)}}\label{item: cond Pk lower bound} There exists $a > 0$ such for all $k \ge k_0$, $a I_n \preceq P_k$. 
    \myitem{\textit{A4)}}\label{item: cond weighted regressor PE and bounded} The sequence of weighted regressors $(\bar{\phi}_k)_{k=k_0}^\infty$ is persistently exciting with lower bound $\bar{\alpha} > 0$ and persistency window $N \ge 1$ and bounded with upper bound $\bar{\beta} \in (0,\infty)$.
\end{enumerate}
We now present Theorem \ref{theo: GFRLS Lyapunov stability v2} which gives sufficient conditions for the stability of the equilibrium $\tilde{\theta}_k \equiv 0$ of \eqref{eqn: theta tilde update}.
See Appendix B for relevant discrete-time stability definitions.

\begin{theo}
\label{theo: GFRLS Lyapunov stability v2}
For all $k \ge 0$, let $\Gamma_k \in \BBR^{p \times p}$ be positive definite,
let $\phi_k \in \BBR^{p \times n}$, let $y_k \in \BBR^{p}$, and let $F_k \in \BBR^{n \times n}$ be symmetric and satisfy \eqref{eqn: Pinv - F is pos def GFRLS}.
Let $P_0 \in \BBR^{n \times n}$ be positive definite, and let $\theta_0 \in \BBR^{n}$. 
For all $k \ge 1$, let $P_k \in \BBR^{n \times n}$ and $\theta_k \in \BBR^n$ be recursively updated by \eqref{eqn: GFRLS Pinv Update} and \eqref{eqn: GFRLS theta Update}.
%
% Consider the assumptions and notation of Theorem \ref{theo: GFRLS}. 
%
Furthermore, assume there exists $\theta \in \BBR^n$ such that, for all $k \ge 0$, \eqref{eqn: y = phi theta} holds.
Then the following statements hold:
    \begin{enumerate}
        \myitem{1)}\label{item: stability statement 1} If there exists $k_0 \ge 0$ such that \ref{item: cond F PSD} and \ref{item: cond inv(Pkinv - F) upper bound}, then the equilibrium $\tilde{\theta}_k \equiv 0$ of \eqref{eqn: theta tilde update} is Lyapunov stable.
        \myitem{2)}\label{item: stability statement 2} If there exists $k_0 \ge 0$ such that \ref{item: cond F PSD}, \ref{item: cond inv(Pkinv - F) upper bound}, and \ref{item: cond Pk lower bound}, then the equilibrium $\tilde{\theta}_k \equiv 0$ of \eqref{eqn: theta tilde update} is uniformly Lyapunov stable.
        \myitem{3)}\label{item: stability statement 3} If there exists $k_0 \ge 0$ such that \ref{item: cond F PSD}, \ref{item: cond inv(Pkinv - F) upper bound}, and \ref{item: cond weighted regressor PE and bounded}, then the equilibrium $\tilde{\theta}_k \equiv 0$ of \eqref{eqn: theta tilde update} is globally asymptotically stable.\footnote{\label{note1}Since \eqref{eqn: theta tilde update} is linear time-varying (LTV), it suffices to say that \eqref{eqn: theta tilde update} is asymptotically (resp. uniformly exponentially) stable, as asymptotic (resp. uniform exponential) stability implies global asymptotic (resp. uniform exponential) stability for LTV systems \cite{zhou2017asymptotic}, \cite[sec. 5.5]{agarwal2000difference}.}
        \myitem{4)}\label{item: stability statement 4} If there exists $k_0 \ge 0$ such that \ref{item: cond F PSD}, \ref{item: cond inv(Pkinv - F) upper bound}, \ref{item: cond Pk lower bound}, and \ref{item: cond weighted regressor PE and bounded}, then the equilibrium $\tilde{\theta}_k \equiv 0$ of \eqref{eqn: theta tilde update} is globally uniformly exponentially stable.\footnoteref{note1}
    \end{enumerate}
\end{theo}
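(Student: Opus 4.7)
The plan is to use the weighted Lyapunov function $V_k(\tilde\theta)\triangleq\tilde\theta^{\rmT}P_k^{-1}\tilde\theta$. First I would derive the identity $P_{k+1}^{-1}M_k=P_k^{-1}-F_k$, which follows directly from \eqref{eqn: GFRLS Pinv Update} and \eqref{eqn: M_k v2}, and combine it with the Woodbury-type identity $I_p-\bar\phi_k P_{k+1}\bar\phi_k^{\rmT}=(I_p+\bar\phi_k(P_k^{-1}-F_k)^{-1}\bar\phi_k^{\rmT})^{-1}$ to obtain the central Lyapunov difference
\begin{align}
V_{k+1}(\tilde\theta_{k+1})-V_k(\tilde\theta_k)=-\Vert\tilde\theta_k\Vert_{F_k}^2-\Vert\bar\phi_k\tilde\theta_k\Vert_{R_k}^2,
\end{align}
where $R_k\triangleq(I_p+\bar\phi_k(P_k^{-1}-F_k)^{-1}\bar\phi_k^{\rmT})^{-1}\succ 0$. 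Under \ref{item: cond F PSD} both terms on the right are non-positive, so $V_k$ is non-increasing along trajectories of \eqref{eqn: theta tilde update}.

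For statement \ref{item: stability statement 1}, condition \ref{item: cond inv(Pkinv - F) upper bound} implies $P_k^{-1}\succeq P_k^{-1}-F_k\succeq(1/b)I_n$, so $V_k\ge(1/b)\Vert\tilde\theta_k\Vert^2$, while $V_{k_0}\le\eigmax(P_{k_0}^{-1})\Vert\tilde\theta_{k_0}\Vert^2$; monotonicity then yields Lyapunov stability by choosing $\delta(\varepsilon,k_0)=\varepsilon/\sqrt{b\,\eigmax(P_{k_0}^{-1})}$. Adding \ref{item: cond Pk lower bound} gives $P_k^{-1}\preceq(1/a)I_n$, so the upper bound $V_{k_0}\le(1/a)\Vert\tilde\theta_{k_0}\Vert^2$ makes the ratio $\Vert\tilde\theta_k\Vert^2/\Vert\tilde\theta_{k_0}\Vert^2\le b/a$ independent of $k_0$, proving statement \ref{item: stability statement 2}.

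For \ref{item: stability statement 3}, \ref{item: cond weighted regressor PE and bounded} yields $R_k\succeq(1+b\bar\beta)^{-1}I_p$, so summing the Lyapunov difference gives $\sum_{k=k_0}^{\infty}\Vert\bar\phi_k\tilde\theta_k\Vert^2\le(1+b\bar\beta)V_{k_0}<\infty$, hence $\Vert\bar\phi_k\tilde\theta_k\Vert\to 0$. The crux is translating this into $\tilde\theta_k\to 0$: from \eqref{eqn: theta tilde update} I would derive the telescoping identity $\tilde\theta_k-\tilde\theta_i=\sum_{j=k}^{i-1}P_{j+1}\bar\phi_j^{\rmT}(\bar\phi_j\tilde\theta_j)$ and, using $P_{j+1}\preceq bI_n$ (which follows from \ref{item: cond inv(Pkinv - F) upper bound} together with $\bar\phi_k^{\rmT}\bar\phi_k\succeq 0$) and $\Vert\bar\phi_j\Vert\le\sqrt{\bar\beta}$, obtain the error-propagation bound
\begin{align}
\Vert\bar\phi_i\tilde\theta_k\Vert\le\Vert\bar\phi_i\tilde\theta_i\Vert+b\bar\beta\sum_{j=k}^{i-1}\Vert\bar\phi_j\tilde\theta_j\Vert
\end{align}
for all $k\le i\le k+N-1$. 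Since each term on the right tends to zero as $k\to\infty$, Definition \ref{defn: PE} gives $\bar\alpha\Vert\tilde\theta_k\Vert^2\le\sum_{i=k}^{k+N-1}\Vert\bar\phi_i\tilde\theta_k\Vert^2\to 0$, proving global asymptotic stability.

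For \ref{item: stability statement 4}, I would quantify the previous step: squaring the propagation bound and applying $(a+b)^2\le 2a^2+2b^2$ together with Cauchy--Schwarz yields $\sum_{i=k}^{k+N-1}\Vert\bar\phi_i\tilde\theta_k\Vert^2\le C\sum_{i=k}^{k+N-1}\Vert\bar\phi_i\tilde\theta_i\Vert^2$ with $C=2+2(b\bar\beta)^2N(N-1)$. Combining the PE lower bound, the telescoping upper bound $\sum_{i=k}^{k+N-1}\Vert\bar\phi_i\tilde\theta_i\Vert^2\le(1+b\bar\beta)(V_k-V_{k+N})$, and $\Vert\tilde\theta_k\Vert^2\ge aV_k$ from \ref{item: cond Pk lower bound} yields $V_{k+N}\le\rho V_k$ for some $\rho\in[0,1)$. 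Iterating this and sandwiching $V_k$ between $(1/b)\Vert\tilde\theta_k\Vert^2$ and $(1/a)\Vert\tilde\theta_k\Vert^2$ produces uniform exponential decay with rate $\rho^{1/N}$. I expect the main obstacle to be the derivation of the error-propagation inequality with constants uniform in $k$ and $k_0$: this step is what converts summability of $\Vert\bar\phi_k\tilde\theta_k\Vert^2$ into PE-driven decay of $\tilde\theta_k$ itself, and it relies essentially on the \emph{upper} bound $P_{k+1}\preceq bI_n$ implied by \ref{item: cond inv(Pkinv - F) upper bound}, a consequence that is not immediate from the statement of the condition.
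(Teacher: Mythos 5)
Your proposal is correct. For statements \ref{item: stability statement 1} and \ref{item: stability statement 2} it is essentially the paper's proof: the same Lyapunov function $V(k,\tilde{\theta})=\tilde{\theta}^\rmT P_k^{-1}\tilde{\theta}$, the same one-step identity (the proof of Lemma \ref{lem: Delta V_k bound} derives exactly your decrement, with your $R_k$ equal to $G_k^{-1}$ in \eqref{eqn: deltaV internal term inverse}), and the same use of the implication $F_k\succeq 0,\ (P_k^{-1}-F_k)^{-1}\preceq bI_n\ \Rightarrow\ P_k\preceq bI_n$ (the paper's Lemma \ref{lem: inv(Pinv - Fk) inequality}), which, as you correctly flag, is the non-obvious consequence of \ref{item: cond F PSD}--\ref{item: cond inv(Pkinv - F) upper bound} that everything downstream depends on. For statements \ref{item: stability statement 3} and \ref{item: stability statement 4} you take a genuinely different route. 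The paper subsamples the error dynamics every $N$ steps and proves a uniform $N$-step decrement $\Delta^N V_k\succeq c_N I_n$ by an exact algebraic factorization $\mathcal{W}_{k,N}\Psi_{k,N}=\Phi_{k,N}$ relating the closed-loop regressor stack to the open-loop one, bounding $\svdmax(\mathcal{W}_{k,N})$ via the block-matrix singular-value lemma; it then applies the discrete Lyapunov theorem to the subsampled system and transfers back using the growth bound $\Vert\tilde{\theta}_{k+1}\Vert\le(1+b\bar{\beta})\Vert\tilde{\theta}_k\Vert$. You instead run the classical summability-and-swapping argument: telescoping the decrement gives $\sum_k\Vert\bar{\phi}_k\tilde{\theta}_k\Vert^2\le(1+b\bar{\beta})V_{k_0}<\infty$, the error-propagation inequality transfers excitation from $\tilde{\theta}_i$ back to the frozen $\tilde{\theta}_k$, and persistent excitation closes the loop; squaring and Cauchy--Schwarz upgrade this to the contraction $V_{k+N}\le\rho V_k$ for statement \ref{item: stability statement 4}. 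Both arguments are sound; the trade-off is that the paper's heavier construction yields the explicit constant $c_N$ in \eqref{eqn: c_N defn}, which is reused verbatim in the robustness analysis (Lemma \ref{lem: DeltaN V bound} feeds directly into the proof of Theorem \ref{theo: GFRLS UUB}), whereas your argument is more elementary and self-contained but produces a looser, unexploited constant (and your $C$ is in fact a factor-of-two overestimate of what Cauchy--Schwarz gives; harmless). Two small presentational points: in statement \ref{item: stability statement 3} you should say explicitly that global asymptotic stability is the conjunction of the convergence you establish with the Lyapunov stability already proved under \ref{item: cond F PSD}--\ref{item: cond inv(Pkinv - F) upper bound}; and the geometric rate for $\Vert\tilde{\theta}_k\Vert$ is $\rho^{1/(2N)}$, not $\rho^{1/N}$, since the contraction is on $V$.
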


\begin{proof}
     In the case $k_0 = 0$, see Appendix D for a proof of statements \ref{item: stability statement 1} and \ref{item: stability statement 2}, and Appendix E for a proof of statements \ref{item: stability statement 3} and \ref{item: stability statement 4}. The case $k_0 \ge 1$ can be shown similarly.
     Also see Figure \ref{fig:proof-roadmap} for a proof roadmap of Theorem 2.
\end{proof}

\subsection{Discussion of Conditions \ref{item: cond F PSD} through \ref{item: cond weighted regressor PE and bounded}}
\label{subsec: Theo conditions 1 - 4 attainability}

This subsection gives a brief discussion of conditions \ref{item: cond F PSD} through \ref{item: cond weighted regressor PE and bounded} used in Theorem \ref{theo: GFRLS Lyapunov stability v2}.

\subsubsection{Condition \ref{item: cond F PSD}} 

Note that by Definition \ref{defn: GFRLS proper}, this condition is equivalent to GF-RLS being proper.
Furthermore, whether or not GF-RLS is proper is a direct consequence of the algorithm design.
We will show in section \ref{sec: RLS extensions} how ten different extensions of RLS are all proper (some requiring minor assumptions), and hence satisfy condition \ref{item: cond F PSD}. 

\subsubsection{Conditions \ref{item: cond inv(Pkinv - F) upper bound} and \ref{item: cond Pk lower bound}} 

In 1988, \cite{Salgado1988modified} qualitatively proposed that RLS extensions should guarantee a (nonzero) lower bound and a (noninfinite) upper bound of the covariance matrix $P_k$ for good performance. That is, $a > 0$ and $b \in (0,\infty)$ such that, for all $k \ge 0$, $a I_n \preceq P_k \preceq b I_n$.
Many RLS extensions since have provided analysis which guarantee an upper and lower bound of the covariance matrix \cite{cao2000directional,Lai2023Exponential,goel2020recursive,bittanti1990convergence}.
A lower bound on $P_k$ is equivalent to condition \ref{item: cond Pk lower bound}. 
However, an upper bound on $P_k$ does not guarantee condition \ref{item: cond inv(Pkinv - F) upper bound}.
%
%
% Nevertheless, in practice, RLS extensions from the literature typically design $F_k$ such that the eigenvalues of $F_k$ are small relative to the eigenvalues of $P_k^{-1}$. 
% %
% For example, forgetting factor $0<\lambda_k<1$ in variable-rate forgetting is typically chosen close to $1$ (e.g. \cite{paleologu2008robust}
%

Nevertheless, for many choices of $F_k$ from the literature, condition \ref{item: cond inv(Pkinv - F) upper bound} follows easily from an upper bound on the covariance matrix.
%
% Note that when condition \ref{item: cond F PSD} is satisfied, condition \ref{item: cond inv(Pkinv - F) upper bound} implies an upper bound on $P_k$, as shown in Lemma \ref{lem: inv(Pinv - Fk) inequality}.
% %
% In other words, when condition \ref{item: cond F PSD} is satisfied, condition \ref{item: cond inv(Pkinv - F) upper bound} is stronger than an upper bound on $P_k$.
%
A future area of interest is whether similar stability and robustness guarantees exist if condition \ref{item: cond inv(Pkinv - F) upper bound} is replaced with an upper bound on $P_k$.

\subsubsection{Condition \ref{item: cond weighted regressor PE and bounded}} 

Persistent excitation and boundedness of the sequence of regressors $(\phi_k )_{k=k_0}^\infty$ is an important requirement for convergence in RLS extensions \cite{johnstone1982exponential,Salgado1988modified}. 
While work has been done the relax the persistent excitation condition \cite{efimov2018robust,wang2020fixed,vamvoudakis2015asymptotically}, it has been shown that \textit{weak persistent excitation} is necessary for the global asymptotic stability of RLS \cite{bruce2021necessary}.

Note that condition \ref{item: cond weighted regressor PE and bounded} requires persistent excitation and boundedness of the the sequence of weighted regressors $( \bar{\phi}_k )_{k = k_0}^\infty$, rather than the sequence of regressors $(\phi_k )_{k=k_0}^\infty$.
Corollary \ref{cor: weighted regressors} gives a sufficient condition for when persistent excitation and boundedness of the sequence of regressors $(\phi_k)_{k=k_0}^\infty$ implies persistent excitation and boundedness of the sequence of weighted regressors $(\bar{\phi}_k)_{k=k_0}^\infty$.
Note, however, that the bounds guaranteed by Corollary \ref{cor: weighted regressors} are often loose and it is preferable in practice to directly analyze the sequence of weighted regressors.

\begin{cor}
    \label{cor: weighted regressors}
    Assume there exist $k_0 \ge 0$ and $0 < \gamma_{\rm min} < \gamma_{\rm max}$ such that, for all $k \ge k_0$, 
    \begin{align}
        \label{eqn: Gamma_k bounds}
        \gamma_{\rm min} I_p \preceq \Gamma_k \preceq \gamma_{\rm max} I_p.
    \end{align}
    Furthermore, let $(\phi_k)_{k=k_0}^\infty$ be persistently exciting with lower bound $\alpha > 0$ and persistency window $N$ and bounded with upper bound $\beta \in (0,\infty)$.
    Then, $(\bar{\phi}_k)_{k=k_0}^\infty$ is persistently exciting with lower bound $\frac{\alpha}{\gamma_{\rm max}}$ and persistency window $N$ and bounded with upper bound $\frac{\beta}{\gamma_{\rm min}}$. 
\end{cor}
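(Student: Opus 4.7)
The plan is to translate bounds on $\Gamma_k$ into bounds on $\Gamma_k^{-1}$, and then push those bounds through the congruence transformation defining $\bar{\phi}_k^\rmT \bar{\phi}_k$.

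First I would invert the hypothesis \eqref{eqn: Gamma_k bounds}. Since $\Gamma_k$ is positive definite and symmetric, $\gamma_{\rm min} I_p \preceq \Gamma_k \preceq \gamma_{\rm max} I_p$ implies
\begin{align}
    \frac{1}{\gamma_{\rm max}} I_p \preceq \Gamma_k^{-1} \preceq \frac{1}{\gamma_{\rm min}} I_p.
\end{align}
This uses the standard fact that matrix inversion reverses the Loewner order on positive-definite matrices.

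Next, I would rewrite the quantity of interest in terms of $\phi_k$ and $\Gamma_k^{-1}$. From \eqref{eqn: phibar defn},
\begin{align}
    \bar{\phi}_k^\rmT \bar{\phi}_k = \phi_k^\rmT \Gamma_k^{-\frac{1}{2}\rmT} \Gamma_k^{-\frac{1}{2}} \phi_k = \phi_k^\rmT \Gamma_k^{-1} \phi_k.
\end{align}
Then I would apply the congruence-preservation property of the Loewner order: for any symmetric $A, B \in \BBR^{p \times p}$ with $A \preceq B$ and any $C \in \BBR^{p \times n}$, one has $C^\rmT A C \preceq C^\rmT B C$. Applying this with $C = \phi_k$ and the two bounds on $\Gamma_k^{-1}$ yields
\begin{align}
    \frac{1}{\gamma_{\rm max}} \phi_k^\rmT \phi_k \preceq \bar{\phi}_k^\rmT \bar{\phi}_k \preceq \frac{1}{\gamma_{\rm min}} \phi_k^\rmT \phi_k.
\end{align}

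Finally, I would combine these sandwich bounds with the hypotheses on $(\phi_k)_{k=k_0}^\infty$. For persistent excitation, summing the left inequality over $i = k, \ldots, k+N-1$ and applying Definition \ref{defn: PE} gives
\begin{align}
    \frac{\alpha}{\gamma_{\rm max}} I_n \preceq \frac{1}{\gamma_{\rm max}} \sum_{i=k}^{k+N-1} \phi_i^\rmT \phi_i \preceq \sum_{i=k}^{k+N-1} \bar{\phi}_i^\rmT \bar{\phi}_i,
\end{align}
establishing persistent excitation of $(\bar{\phi}_k)_{k=k_0}^\infty$ with lower bound $\alpha/\gamma_{\rm max}$ and the same window $N$. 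For boundedness, applying the right inequality together with Definition \ref{defn: BE} yields
\begin{align}
    \bar{\phi}_k^\rmT \bar{\phi}_k \preceq \frac{1}{\gamma_{\rm min}} \phi_k^\rmT \phi_k \preceq \frac{\beta}{\gamma_{\rm min}} I_n,
\end{align}
which is the claimed upper bound. There is no genuine obstacle here — the entire argument is a mechanical application of two basic facts about the Loewner order (inversion reverses it, congruence preserves it); the main thing to be careful about is not to confuse the lower/upper bound roles after inverting $\Gamma_k$.
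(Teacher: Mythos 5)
Your proof is correct and follows essentially the same route as the paper's: rewrite $\bar{\phi}_k^\rmT \bar{\phi}_k$ as $\phi_k^\rmT \Gamma_k^{-1} \phi_k$, bound $\Gamma_k^{-1}$ via order reversal under inversion, and push the bound through the congruence. The only difference is that you spell out the boundedness half explicitly, which the paper leaves implicit.
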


\begin{proof}
    Note that, for all $k \ge k_0$, $\bar{\phi}_k^\rmT \bar{\phi}_k = \phi_k^\rmT \Gamma_k^{-1} \phi_k \succeq \frac{1}{\gamma_{\rm max}} \phi_k^\rmT \phi_k$, and hence $\sum_{i=k}^{k+N-1}\bar{\phi}_i^\rmT \bar{\phi}_i \succeq \sum_{i=k}^{k+N-1} \frac{1}{\gamma_{\rm max}} \phi_i^\rmT \phi_i \succeq \frac{\alpha}{\gamma_{\rm max}} I_n$.
\end{proof}

\section{Robustness to Time-Varying Parameters, Measurement Noise, and Regressor Noise}
\label{subsec: GFRLS robustness}
For the analysis of this section, we make the assumption that, for all $k \ge 0$, the parameters $\theta_{{\rm true},k} \in \BBR^{n}$ are time-varying and satisfy
\begin{align}
\label{eqn: theta_true update combo}
    \theta_{{\rm true},k+1} = \theta_{{\rm true},k} + \delta_{\theta,k},
\end{align}
where, for all $k \ge 0$, $\delta_{\theta,k} \in \BBR^n$ is the \textit{change in the parameters}. Note that no model of how the parameters evolve is known. 
Furthermore, assume that, for all $k \ge 0$, 
\begin{align}
    \label{eqn: y_k defn time varying parameters with noise}
    y_k = (\phi_k + \delta_{\phi,k}) \theta_{{\rm true},k} + \delta_{y,k},
\end{align}
where $\delta_{y,k} \in \BBR^p$ is the \textit{measurement noise} and $\delta_{\phi,k} \in \BBR^{p \times n}$ is the \textit{regressor noise}. Furthermore, for all $k \ge 0$, we define the \textit{weighted measurement noise} at step $k$, $\bar{\delta}_{y,k} \in \BBR^p$, and the \textit{weighted regressor noise}, $\bar{\delta}_{\phi_k} \in \BBR^{p \times n}$, by
\begin{align}
    \bar{\delta}_{y,k} &\triangleq \Gamma_k^{-\frac{1}{2}} \delta_{y,k}, \label{eqn: weighted measurement noise} \\
    \bar{\delta}_{\phi,k} &\triangleq \Gamma_k^{-\frac{1}{2}} \delta_{\phi,k}. \label{eqn: weighted regressor noise}
\end{align}

Note that, for all $k \ge 1$, the parameter estimate $\theta_k$ is based on measurements up to step $k-1$, that is, $\{ y_0,\hdots,y_{k-1} \}$. To compensate for this one-step delay, we define, for all $k \ge 0$, the parameter estimation error $\check{\theta}_k \in \BBR^n$ by
\begin{align}
\label{eqn: theta bar defn}
    \check{\theta}_k \triangleq \theta_k - \theta_{{\rm true},k-1}.
\end{align}
Substituting \eqref{eqn: y_k defn time varying parameters with noise} and \eqref{eqn: theta bar defn} into \eqref{eqn: GFRLS theta Update} implies that, for all $k \ge 0$,
\begin{align*}
    \check{\theta}_{k+1} = M_k (\check{\theta}_k - \delta_{\theta,k-1}) + P_{k+1} \phi_k^\rmT \Gamma_k^{-1} (\delta_{\phi,k} \theta_{{\rm true},k} + \delta_{y,k}),
\end{align*}
and then substituting \eqref{eqn: phibar defn}, \eqref{eqn: weighted measurement noise}, and \eqref{eqn: weighted regressor noise}, it follows that
\begin{align}
\label{eqn: theta bar update}
    \check{\theta}_{k+1} = M_k (\check{\theta}_k - \delta_{\theta,k-1}) + P_{k+1} \bar{\phi}_k^\rmT (\bar{\delta}_{\phi,k} \theta_{{\rm true},k} + \bar{\delta}_{y,k}),
\end{align}
Hence, \eqref{eqn: theta bar update} is a nonlinear system $\check{\theta}_{k+1} = \check{f}(k,\check{\theta}_{k})$. 

Finally, let $k_0 \ge 0$ and consider the following conditions:
\begin{enumerate}
    \myitem{\textit{A5)}}\label{item: cond delta theta bound} There exists $\delta_{\theta} \ge 0$ such that, for all $k \ge k_0$, $\Vert \delta_{\theta,k} \Vert \le \delta_{\theta}.$
    \myitem{\textit{A6)}}\label{item: cond delta y bound} There exists $\bar{\delta}_{y} \ge 0$ such that, for all $k \ge k_0$, $\Vert \bar{\delta}_{y,k} \Vert \le \bar{\delta}_{y}.$
    \myitem{\textit{A7)}}\label{item: cond delta phi bound} There exists $\bar{\delta}_{\phi} \ge 0$ such that the sequence $(\bar{\delta}_{\phi,k})_{k=k_0}^\infty$ is bounded with upper bound $\bar{\delta}_{\phi}$.
    \myitem{\textit{A8)}}\label{item: cond theta true bound} There exists $\theta_{\rm max} \ge 0$ such that, for all $k \ge k_0$, $\Vert \theta_{{\rm true},k} \Vert \le \theta_{\rm max}.$
\end{enumerate}
We now present Theorem \ref{theo: GFRLS UUB} which gives sufficient conditions for the global uniform ultimate boundedness of \eqref{eqn: theta bar update}.
Please see Appendix B for the definition of global uniform ultimate boundedness.

\begin{theo}
\label{theo: GFRLS UUB}
For all $k \ge 0$, let $\Gamma_k \in \BBR^{p \times p}$ be positive definite,
let $\phi_k \in \BBR^{p \times n}$, let $y_k \in \BBR^{p}$, and let $F_k \in \BBR^{n \times n}$ be symmetric and satisfy \eqref{eqn: Pinv - F is pos def GFRLS}.
Let $P_0 \in \BBR^{n \times n}$ be positive definite, and let $\theta_0 \in \BBR^{n}$. 
For all $k \ge 1$, let $P_k \in \BBR^{n \times n}$ and $\theta_k \in \BBR^n$ be recursively updated by \eqref{eqn: GFRLS Pinv Update} and \eqref{eqn: GFRLS theta Update}.
%
% Consider the assumptions and notation of Theorem \ref{theo: GFRLS}. 
%
Furthermore, for all $k \ge 0$, let $\theta_{{\rm true},k} \in \BBR^n$, $\delta_{\theta,k} \in \BBR^n$, $\delta_{y,k} \in \BBR^p$, and $\delta_{\phi,k} \in \BBR^{p \times n}$ satisfy \eqref{eqn: theta_true update combo} and \eqref{eqn: y_k defn time varying parameters with noise}.
Finally, let $k_0 \ge 0$ be such that conditions \ref{item: cond F PSD}, \ref{item: cond inv(Pkinv - F) upper bound}, \ref{item: cond Pk lower bound}, \ref{item: cond weighted regressor PE and bounded}, \ref{item: cond delta theta bound}, \ref{item: cond delta y bound}, \ref{item: cond delta phi bound}, and \ref{item: cond theta true bound} hold.
Then, the system \eqref{eqn: theta bar update} is globally uniformly ultimately bounded with bound $\varepsilon$ given by 
\begin{align}
\label{eqn: epsilonbar UUB time-varying}
    \varepsilon &= \varepsilon^* \left[ \delta_\theta +  b \bar{\beta}^{\frac{1}{2}} 
    \left( \bar{\delta}_{\phi}^{\frac{1}{2}} \theta_{\rm max} + \bar{\delta}_y \right) \right],
\end{align}
where
%and where $\varepsilon^* > 0$ and $\Delta_N > 0$ are defined
%
\begin{gather}
    \label{eqn: epsilonbar star defn}
    \varepsilon^* \triangleq \max \left\{ 1, \frac{1}{\sqrt{a}} \right\}
    \left(\Delta_N + \sqrt{\Delta_N + \Delta_N^2} \right) N,
    \\
    \label{eqn: Delta_N bar defn}
    \Delta_N \triangleq \frac{N}{a \bar{\alpha}} 
    \left(1 + b \bar{\beta} \right)
    \left[ 1 + \frac{N-1}{2} \left(b \bar{\beta}\right)^2 \right] - 1.
\end{gather}
\end{theo}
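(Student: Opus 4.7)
The plan is to leverage the proof machinery of Theorem \ref{theo: GFRLS Lyapunov stability v2}\ref{item: stability statement 4} (global uniform exponential stability) by treating the terms $-M_k\delta_{\theta,k-1}$ and $P_{k+1}\bar{\phi}_k^\rmT(\bar{\delta}_{\phi,k}\theta_{{\rm true},k}+\bar{\delta}_{y,k})$ in \eqref{eqn: theta bar update} as bounded forcing inputs driving the otherwise exponentially stable homogeneous system $\tilde{\theta}_{k+1} = M_k\tilde{\theta}_k$. The natural candidate Lyapunov function is $V_k(\check{\theta}) = \check{\theta}^\rmT P_k^{-1}\check{\theta}$, which is the same quadratic form that drives the exponential stability proof. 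The strategy is to obtain an $N$-step inequality of the form $V_{k+N}(\check{\theta}_{k+N}) \le (1-\lambda)V_k(\check{\theta}_k) + c$, for some $\lambda \in (0,1)$ and explicit constant $c$ depending on the data of conditions \ref{item: cond F PSD}--\ref{item: cond theta true bound}; standard discrete-time UUB arguments then yield the ultimate bound \eqref{eqn: epsilonbar UUB time-varying}.

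The technical core is to unroll \eqref{eqn: theta bar update} over the persistency window. Writing $\Phi(k+N,k) \triangleq M_{k+N-1}\cdots M_k$, one gets
\begin{equation*}
\check{\theta}_{k+N} = \Phi(k+N,k)\check{\theta}_k + \sum_{j=0}^{N-1}\Phi(k+N,k+j+1)\,u_{k+j},
\end{equation*}
where $u_\ell \triangleq -M_\ell\delta_{\theta,\ell-1} + P_{\ell+1}\bar{\phi}_\ell^\rmT(\bar{\delta}_{\phi,\ell}\theta_{{\rm true},\ell}+\bar{\delta}_{y,\ell})$. The $P^{-1}$-weighted contraction of $\Phi(k+N,k)$ is precisely what produced the exponential decay in Theorem \ref{theo: GFRLS Lyapunov stability v2}\ref{item: stability statement 4}, and I expect this to provide the $(1+\Delta_N)^{-1}$-type contraction factor that appears implicitly in $\Delta_N + \sqrt{\Delta_N+\Delta_N^2}$ after completing the square on the quadratic recursion $V_{k+N} \le (1+\Delta_N)^{-1}V_k + \text{forcing}$. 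The individual perturbation terms would be bounded via: $\|M_\ell\delta_{\theta,\ell-1}\| \le \delta_\theta$ (since $M_\ell$ is nonexpansive in the $P^{-1}$-metric when \ref{item: cond F PSD} and \ref{item: cond inv(Pkinv - F) upper bound} hold); $\|P_{\ell+1}\bar{\phi}_\ell^\rmT\bar{\delta}_{y,\ell}\| \le b\bar{\beta}^{1/2}\bar{\delta}_y$, using $P_{\ell+1} = (P_\ell^{-1}-F_\ell+\bar{\phi}_\ell^\rmT\bar{\phi}_\ell)^{-1} \preceq (P_\ell^{-1}-F_\ell)^{-1} \preceq bI_n$ together with $\bar{\phi}_\ell^\rmT\bar{\phi}_\ell \preceq \bar{\beta}I_n$; and similarly $\|P_{\ell+1}\bar{\phi}_\ell^\rmT\bar{\delta}_{\phi,\ell}\theta_{{\rm true},\ell}\| \le b\bar{\beta}^{1/2}\bar{\delta}_\phi^{1/2}\theta_{\max}$, where the $\bar{\delta}_\phi^{1/2}$ (rather than $\bar{\delta}_\phi$) arises because $\bar{\delta}_{\phi,\ell}$ enters through its singular-value bound $\sqrt{\bar{\delta}_\phi}$ from \ref{item: cond delta phi bound}. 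Condition \ref{item: cond Pk lower bound} is then used to convert the $V$-estimate back to a Euclidean-norm estimate, producing the $\max\{1,1/\sqrt{a}\}$ prefactor in $\varepsilon^*$.

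The main obstacle will be the careful bookkeeping needed to recover the precise expressions for $\Delta_N$ and $\varepsilon^*$, particularly the combinatorial factor $\tfrac{N-1}{2}(b\bar{\beta})^2$ inside $\Delta_N$. I expect this term to arise not from the contraction rate itself, but from the accumulated effect of the state-transition matrices $\Phi(k+N,k+j+1)$ applied to the forcing terms as $j$ ranges over $0,\dots,N-1$; each factor $M_\ell = I_n - P_{\ell+1}\bar{\phi}_\ell^\rmT\bar{\phi}_\ell$ contributes a term of size at most $1 + b\bar{\beta}$, and the quadratic growth in $N$ reflects the pairwise interaction between the $N$ forcing contributions when squared inside $V$. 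A secondary subtlety is handling the one-step delay $\delta_{\theta,k-1}$ (rather than $\delta_{\theta,k}$) cleanly, which is a consequence of the definition \eqref{eqn: theta bar defn} of $\check{\theta}_k$; this should not affect the final bound since \ref{item: cond delta theta bound} bounds $\delta_{\theta,\ell}$ uniformly in $\ell$, but indexing must be tracked carefully throughout the $N$-step unrolling.
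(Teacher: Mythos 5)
Your proposal follows essentially the same route as the paper's proof: the same Lyapunov function $V=\check{\theta}^\rmT P_k^{-1}\check{\theta}$, the same $N$-step unrolling through the products $M_{k+N-1}\cdots M_k$ with the contraction inherited from the Theorem \ref{theo: GFRLS Lyapunov stability v2} machinery, the same forcing bounds producing $\delta_\theta + b\bar{\beta}^{1/2}(\bar{\delta}_\phi^{1/2}\theta_{\rm max}+\bar{\delta}_y)$, and the same quadratic completion followed by a standard UUB theorem (the paper merely packages the forcing as $\check{\theta}_{k+1}=M_k(\check{\theta}_k-\zeta_k)$ using $M_k^{-1}=(P_k^{-1}-F_k)^{-1}P_{k+1}^{-1}$, which is algebraically equivalent to your additive decomposition). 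Two minor corrections that do not affect the strategy: the factor $\tfrac{N-1}{2}(b\bar{\beta})^2$ in $\Delta_N$ actually originates in the contraction rate $c_N$ (via the bound $\svdmax(\mathcal{W}_{k,N})^2\le N+\tfrac{N(N-1)}{2}(b\bar{\beta})^2$ relating persistent excitation to the $N$-step decrement of $V$), not in the accumulated forcing, and $\Vert M_\ell\delta_{\theta,\ell-1}\Vert\le\delta_\theta$ is false in the Euclidean norm ($M_\ell$ is nonexpansive only in the time-varying $P^{-1}$ metric), though this is harmless since the forcing is ultimately measured through $P_{k+N}^{-\frac{1}{2}}$.
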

\begin{proof}
    We prove the case $k_0 = 0$. The case $k_0 \ge 1$ can be shown similarly.
    Note that, for all $k \ge 0$, \eqref{eqn: theta bar update} can be written as 
    \begin{align}
    \label{eqn: thetacheck update expanded}
        \check{\theta}_{k+1} = M_k( \check{\theta}_k - \delta_{\theta,k-1} + M_k^{-1} P_{k+1}  \bar{\phi}_k^\rmT (\bar{\delta}_{\phi,k} \theta_{{\rm true},k} + \bar{\delta}_{y,k})).
    \end{align}
    Moreover, it follows from \eqref{eqn: GFRLS Pinv Update} and \eqref{eqn: M_k defn} that, for all $k \ge 0$, $M_k = P_{k+1}(P_k^{-1} - F_k)$. It follows from Corollary \ref{cor: Pkinv - Fk is pos def GFRLS v2} that, for all $k \ge 0$, $(P_k^{-1} - F_k)$ is nonsingular, and hence 
    \begin{align}
    \label{eqn: M_k inv identity prep}
        M_k^{-1} = (P_k^{-1} - F_k)^{-1} P_{k+1}^{-1}.
    \end{align}
    Substituting \eqref{eqn: M_k inv identity prep} into \eqref{eqn: thetacheck update expanded} then gives, for all $k \ge 0$,  
    \begin{align*}
        \check{\theta}_{k+1} = M_k( \check{\theta}_k - \zeta_k),
    \end{align*}
    where $\zeta_k \in \BBR^n$ is defined
    \begin{align}
    \label{zeta_k defn}
        \zeta_k \triangleq \delta_{\theta,k-1} - (P_k^{-1} - F_k)^{-1} \bar{\phi}_k^\rmT (\bar{\delta}_{\phi,k} \theta_{{\rm true},k} +  \bar{\delta}_{y,k}).
    \end{align}

    Next, it follows from applying triangle inequality and norm sub-multiplicativity to \eqref{zeta_k defn} and using the bounds in conditions \ref{item: cond inv(Pkinv - F) upper bound}, \ref{item: cond weighted regressor PE and bounded}, \ref{item: cond delta theta bound}, \ref{item: cond delta y bound}, \ref{item: cond delta phi bound}, and \ref{item: cond theta true bound} that, for all $k \ge 0$,
    \begin{align*}
        \Vert \zeta_k \Vert \le \delta_\theta +  b \bar{\beta}^{\frac{1}{2}} 
        \left( \bar{\delta}_{\phi}^{\frac{1}{2}} \theta_{\rm max} + \bar{\delta}_y \right) \triangleq \zeta.
    \end{align*}
    Finally, it follows Lemma \ref{lem: GFRLS UUB} in Appendix F that the system \eqref{eqn: theta bar update} is globally uniformly ultimately bounded with bound $\varepsilon^* \zeta$.
    For further details, see Figure \ref{fig:proof-roadmap} for a proof roadmap of Theorem 3.
\end{proof}

As a sanity check, note that, for all $N \ge 1$, $\Delta_N > \frac{Nb \bar{\beta}}{a \bar{\alpha}} - 1 \ge N-1 \ge 0$, and hence $\varepsilon^* > 0$. Therefore, if $\delta_\theta > 0$, $\bar{\delta}_y > 0$, or $\bar{\delta}_{\phi}^{\frac{1}{2}} \theta_{\rm max} > 0$, then $\varepsilon > 0$.

\subsection{Discussion of Conditions \ref{item: cond delta theta bound} through \ref{item: cond theta true bound}}

This subsection gives a discussion of conditions \ref{item: cond delta theta bound} through \ref{item: cond theta true bound} used in Theorem \ref{theo: GFRLS UUB}. 
See subsection \ref{subsec: Theo conditions 1 - 4 attainability} for a discussion of conditions \ref{item: cond F PSD} through \ref{item: cond weighted regressor PE and bounded}.

\subsubsection{Condition \ref{item: cond delta theta bound}}

Condition \ref{item: cond delta theta bound} is a bound on how quickly the parameters being estimated, $\theta_{{\rm true},k}$, can change. While these parameters are not known, in practice, this bound can be estimated from data.

\subsubsection{Conditions \ref{item: cond delta y bound} and \ref{item: cond delta phi bound}}

Conditions \ref{item: cond delta y bound} and \ref{item: cond delta phi bound} are, respectively, bounds on the weighted measurement noise and weighted regressor noise.
While noise from certain distributions has no guaranteed bound (e.g. Gaussian noise), in practice these bounds can be approximated from data.

Corollary \ref{cor: weighted noise} gives a sufficient condition for when bounded measurement noise and bounded regressor noise imply, respectively, bounded weighted measurement noise and bounded weighted regressor noise.
Note, however, that the bounds guaranteed by Corollary \ref{cor: weighted regressors} are often loose and it is preferable in practice to directly analyze the weighted measurement noise and weighted regressor noise.

\begin{cor}
    \label{cor: weighted noise}
    Assume there exist $k_0 \ge 0$ and $0 < \gamma_{\rm min} < \gamma_{\rm max}$ such that, for all $k \ge k_0$, \eqref{eqn: Gamma_k bounds} holds. Then, the following statements hold: 
    \begin{enumerate}[leftmargin=*]
         \myitem{1)}\label{item: cor statement 2} If there exists $\delta_y \ge 0$ such that, for all $k \ge k_0$, $\Vert \delta_{y,k} \Vert \le \delta_y$, then, for all $k \ge k_0$, $\Vert \bar{\delta}_{y,k} \Vert \le \nicefrac{\delta_y}{\sqrt{\gamma_{\rm min}}}$.
         \myitem{2)}\label{item: cor statement 3} If there exists $\delta_\phi \ge 0$ such that $(\delta_{\phi,k})_{k=k_0}^\infty$ is bounded with upper bound $\delta_{\phi}$, then $(\bar{\delta}_{\phi,k})_{k=k_0}^\infty$ is bounded with upper bound $\frac{\delta_\phi}{\gamma_{\rm min}}$.
    \end{enumerate}
\end{cor}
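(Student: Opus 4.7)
The plan is to mimic the proof of Corollary \ref{cor: weighted regressors} almost verbatim, replacing the weighted regressor with the weighted noise sequences. The single ingredient that drives both statements is the observation that the lower bound $\gamma_{\rm min} I_p \preceq \Gamma_k$ implies, upon inversion, the operator inequality $\Gamma_k^{-1} \preceq \gamma_{\rm min}^{-1} I_p$ for all $k \ge k_0$. This is exactly the same step used in the corollary above; only $\gamma_{\rm max}$ is replaced by $\gamma_{\rm min}$ because we now want an upper (rather than a lower) bound on the weighted quantity.

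For statement \ref{item: cor statement 2}, I would start from the definition \eqref{eqn: weighted measurement noise}, $\bar{\delta}_{y,k} = \Gamma_k^{-\frac{1}{2}} \delta_{y,k}$, and compute
\begin{align*}
\Vert \bar{\delta}_{y,k} \Vert^2 = \delta_{y,k}^{\rmT} \Gamma_k^{-1} \delta_{y,k} \le \gamma_{\rm min}^{-1} \Vert \delta_{y,k} \Vert^2 \le \frac{\delta_y^2}{\gamma_{\rm min}},
\end{align*}
where the first inequality uses $\Gamma_k^{-1} \preceq \gamma_{\rm min}^{-1} I_p$ and the second uses the hypothesis. Taking square roots yields the claimed bound $\nicefrac{\delta_y}{\sqrt{\gamma_{\rm min}}}$.

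For statement \ref{item: cor statement 3}, I would use \eqref{eqn: weighted regressor noise} and compute
\begin{align*}
\bar{\delta}_{\phi,k}^{\rmT} \bar{\delta}_{\phi,k} = \delta_{\phi,k}^{\rmT} \Gamma_k^{-1} \delta_{\phi,k} \preceq \gamma_{\rm min}^{-1} \delta_{\phi,k}^{\rmT} \delta_{\phi,k} \preceq \frac{\delta_\phi}{\gamma_{\rm min}} I_n,
\end{align*}
where the first matrix inequality follows by congruence-sandwiching $\Gamma_k^{-1} \preceq \gamma_{\rm min}^{-1} I_p$ between $\delta_{\phi,k}^{\rmT}$ and $\delta_{\phi,k}$ (which preserves the Loewner order), and the second follows from Definition \ref{defn: BE} applied to $(\delta_{\phi,k})_{k=k_0}^\infty$. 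This matches the form of \eqref{eqn: bounded sequence} with $\beta = \delta_\phi / \gamma_{\rm min}$, which is the requested upper bound.

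There is no genuine obstacle; the only minor point worth verifying carefully is the congruence step in statement \ref{item: cor statement 3}, namely that $\Gamma_k^{-1} \preceq \gamma_{\rm min}^{-1} I_p$ implies $\delta_{\phi,k}^{\rmT} \Gamma_k^{-1} \delta_{\phi,k} \preceq \gamma_{\rm min}^{-1} \delta_{\phi,k}^{\rmT} \delta_{\phi,k}$. This is standard: if $A \preceq B$ and $C$ is any conformable matrix, then $C^{\rmT} A C \preceq C^{\rmT} B C$. Apart from this, the proof is a one-line calculation in each case.
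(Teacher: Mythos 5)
Your proposal is correct and matches the paper's argument: statement 2) is proved by exactly the same chain of Loewner inequalities, and statement 1) differs only cosmetically (you bound $\|\bar{\delta}_{y,k}\|^2$ via $\Gamma_k^{-1}\preceq\gamma_{\rm min}^{-1}I_p$ whereas the paper bounds $\|\Gamma_k^{-\frac{1}{2}}\delta_{y,k}\|$ via $\svdmax(\Gamma_k^{-\frac{1}{2}})\le 1/\sqrt{\gamma_{\rm min}}$, which is the same estimate). No gaps.
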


\begin{proof}
    To show \ref{item: cor statement 2}, note that, for all $k \ge k_0$, $\Vert \Gamma_k^{-\frac{1}{2}} \xi_k \Vert \le \svdmax(\Gamma_k^{-\frac{1}{2}}) \Vert \xi_k \Vert \le \frac{\xi}{\sqrt{\gamma_{\rm min}}}$.
    Lastly, to show \ref{item: cor statement 3}, note that, for all $k \ge k_0$, $\bar{\delta}_{\phi,k}^\rmT \bar{\delta}_{\phi,k} = \delta_{\phi,k}^\rmT \Gamma_k^{-1} \delta_{\phi,k} \preceq \frac{1}{\gamma_{\rm min}} \delta_{\phi,k}^\rmT \delta_{\phi,k} \preceq \frac{\delta_\phi}{\gamma_{\rm min}} I_n$. 
\end{proof}

\subsubsection{Condition \ref{item: cond theta true bound}}
Condition \ref{item: cond theta true bound} is a bound on the magnitude of the parameters being estimated. While the parameters $\theta_{{\rm true},k}$ are not known, this bound can also be approximated in practice.

\subsection{Specialization to Errors-in-Variables}

An important specialization of Theorem \ref{theo: GFRLS UUB} is the case of fixed parameters (i.e. $\delta_\theta = 0$). In this case, only the effect of measurement noise and regressor noise is considered, a problem known as errors-in-variables \cite{soderstrom2007errors}. 
Note that the measurement noise and regressor noise may be correlated, resulting in an asymptotically biased least squares estimator \cite[p. 205]{ljung1998system}.
If parameters are fixed ($\delta_\theta = 0$), it follows from Theorem \ref{theo: GFRLS UUB} that \eqref{eqn: theta bar update} is globally uniformly ultimately bounded with bound 
$\varepsilon =  \varepsilon^*  b \bar{\beta}^{\frac{1}{2}} 
\left( \bar{\delta}_{\phi}^{\frac{1}{2}} \theta_{\rm max} + \bar{\delta}_y \right)$.

More generally, Theorem \ref{theo: GFRLS UUB} can be specialized to assume fixed parameters by setting $\delta_\theta = 0$ and/or to assume no measurement noise by setting $\bar{\delta}_y = 0$ and/or to assume no regressor noise by setting $\bar{\delta}_\phi = 0$. 
As a sanity check, note that if $\delta_\theta = \bar{\delta}_y = \bar{\delta}_\phi = 0$, then \eqref{eqn: epsilonbar UUB time-varying} simplifies to $\varepsilon = 0$.

\section{RLS Extensions as Special Cases of GF-RLS}
\label{sec: RLS extensions}

This section shows how several extensions of recursive least squares with forgetting are special cases of generalized forgetting recursive least squares. For simplicity, we assume that, for all $k \ge 0$, $\Gamma_k = I_p$ in GF-RLS. 
This uniform weighting is in accordance with the RLS extensions we present as originally published. However, these methods can easily be extended to nonuniform weighting by, for all $k \ge 0$, selecting positive-definite $\Gamma_k \in \BBR^{p \times p}$. 
Thereafter, only the forgetting matrix $F_k$ needs to be specified for all $k \ge 0$. Furthermore, the stability results presented in Theorem \ref{theo: GFRLS Lyapunov stability v2} and robustness results presented in Theorem \ref{theo: GFRLS UUB} apply to any algorithm that is a special case of GF-RLS.
For all the following methods, for all $k \ge 0$, let $\phi_k \in \BBR^{p \times n}$ and $y_k \in \BBR^p$. Furthermore let $P_0 \in \BBR^{n \times n}$ be positive definite and $\theta_0 \in \BBR^n$.
If an extension of RLS is a special case of proper GF-RLS, we say that extension is proper. 
Note that we have made minor notational changes to some RLS extensions in order to present all algorithms with the same notation. 
Otherwise, we have done our best to present all algorithms as originally published.
A flowchart summary of this section is given in Figure \ref{fig:GFRLS_flowchart}.

\begin{figure*}
    \centering
    \includegraphics[width = .95\textwidth]{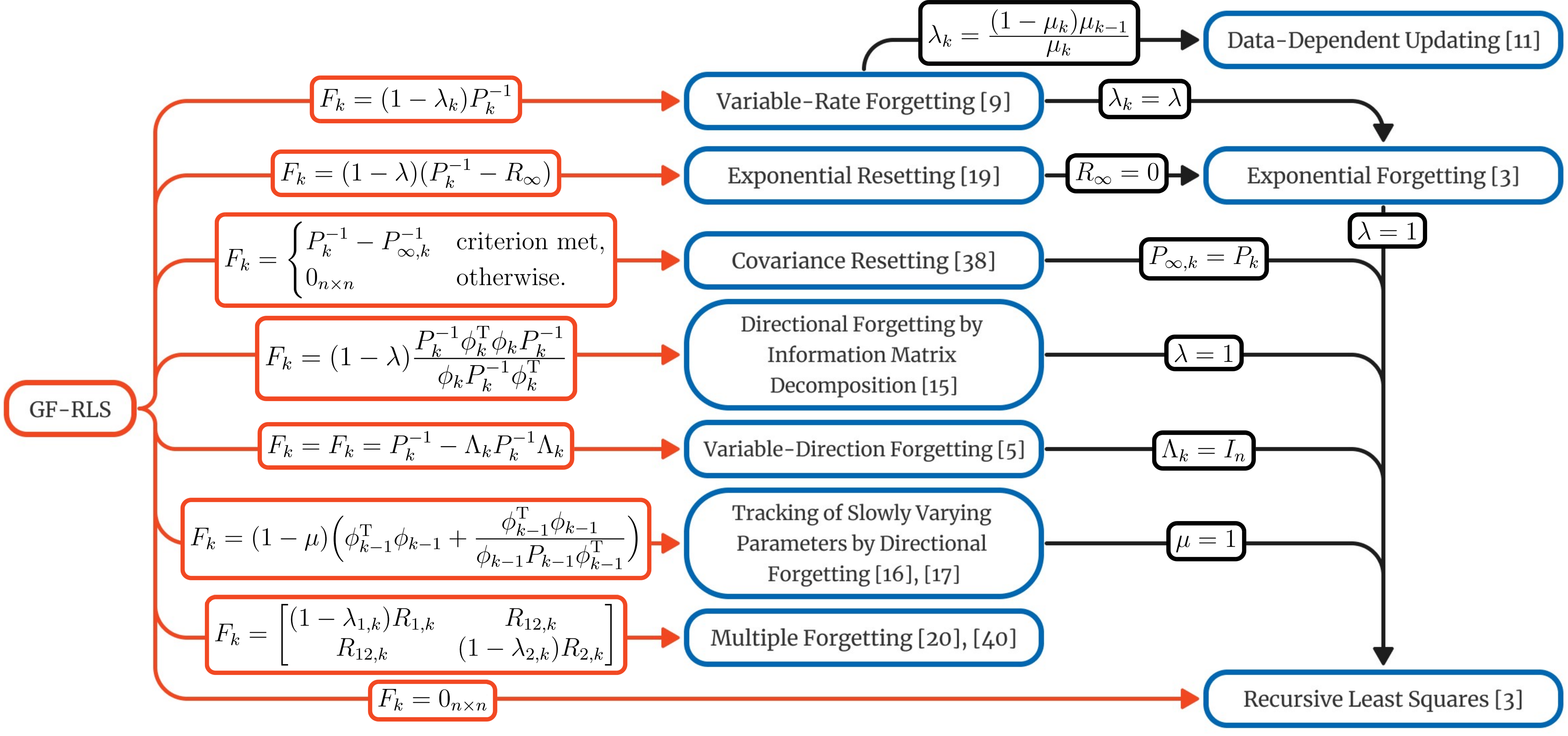}
    \caption{This flowchart summarizes how different extensions of RLS (blue) can be derived as special cases of GF-RLS (red). Furthermore, this chart summarizes how certain RLS extensions are special cases of other RLS extensions (black).}
    \label{fig:GFRLS_flowchart}
\end{figure*}

\subsection{Recursive Least Squares}
\label{subsec: RLS}

Recursive least squares \cite{islam2019recursive} is derived by denoting the minimizer of the cost function
\begin{align}
    J_k(\hat{\theta}) = \sum_{i=0}^k \Vert y_i - \phi_i \hat{\theta} \Vert ^2  +  \Vert \theta - \theta_0 \Vert_{P_0^{-1}}^2
\end{align}
by $\theta_{k+1} \triangleq \argmin_{\hat{\theta} \in \BBR^n} J_k(\hat{\theta}).$
It follows that, for all $k \ge 0$, $\theta_{k+1}$ is given by
\begin{align}
    P_{k+1}^{-1} &= P_{k}^{-1} + \phi_k^\rmT \phi_k, \label{eqn: RLS Pinv update} \\
    \theta_{k+1} &= \theta_k + P_{k+1} \phi_k^\rmT (y_k - \phi_k \theta_k).  \label{eqn: RLS theta update}
\end{align}

Comparing \eqref{eqn: RLS Pinv update} and \eqref{eqn: RLS theta update} to \eqref{eqn: GFRLS Pinv Update} and \eqref{eqn: GFRLS theta Update}, it follows that recursive least squares is a special case of GF-RLS where, for all $k \ge 0$, $\Gamma_k = I_p$ and 
\begin{align}
    F_k = 0_{n \times n}.
\end{align}
Note that, for all $k \ge 0$, $P_k^{-1} \succ 0$, hence $P_k^{-1} - F_k \succ 0$ and $F_k \succeq 0$. Therefore recursive least squares is proper.
%

% Note that, for all $k \ge 0$, $P_k \preceq P_0$. However, there is no positive lower bound for the covariance matrix. In fact, if $( \phi_k )_{k=0}^\infty$ is persistently exciting, then $\lim_{k \rightarrow \infty} P_k = 0$. Proof of these two facts can be found in Proposition 3 of \cite{goel2020recursive}. If $( \phi_k )_{k=0}^\infty$ is persistently exciting and bounded, then applying statement 3 of Theorem \ref{theo: GFRLS Lyapunov stability v2} shows that the equilibrium $\tilde{\theta}_k = 0$ of RLS is globally asymptotically stable. This result is the same as that given in Theorem 7 of \cite{goel2020recursive}.

\subsection{Exponential Forgetting}
A classical method to introduce forgetting in RLS is called \textit{exponential forgetting}, where a forgetting factor $0 < \lambda \le 1$ is introduced which provides exponentially higher weighting to more recent measurements and data \cite{islam2019recursive,goel2020recursive}. Exponential forgetting RLS is derived by denoting the minimizer of the cost function
\begin{align}
    J_k(\hat{\theta}) = \sum_{i=0}^k \lambda^{k-i} \Vert y_i - \phi_i \hat{\theta} \Vert ^2 + \lambda^{k+1} \Vert \theta - \theta_0 \Vert_{P_0^{-1}}^2
\end{align}
by $\theta_{k+1} \triangleq \argmin_{\hat{\theta} \in \BBR^n} J_k(\hat{\theta}).$
It follows that, for all $k \ge 0$, $\theta_{k+1}$ is given by
\begin{align}
    P_{k+1}^{-1} &= \lambda P_{k}^{-1} + \phi_k^\rmT \phi_k, \label{eqn: EF-RLS Pinv update} \\
    \theta_{k+1} &= \theta_k + P_{k+1} \phi_k^\rmT (y_k - \phi_k \theta_k).  \label{eqn: EF-RLS theta update}
\end{align}
% %
% \begin{align}
%     P_{k+1} &= \frac{1}{\lambda} P_k - \frac{1}{\lambda} P_k \phi_k^\rmT( \lambda I_p + \phi_k P_k \phi_k^\rmT)^{-1} \phi_k P_k, \label{eqn: RLS exp forgetting P_k update}
% \end{align}
% %
% which is computationally beneficial when $p \ll n$ \cite{islam2019recursive}. 

Comparing \eqref{eqn: EF-RLS Pinv update} and \eqref{eqn: EF-RLS theta update} to \eqref{eqn: GFRLS Pinv Update} and \eqref{eqn: GFRLS theta Update}, it follows that exponential forgetting is a special case of GF-RLS where, for all $k \ge 0$, $\Gamma_k = I_p$ and 
\begin{align}
    F_k = (1-\lambda)P_k^{-1}.
\end{align}
Note that, for all $k \ge 0$, $P_k^{-1} \succ 0$, hence $P_k^{-1} - F_k = \lambda P_k^{-1} \succ 0$ and $F_k \succeq 0$. Therefore exponential forgetting is proper.

% If $( \phi_k )_{k=0}^\infty$ is persistently exciting and bounded, then there exist $a,b > 0$ such that, for all $k \ge 0$, $aI_n \preceq P_k \preceq b I_n$. Expressions for $a$ and $b$ are given in Proposition 4 of \cite{goel2020recursive}. If $( \phi_k )_{k=0}^\infty$ is not persistently exciting and bounded, $P_k$ has no guaranteed upper bound or positive lower bound. This can be seen in examples 1 and 3 of \cite{goel2020recursive}. If $( \phi_k )_{k=0}^\infty$ is persistently exciting and bounded, then applying statement 4 of Theorem \ref{theo: GFRLS Lyapunov stability v2} shows that the equilibrium $\tilde{\theta}_k = 0$ of exponential forgetting is globally uniformly exponentially stable. This result is the same as that given in Theorem 6 of \cite{goel2020recursive}.

\subsection{Variable-Rate Forgetting }
\label{subsec: VRF}
An extension of exponential forgetting is \textit{variable-rate forgetting}, in which a time-varying forgetting factor, $0 < \lambda_k \le 1$, is selected at each step $k \ge 0$, in place of the constant forgetting factor of exponential forgetting. Variable-rate forgetting is derived in \cite{bruce2020convergence} by defining the cost function
\begin{align}
    J_k(\hat{\theta}) = \sum_{i=0}^k \frac{\rho_k}{\rho_i} \Vert y_i - \phi_i \hat{\theta} \Vert ^2 + \rho_k  \Vert \theta - \theta_0 \Vert_{P_0^{-1}}^2,
\end{align}
where, for all $k \ge 0$, $\rho_k \triangleq \prod_{i=0}^k \lambda_i$. If, for all $k \ge 0$, the minimizer of $J_k(\hat{\theta})$ is denoted by $\theta_{k+1} \triangleq \argmin_{\hat{\theta} \in \BBR^n} J_k(\hat{\theta})$, it follows that, for all $k \ge 0$, $\theta_{k+1}$ is given by
\begin{align}
     P_{k+1}^{-1} &= \lambda_k P_{k}^{-1} + \phi_k^\rmT \phi_k, \label{eqn: VRF-RLS Pinv update} \\
    \theta_{k+1} &= \theta_k + P_{k+1} \phi_k^\rmT (y_k - \phi_k \theta_k). \label{eqn: VRF-RLS theta update}
\end{align}
%
% Similarly to exponential forgetting, it is computational efficient when $p \ll n$ to use matrix inversion lemma to rewrite \eqref{eqn: VRF-RLS Pinv update} as
% %
% \begin{align}
%     P_{k+1} &= \frac{1}{\lambda_k} P_k - \frac{1}{\lambda_k} P_k \phi_k^\rmT( \lambda_k I_p + \phi_k P_k \phi_k^\rmT)^{-1} \phi_k P_k. \label{eqn: RLS variable rate forgetting P_k update}
% \end{align}

Comparing \eqref{eqn: VRF-RLS Pinv update} and \eqref{eqn: VRF-RLS theta update} to \eqref{eqn: GFRLS Pinv Update} and \eqref{eqn: GFRLS theta Update}, it follows that variable-rate forgetting is a special case of GF-RLS where, for all $k \ge 0$, $\Gamma_k = I_p$ and 
\begin{align}
    F_k = (1-\lambda_k)P_k^{-1}.
\end{align}
Note that, for all $k \ge 0$, $P_k^{-1} \succ 0$, hence $P_k^{-1} - F_k = \lambda_k P_k^{-1} \succ 0$ and $F_k \succeq 0$. Therefore, variable-rate forgetting is proper.

Many methods exist to design this time-varying forgetting factor including methods assuming known noise variance \cite{fortescue1981implementation}, online estimation of noise power \cite{paleologu2008robust}, gradient-based methods \cite{Hung2005Gradient}, and statistical methods \cite{mohseni2022recursive}. 

\subsection{Data-Dependent Updating}
\label{subsec: data dependent updating}

Data-dependent updating was developed in \cite{Dasgupta1987Asymptotically} and was inspired as a way to prevent instabilities in the presence of bounded output disturbances. Data-dependent updating can be summarized by the update equations
\begin{align}
    P_{k+1}^{-1} &= (1-\mu_k) P_k^{-1} + \mu_k \phi_k^\rmT \phi_k, \label{eqn: DDU Pinv update} \\
    \theta_{k+1} &= \theta_k + \mu_k P_{k+1} \phi_k^\rmT (y_k - \phi_k \theta_k), \label{eqn: DDU theta update}
\end{align}
where, for all $k \ge 0$, $0 \le \mu_k < 1$. Next, for all $k \ge 0$, define $\bar{P}_k \in \BBR^{n \times n}$ by
\begin{align}
    \bar{P}_k \triangleq \mu_{k-1} P_k,
\end{align}
where $\mu_{-1} \triangleq 1$. It then follows that, for all $k \ge 0$, \eqref{eqn: DDU Pinv update} and \eqref{eqn: DDU theta update} can be written as
\begin{align}
    \bar{P}_{k+1}^{-1} &= \frac{(1-\mu_k)\mu_{k-1}}{\mu_k} \bar{P}_{k}^{-1} + \phi_k^\rmT \phi_k, \label{eqn: DDU Pinv update 2} \\
    \theta_{k+1} &= \theta_k + \bar{P}_{k+1} \phi_k^\rmT (y_k - \phi_k \theta_k). \label{eqn: DDU theta update 2}
\end{align}
Comparing \eqref{eqn: DDU Pinv update 2} and \eqref{eqn: DDU theta update 2} to \eqref{eqn: VRF-RLS Pinv update} and \eqref{eqn: VRF-RLS theta update}, it follows that data-dependent updating is simply a special case of variable-rate forgetting where, for all $k \ge 0$,
\begin{align}
    \lambda_k = \frac{(1-\mu_k)\mu_{k-1}}{\mu_k}.
\end{align}
For connections to GF-RLS, see subsection \ref{subsec: VRF} on variable-rate forgetting.

 \subsection{Exponential Resetting}
\label{subsec: exponential forgetting}

Exponential resetting was developed in \cite{Lai2023Exponential} %as an extension of exponential forgetting which incorporates the exponential resetting property, a property first proposed in \cite{Salgado1988modified} which states that under zero excitation, the covariance matrix converges to a desired matrix. Exponential resetting RLS 
and can be summarized by the update equations
\begin{align}
    P_{k+1}^{-1} &= \lambda P_k^{-1} + (1-\lambda)R_\infty + \phi_k^\rmT \phi_k, \label{eqn: Pinv update exp reset} \\
    \theta_{k+1} &= \theta_k + P_{k+1} \phi_k^\rmT(y_k - \phi_k \theta_k), \label{eqn: theta update exp reset}
\end{align}
where $R_\infty \in \BBR^{n \times n}$ is positive semidefinite. Note that while \cite{Lai2023Exponential} assumes that $R_\infty$ is positive definite, it is simple to extend the results of \cite{Lai2023Exponential} to positive-semidefinite $R_\infty$.  It is shown in \cite{Lai2023Exponential} that, for all $k \ge 0$, $P_k$ is positive definite. Furthermore, \cite{Lai2023Exponential} shows that the exponential resetting property is satisfied, namely that if there exists $M \ge 0$ such that, for all $k \ge M$, $\phi_k = 0_{p \times n}$, then $\lim_{k \rightarrow \infty} P_k^{-1} = R_\infty$.

Comparing \eqref{eqn: Pinv update exp reset} and \eqref{eqn: theta update exp reset} to \eqref{eqn: GFRLS Pinv Update} and \eqref{eqn: GFRLS theta Update}, it follows that exponential resetting is a special case of GF-RLS where, for all $k \ge 0$, $\Gamma_k = I_p$ and 
\begin{align}
    F_k = (1-\lambda)(P_k^{-1} - R_\infty).
\end{align}
Note that, for all $k \ge 0$, $P_k^{-1} - F_k = \lambda P_k^{-1} + (1-\lambda)R_\infty \succ 0$.
Furthermore, Proposition 6 of \cite{Lai2023Exponential} shows that, for all $k \ge 0$, $P_k^{-1} \succeq \lambda^k P_0^{-1} + (1-\lambda^k)R_\infty$. 
Note that if $P_0^{-1} \succeq R_\infty$, then, for all $k \ge 0$, $P_k^{-1} \succeq \lambda^k R_\infty + (1-\lambda^k)R_\infty = R_\infty$ implying that $F_k \succeq 0$. 
Therefore, if $P_0^{-1} \succeq R_\infty$, then exponential resetting is proper.

% Moreover, if $( \phi_k)_{k=0}^\infty$ is bounded, then there exists $a > 0$ such that, for all $k \ge 0$, $a I_n \preceq P_k$. An expression for $a$ is given in Corollary 2 of \cite{Lai2023Exponential}. Applying statement 2 of Theorem \ref{theo: GFRLS Lyapunov stability v2} shows that if $( \phi_k)_{k=0}^\infty$ is bounded, then the equilibrium $\tilde{\theta}_k = 0$ of exponential resetting is uniformly Lyapunov stable. Furthermore, if $( \phi_k )_{k=0}^\infty$ is persistently exciting and bounded, then applying statement 4 of Theorem \ref{theo: GFRLS Lyapunov stability v2} shows that the equilibrium $\tilde{\theta}_k = 0$ of exponential resetting is globally uniformly exponentially stable.

\subsection{Covariance Resetting}
A simple ad-hoc extension of RLS is \textit{covariance resetting} \cite{goodwin1983deterministic} where, if a criterion for resetting is met at step $k$, then the covariance matrix $P_k$ is reset to a desired positive-definite matrix, $P_{\infty,k} \in \BBR^{n \times n}$. Covariance resetting gives, for all $k \ge 0$, the update equations
\begin{align}
\label{eqn: covariance resetting Pinv update}
    P_{k+1}^{-1} &= \begin{cases}
        P_{\infty,k}^{-1} + \phi_k^\rmT \phi_k & \textnormal{criterion is met}, \\
        P_k^{-1} + \phi_k^\rmT \phi_k & \textnormal{otherwise},
    \end{cases} \\
    \theta_{k+1} &= \theta_k + P_{k+1} \phi_k^\rmT (y_k - \phi_k \theta_k). \label{eqn: covariance resetting theta update}
\end{align}

Comparing \eqref{eqn: covariance resetting Pinv update} and \eqref{eqn: covariance resetting theta update} to \eqref{eqn: GFRLS Pinv Update} and \eqref{eqn: GFRLS theta Update}, it follows that covariance resetting is a special case of GF-RLS where, for all $k \ge 0$, $\Gamma_k = I_p$ and 
\begin{align}
    F_k = \begin{cases}
        P_k^{-1} - P_{\infty,k}^{-1} & \textnormal{criterion is met}, \\
        0_{n \times n} & \textnormal{otherwise}.
    \end{cases}
\end{align}
Note that, for all $k \ge 0$, 
\begin{align}
    P_k^{-1} - F_k = 
    \begin{cases}
        P_{\infty,k}^{-1} & \textnormal{criterion is met}, \\
        P_k^{-1} & \textnormal{otherwise},
    \end{cases}
\end{align}
and hence $P_k^{-1} - F_k \succeq 0$. 
Moreover, note that when a criterion for resetting is met, $F_k \succeq 0$ if and only of $P_k \preceq P_{\infty,k}$. Thus, if $P_k \preceq P_{\infty,k}$ whenever a criterion for resetting is met, then covariance resetting is proper.
 
% Some criteria for covaraicne resetting include resetting when a number of steps passing since the last reset, resetting at predefined intervals, and resetting when an eigenvalue of $P_k$ becomes larger or smaller than an upper or lower threshold, respectively. Convergence of RLS with covariance resetting is analyzed in \cite{goodwin1983deterministic}.
%
Covariance resetting can similarly be applied to any RLS extension, resetting the covariance when a criterion is met, and following the nominal algorithm otherwise. Such an algorithm would also be a special case of GF-RLS. 

\subsection{Directional Forgetting by Information Matrix Decomposition}
%
% Notice that the forgetting factor $\lambda$ of exponential forgetting uniformly increases the all eigenvalues of the the covariance matrix $P_k$ by a factor $\frac{1}{\lambda}$ (and similarly for variable-rate forgetting). In contrast, we say a method is \textit{directional} if it modifies the covariance matrix non-uniformly.

A directional forgetting algorithm based on the decomposition of the information matrix (i.e. inverse of the covariance matrix) is presented in \cite{cao2000directional}.
%
% , where the information matrix refers to the inverse covariance matrix $P_k^{-1}$. In the proposed algorithm, at each step $k \ge 0$, information is subtracted from the information matrix in the direction of $P_k^{-1} \phi_k$; more details on the motivation can be found in \cite{cao2000directional}. 
%
This method was developed in the special case of scalar measurements ($p=1$) and can be summarized by the update equations
\begin{align}
    R_{k+1} &= \bar{R}_k + \phi_k^\rmT \phi_k, \label{eqn: R_k update dir forget} \\
    P_{k+1} &= \bar{P}_k - \frac{\bar{P}_k \phi_k^\rmT  \phi_k \bar{P}_k}{1 + \phi_k \bar{P}_k \phi_k^\rmT}, \label{eqn: P_k update dir forget} \\
    \theta_{k+1} &= \theta_k + P_{k+1} \phi_k^\rmT(y_k-\phi_k \theta_k),  \label{eqn: theta_k update dir forget}
\end{align}
where
\begin{align}
    \bar{R}_k &\triangleq \begin{cases}
        R_k - (1-\lambda) \frac{R_k \phi_k^\rmT  \phi_k R_k}{\phi_k R_k \phi_k^\rmT} & \Vert \phi_k \Vert > \varepsilon, \\
        R_k & \Vert \phi_k \Vert \le \varepsilon,
    \end{cases} 
    \\
    \bar{P}_k &\triangleq \begin{cases}
        P_k + \frac{1-\lambda}{\lambda} \frac{\phi_k^\rmT  \phi_k}{\phi_k R_k \phi_k^\rmT} & \Vert \phi_k \Vert > \varepsilon, \\
        P_k & \Vert \phi_k \Vert \le \varepsilon,
    \end{cases} 
\end{align}
and where $\varepsilon > 0$, $0 < \lambda \le 1$ and, for all $k \ge 0$, $R_k = P_k^{-1}$ and $\bar{R}_k = \bar{P}_k^{-1}$.

Comparing \eqref{eqn: R_k update dir forget} and \eqref{eqn: theta_k update dir forget} to \eqref{eqn: GFRLS Pinv Update} and \eqref{eqn: GFRLS theta Update}, it follows that directional forgetting by information matrix decomposition is a special case of GF-RLS where, for all $k \ge 0$, $\Gamma_k = I_p$ and $F_k = R_k - \bar{R}_k$. If $\Vert \phi_k \Vert > \varepsilon$, then $F_k$ can be expressed
\begin{align}
    F_k = (1-\lambda) \frac{P_k^{-1} \phi_k^\rmT  \phi_k P_k^{-1}}{\phi_k P_k^{-1} \phi_k^\rmT},
\end{align}
otherwise, $F_k = 0_{n \times n}$.
It is shown in \cite{cao2000directional} that, for all $k \ge 0$, $\bar{R}_k = P_k^{-1} - F_k \succ 0$ and $F_k \succeq 0$. Therefore, directional forgetting by information matrix decomposition is proper.

 % It is shown in \cite{cao2000directional} that if $( \phi_k )_{k=0}^\infty$ is bounded, there exist $a,b > 0$ such that, for all $k \ge 0$, $a I_n \preceq P_k \preceq b I_n$. However, no explicit expression for $a$ or $b$ is given in \cite{cao2000directional}. 
 % %
 % If $( \phi_k )_{k=0}^\infty$ is bounded, then applying statement 2 of Theorem \ref{theo: GFRLS Lyapunov stability v2} shows that the equilibrium $\tilde{\theta}_k = 0$ of directional forgetting by information matrix decomposition is uniformly Lyapunov stable. This result is the same as that given in Theorem 3 of \cite{shin2020new}. 
 % %
 % Moreover, if $( \phi_k )_{k=0}^\infty$ is persistently exciting and bounded, then applying statement 4 of Theorem \ref{theo: GFRLS Lyapunov stability v2} shows that the equilibrium $\tilde{\theta}_k = 0$ of directional forgetting by information matrix decomposition is globally uniformly exponentially stable.

\subsection{Variable-Direction Forgetting}
Variable-direction forgetting was developed in \cite{goel2020recursive} and is based on the singular value decomposition of the inverse covariance matrix $P_k^{-1}$.
%
%This method is developed to restrict forgetting to the subspace of $P_k^{-1}$ where sufficient new information is provided by $\phi_k$. 
%
For all $k \ge 0$, a positive-definite $\Lambda_k \in \BBR^{n \times n}$ is constructed for the update equations
\begin{align}
    P_{k+1}^{-1} &= \Lambda_k P_{k}^{-1} \Lambda_k + \phi_k^\rmT \phi_k, \label{eqn: Pinv update variable-direction} \\
    \theta_{k+1} &= \theta_k + P_{k+1} \phi_k^\rmT(y_k-\phi_k \theta_k). \label{eqn: theta update variable-direction}
\end{align}
Details on constructing $\Lambda_k$ can be found in equations (67) and (68) of \cite{goel2020recursive}. 
%
%It is shown in \cite{goel2015retrospective} that there exists $a > 0$ such that, for all $k \ge 0$, $P_k \preceq a I_n$, with an explicit expression for $a$ given in Proposition 10 of \cite{goel2020recursive}.
%
Comparing \eqref{eqn: Pinv update variable-direction} and \eqref{eqn: theta update variable-direction} to \eqref{eqn: GFRLS Pinv Update} and \eqref{eqn: GFRLS theta Update}, it follows that variable-direction forgetting is a special case of GF-RLS where, for all $k \ge 0$, $\Gamma_k = I_p$ and 
\begin{align}
    F_k = P_k^{-1} - \Lambda_k P_k^{-1} \Lambda_k.
\end{align}
Note that, for all $k \ge 0$, $P_k^{-1} - F_k = \Lambda_k P_k^{-1} \Lambda_k \succ 0$. Moreover, it is shown in the proof of Proposition 9 of \cite{goel2020recursive} that, for all $k \ge 0$, $P_k^{-1} - \Lambda_k P_k^{-1} \Lambda_k \succeq 0$. Therefore, variable-direction forgetting is proper.

% If $( \phi_k )_{k=0}^\infty$ is persistently exciting and bounded, then there exist $b > 0$ such that, for all $k \ge 0$, $P_k \preceq b I_n$. An expression for $b$ is given in Proposition 4 of \cite{goel2020recursive}. Applying statement 1 of Theorem \ref{theo: GFRLS Lyapunov stability v2} shows that the equilibrium $\tilde{\theta}_k = 0$ of variable-direction forgetting is Lyapunov stable. Moreover, if $( \phi_k )_{k=0}^\infty$ is persistently exciting and bounded, then applying statement 3 of Theorem \ref{theo: GFRLS Lyapunov stability v2} shows that the equilibrium $\tilde{\theta}_k = 0$ of variable-direction forgetting is globally asymptotically stable.

% \begin{align}
%     P_{k+1} = \bar{P}_k - \bar{P}_k \phi_k (I_p + \phi_k^\rmT \bar{P}_k \phi_k)^{-1} \phi_k^\rmT \bar{P}_k,
% \end{align}
%
% where $\bar{P}_k \triangleq \Lambda_k^{-1} P_k \Lambda_k^{-1}$.

\subsection{Tracking of Slowly Varying Parameters by Directional Forgetting}
\label{subsec: slowly varying}

Another directional forgetting method, developed in \cite{kulhavy1984tracking} and analyzed in \cite{bittanti1990convergence}, was designed to track slowly varying parameters. A simulation study of this method can also be found in \cite{bertin1987tracking}. This method was developed in the special case of scalar measurements ($p=1$) and can be summarized by the update equations
\begin{align}
    P_{k+1}^{-1} &= P_k^{-1} + \beta_k \phi_k^\rmT \phi_k, \label{eqn: Pk update slowly varying} \\
    \theta_{k+1} &=\theta_k + \frac{1}{1 + \phi_k P_k \phi_k^\rmT}P_k \phi_k^\rmT(y_k - \phi_k \theta_k), \label{eqn: theta_k update slowly varying} 
\end{align}
where, for all $k \ge 0$,
\begin{align}
    \beta_k &\triangleq \begin{cases}
        \mu - \frac{1-\mu}{\phi_k P_k \phi_k^\rmT} & \phi_k P_k \phi_k^\rmT > 0, \\
        1 & \phi_k P_k \phi_k^\rmT = 0,
    \end{cases} 
\end{align}

and $0 < \mu \le 1$ is the forgetting factor. To show this method is a special case of GF-RLS, first note that, for all $k \ge 0$, \eqref{eqn: matrix inversion lemma 2} of Lemma \ref{lem: matrix inversion lemma} can be used to rewrite \eqref{eqn: theta_k update slowly varying} as
\begin{align}
    \theta_{k+1} = \theta_k + (P_k^{-1} + \phi_k^\rmT \phi_k)^{-1} \phi_k(y_k - \phi_k\theta_k). \label{eqn: theta_k update slowly varying 2}
\end{align}
Next, defining $\bar{P}_0 \triangleq P_0$ and, for all $k \ge 0$, $\bar{P}_{k+1}^{-1} \triangleq P_k^{-1} + \phi_k^\rmT \phi_k$. It then follows that, for all $k \ge 0$, \eqref{eqn: theta_k update slowly varying 2} and \eqref{eqn: Pk update slowly varying} can be rewritten as
\begin{align}
    \bar{P}_{k+1}^{-1} &= \bar{P}_k^{-1} - (1-\beta_{k-1})\phi_{k-1}^\rmT \phi_{k-1} + \phi_k^\rmT \phi_k, \label{eqn: eqn: Pinv_k update slowly varying 3}  \\
    \theta_{k+1} &= \theta_k + \bar{P}_{k+1}\phi_k^\rmT (y_k - \phi_k \theta_k), \label{eqn: eqn: theta_k update slowly varying 3}
\end{align}
where $\beta_{-1} \triangleq 0$ and $\phi_{-1} \triangleq 0_{1 \times n}$. 

Comparing \eqref{eqn: eqn: Pinv_k update slowly varying 3} and \eqref{eqn: eqn: theta_k update slowly varying 3} to \eqref{eqn: GFRLS Pinv Update} and \eqref{eqn: GFRLS theta Update}, it follows that this direction forgetting method is a special case of GF-RLS where, for all $k \ge 0$, $\Gamma_k = I_p$ and $F_k = (1-\beta_{k-1})\phi_{k-1}^\rmT \phi_{k-1}$. If $\phi_k P_k \phi_k^\rmT > 0$, then $F_k$ simplifies to
\begin{align}
    F_k = (1-\mu)\Big(\phi_{k-1}^\rmT \phi_{k-1} + \frac{\phi_{k-1}^\rmT \phi_{k-1}}{\phi_{k-1} P_{k-1} \phi_{k-1}^\rmT}\Big),
\end{align}
otherwise, $F_k = 0_{n \times n}$. It is shown in \cite{bittanti1990convergence} that, for all $k \ge 0$, $P_k \succ 0$. Therefore, for all $k \ge 0$, $\bar{P}_k^{-1} - F_k = \bar{P}_{k+1}^{-1} - \phi_k^\rmT \phi_k = P_k^{-1} \succ 0$. 
Furthermore, $\mu \le 1$, and hence, for all $k \ge 0$, $F_k \succeq 0$. Therefore, this direction forgetting method is proper.

\subsection{Multiple Forgetting}
\label{subsec: multiple forgetting}
Multiple forgetting was developed in \cite{vahidi2005recursive} for the special case $n = 2$ and $p = 1$ to allow for different forgetting factors for the two parameters being estimated. To introduce multiple forgetting, we write, for all $k \ge 0$, $\phi_k \in \BBR^{1 \times 2}$ as
\begin{align}
    % \theta_{k} &= \begin{bmatrix}
    %     \theta_{1,k+1} \\ \theta_{2,k+1}
    % \end{bmatrix}, \\
    \phi_k &= \begin{bmatrix}
        \phi_{1,k} & \phi_{2,k}
    \end{bmatrix}.
\end{align}
Then, multiple forgetting can be summarized by the update equations
\begin{align}
    R_{1,k+1} &= \lambda_{1,k} R_{1,k} + \phi_{1,k}^2, \label{eqn: mult forget 1} \\
    R_{2,k+1} &= \lambda_{2,k} R_{2,k} + \phi_{2,k}^2, \label{eqn: mult forget 2} \\
    % \\
    % P_{1,k+1} &= \frac{1}{\lambda_1} P_{1,k} - \frac{1}{\lambda_1} P_{1,k} \phi_k^\rmT( \lambda_1 I_p + \phi_k P_{1,k} \phi_k^\rmT)^{-1} \phi_k P_{1,k}, \\
    % P_{2,k+1} &= \frac{1}{\lambda_2} P_{2,k} - \frac{1}{\lambda_2} P_{2,k} \phi_k^\rmT( \lambda_2 I_p + \phi_k P_{2,k} \phi_k^\rmT)^{-1} \phi_k P_{2,k}.
    \theta_{k+1} &= \theta_k + L_{{\rm new},k} (y_k - \phi_k \theta_k), \label{eqn: mult forget 3}
\end{align}
where, for all $k \ge 0$, $\lambda_{1,k},\lambda_{2,k} \in (0,1]$, $R_{1,k}, R_{2,k} \in (0,\infty)$, and 
\begin{align}
    L_{{\rm new},k} \triangleq \frac{1}{1 + \frac{\phi_{1,k}^2}{\lambda_{1,k} R_{1,k}} + \frac{\phi_{2,k}^2}{\lambda_{2,k} R_{2,k}}  }
    \begin{bmatrix}
        \frac{\phi_{1,k}}{\lambda_{1,k} R_{1,k}} \\ \frac{\phi_{2,k}}{\lambda_{2,k} R_{2,k}}
    \end{bmatrix}.
    \label{eqn: mult forget 4}
\end{align}
It was further shown in \cite{fraccaroli2015new} that \eqref{eqn: mult forget 1} through \eqref{eqn: mult forget 4} are equivalent to the update equations
\begin{align}
    R_{k+1} &= \begin{bmatrix}
        \lambda_{1,k} R_{1,k} & 0 \\
        0 & \lambda_{2,k} R_{2,k}
    \end{bmatrix} + \phi_k^\rmT \phi_k, \label{eqn: Pinv update multiple forgetting} \\
    \theta_{k+1} &= \theta_k + P_{k+1} \phi_k^\rmT(y_k-\phi_k \theta_k),  \label{eqn: theta update multiple forgetting}
\end{align}
where, for all $k \ge 0$, $R_k \in \BBR^{2 \times 2}$ is positive definite and $P_k \triangleq R_k^{-1} \in \BBR^{2 \times 2}$. Furthermore, for all $k \ge 0$, denote $R_k$ as
\begin{align}
    R_{k} \triangleq \begin{bmatrix}
        R_{1,k} & R_{12,k} \\
        R_{12,k} & R_{2,k}
    \end{bmatrix}. 
    %
    % = \frac{1}{P_{1,k} P_{2,k} - P_{12,k}^2} \begin{bmatrix}
    %     P_{2,k} & -P_{12,k} \\ -P_{12,k} & P_{1,k}
    % \end{bmatrix}.
\end{align}

Note that \eqref{eqn: Pinv update multiple forgetting} and \eqref{eqn: theta update multiple forgetting} are equivalent to the GF-RLS update equations \eqref{eqn: GFRLS Pinv Update} and \eqref{eqn: GFRLS theta Update} where, for all $k \ge 0$, $\Gamma_k = I_p$ and 
\begin{align}
    F_k = \begin{bmatrix}
        (1-\lambda_{1,k}) R_{1,k} & R_{12,k} \\
        R_{12,k} & (1-\lambda_{2,k}) R_{2,k}
    \end{bmatrix}.
\end{align}
Furthermore, note that, for all $k \ge 0$,
\begin{align}
    P_k^{-1} - F_k 
    = \begin{bmatrix}
        \lambda_{1,k} R_{1,k} & 0 \\
        0 & \lambda_{2,k} R_{2,k}
    \end{bmatrix}
    \succ 0,
\end{align}
since the diagonal elements of positive-definite $R_k$ are positive. 
Note that, for all $k \ge 0$, $F_k$ is not necessarily positive semidefinite.
However, for all $k \ge 0$, since $R_k$ is positive definite, there exist $\lambda_{1,k},\lambda_{2,k} \in (0,1]$ small enough such that $F_k$ is positive semidefinite. 
Hence, if, for all $k \ge 0$, $\lambda_{1,k},\lambda_{2,k}$ are chosen sufficiently small, then multiple forgetting is proper.

\section{Conclusion}
% A conclusion section is not required. Although a conclusion may review the 
% main points of the paper, do not replicate the abstract as the conclusion. A 
% conclusion might elaborate on the importance of the work or suggest 
% applications and extensions.
% %
% Appendixes, if needed, appear before the acknowledgment.

This article develops GF-RLS, a general framework for RLS extensions derived from minimizing a least-squares cost function. Several RLS extensions are shown to be special cases of GF-RLS, and hence, can be derived from the GF-RLS cost function. 
It is important to note that while the update equations of an RLS extension may not, at face value, seem to be a special case of the GF-RLS update equations, they may still be a special case with some re-definitions. For example, see subsections \ref{subsec: data dependent updating}, \ref{subsec: slowly varying}, and \ref{subsec: multiple forgetting}.
This connects a cost function to many RLS extensions which were originally developed as ad-hod modifications to the RLS update equations (e.g. \cite{cao2000directional,kulhavy1984tracking,bittanti1990convergence,
Salgado1988modified,Lai2023Exponential,vahidi2005recursive,
paleologu2008robust,johnstone1982exponential,Dasgupta1987Asymptotically}).

Further, stability and robustness guarantees are presented for GF-RLS. These guarantees facilitate stability and robustness analysis for various RLS extension that are a special cases of GF-RLS.
Furthermore, a specialization of the robustness result presented gives a bound to the asymptotic bias of the least squares estimator in the errors-in-variables problem.
Applications of this analysis include RLS-based adaptive control \cite{nguyen2021predictive,islam2021data} and online transfer function identification \cite{muller1997iterative,akers1997armarkov}.
Similar analysis may be used to derive tighter bounds if specialized to a single extension of RLS.

A practical use of this work is that conditions A1), A2), and A3) provide a general guideline for the future design of RLS extensions, backed by theoretical analysis.
Satisfying conditions A1), A2), and A3) is entirely dependent on the design of the RLS algorithm while conditions A4) through A8) only concern the data being collected.
Hence, if an extension of RLS is designed to satisfy A1), A2), and A3), then the theoretical guarantees of Theorem \ref{theo: GFRLS Lyapunov stability v2} and Theorem \ref{theo: GFRLS UUB} follow under the assumption of conditions A4) through A8).

% Moreover, if statistical properties of the change in parameters, of the measurement noise, and/or of the regressor noise are known, such analysis may also allow for more informed tuning of different RLS extensions by tuning forgetting factors to minimize an ultimate bound on parameter estimation error. 

\section*{Appendix A: Useful Lemmas}
% \setcounter{subsection}{0}

% \subsection{Useful Lemmas}
% \labelsubseccounter{subsec: appendix useful lemmas}

\begin{lem}
\label{lem: quadratic cost minimizer}
Let $A \in \BBR^{n \times n}$ be positive definite, let $b \in \BBR^n$ and $c \in \BBR$, and define $f\colon\BBR^n \rightarrow \BBR$ by $ f(x) \triangleq x^\rmT A x + 2 b^\rmT x + c$.
%
% \begin{align}
%     f(x) \triangleq x^\rmT A x + 2 b^\rmT x + c. 
% \end{align}
%
Then, $f$ has a unique global minimizer given by $\argmin_{x \in \BBR^n} f(x) = -A^{-1} b.$
\end{lem}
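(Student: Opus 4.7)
The plan is to prove this standard result by completing the square, which simultaneously establishes existence, uniqueness, and the closed-form expression for the minimizer. The key algebraic identity is that, since $A$ is positive definite (and hence invertible), for every $x \in \BBR^n$ one can write
\begin{align*}
    f(x) = (x + A^{-1}b)^\rmT A (x + A^{-1}b) + c - b^\rmT A^{-1} b.
\end{align*}
Verifying this identity is a direct expansion using symmetry of $A$ (which follows from positive definiteness in the sense used in the paper) and the fact that $A A^{-1} = I_n$; this is the only computation I would actually carry out in detail.

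From that identity, the conclusion is immediate. Because $A \succ 0$, the quadratic form $v^\rmT A v$ is nonnegative for every $v \in \BBR^n$ and vanishes if and only if $v = 0$. Applying this to $v = x + A^{-1}b$ shows that $f(x) \ge c - b^\rmT A^{-1} b$ for all $x$, with equality if and only if $x + A^{-1}b = 0$, i.e. $x = -A^{-1}b$. This gives both that $-A^{-1}b$ is a global minimizer and that it is the unique one.

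There is no real obstacle here; the only thing to be careful about is ensuring that symmetry of $A$ is used (so that the cross terms in the expansion combine to $2 b^\rmT x$), and noting that positive definiteness of $A$ delivers both the invertibility needed to form $A^{-1}b$ and the strict inequality needed for uniqueness. An equivalent alternative, which I would mention only if space permits, is to compute $\nabla f(x) = 2Ax + 2b$ and the Hessian $\nabla^2 f(x) = 2A \succ 0$, concluding strict convexity and hence that the unique stationary point $x = -A^{-1}b$ is the unique global minimizer; the completing-the-square argument is preferable since it is self-contained and avoids invoking convex analysis.
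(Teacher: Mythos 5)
Your proof is correct. The paper states this lemma in Appendix A without proof, treating it as a standard fact, so there is no paper argument to compare against; your completing-the-square derivation, writing $f(x) = (x + A^{-1}b)^\rmT A (x + A^{-1}b) + c - b^\rmT A^{-1} b$ and invoking symmetry and strict positive definiteness of $A$ for the cross terms and for uniqueness, is exactly the canonical argument one would supply here.
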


\begin{lem}[Matrix Inversion Lemma]
\label{lem: matrix inversion lemma}
Let $A \in \BBR^{n \times n}$, $U \in \BBR^{n \times p}$, $C \in \BBR^{p \times p}$, and $V \in \BBR^{p \times n}$. If $A$, $C$, and $A+UCV$ are nonsingular, then $C^{-1} + VA^{-1} U$ is nonsingular, and 
\begin{gather}
    (A+UCV)^{-1} = A^{-1} - A^{-1}U(C^{-1} + VA^{-1} U)^{-1} V A^{-1}, \label{eqn: matrix inversion lemma} \\
    (A+UCV)^{-1} U C = A^{-1} U(C^{-1} + V A^{-1} U)^{-1}. \label{eqn: matrix inversion lemma 2}
\end{gather}
\end{lem}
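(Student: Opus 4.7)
The plan is to proceed in two stages: first, establish that $C^{-1} + VA^{-1}U$ is nonsingular; second, verify both identities by direct algebraic manipulation.

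For the nonsingularity claim, I would consider the $(n+p)\times(n+p)$ block matrix $M \triangleq \begin{bmatrix} A & U \\ V & -C^{-1} \end{bmatrix}$ and reduce it to block-triangular form in two different ways. Left-multiplication by the unit-block-lower-triangular matrix $\begin{bmatrix} I_n & 0 \\ -VA^{-1} & I_p \end{bmatrix}$ zeros out the $(2,1)$ block and produces the Schur complement $-C^{-1} - VA^{-1}U$ in the $(2,2)$ position; right-multiplication by $\begin{bmatrix} I_n & 0 \\ CV & I_p \end{bmatrix}$ zeros out the $(2,1)$ block through column operations and produces $A + UCV$ in the $(1,1)$ position. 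Equating $\det(M)$ computed from the two factorizations yields $\det(A)\det(C^{-1} + VA^{-1}U) = \det(A+UCV)\det(C^{-1})$, and since $A$, $C$, and $A+UCV$ are all nonsingular by hypothesis, the right side is nonzero, forcing $C^{-1} + VA^{-1}U$ to be nonsingular.

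For \eqref{eqn: matrix inversion lemma}, I would simply multiply $(A+UCV)$ by the claimed inverse on the right. Expansion produces four terms, of which the two middle terms involving $U$ can be combined by factoring $UC$ on the left and $VA^{-1}$ on the right: the combination collapses to $UC(C^{-1} + VA^{-1}U)(C^{-1}+VA^{-1}U)^{-1}VA^{-1} = UCVA^{-1}$, which cancels the remaining $UCVA^{-1}$ contribution and leaves $I_n$. For \eqref{eqn: matrix inversion lemma 2}, I would start from \eqref{eqn: matrix inversion lemma} just established and right-multiply by $UC$, giving $A^{-1}UC - A^{-1}U(C^{-1}+VA^{-1}U)^{-1}VA^{-1}UC$. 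Factoring $A^{-1}U(C^{-1}+VA^{-1}U)^{-1}$ on the left recasts this as $A^{-1}U(C^{-1}+VA^{-1}U)^{-1}\bigl[(C^{-1}+VA^{-1}U)C - VA^{-1}UC\bigr]$, and the bracketed quantity simplifies to $I_p$, giving the desired identity.

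The main obstacle is minor: the lemma reduces to routine algebra once the Schur-complement observation secures the nonsingularity of $C^{-1}+VA^{-1}U$, so the only care required is bookkeeping in the middle-term factorization to ensure the inverse of the Schur complement lines up correctly for cancellation. There is no deep idea beyond recognizing that the two-sided block factorization of $M$ links the determinants of $A+UCV$ and $C^{-1}+VA^{-1}U$.
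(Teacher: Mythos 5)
Your proof is correct. The paper states this lemma as a standard fact and supplies no proof of its own, so there is nothing to compare against; your argument is self-contained and sound. The two-sided block reduction of $\begin{bmatrix} A & U \\ V & -C^{-1}\end{bmatrix}$ correctly yields $\det(A)\det(C^{-1}+VA^{-1}U)=\det(A+UCV)\det(C^{-1})$, which settles nonsingularity of $C^{-1}+VA^{-1}U$; the direct multiplication check of \eqref{eqn: matrix inversion lemma} (with the middle terms factoring as $-UC(C^{-1}+VA^{-1}U)(C^{-1}+VA^{-1}U)^{-1}VA^{-1}=-UCVA^{-1}$, cancelling the $+UCVA^{-1}$ term) is valid, noting that for square matrices a one-sided inverse suffices; and the derivation of \eqref{eqn: matrix inversion lemma 2} by right-multiplying by $UC$ and collapsing the bracket to $I_p$ is exactly the standard route.
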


\begin{lem}
\label{lem: block matrix max singular value squared}
    Let $A \in \BBR^{n \times m}$ be the partitioned matrix 
    \begin{align}
        A \triangleq \begin{bmatrix}
            A_{11} & \hdots & A_{1l} \\
            \vdots & \ddots & \vdots \\
            A_{k1} & \hdots & A_{kl}
        \end{bmatrix},
    \end{align}
    where, for all $i\in \{ 1, \hdots ,k \}$ and $j \in \{1, \hdots, l \}$,  $A_{ij} \in \BBR^{n_i \times m_j}$. Then,
    \begin{align}
        \svdmax(A)^2 \le \sum_{i=1}^k \sum_{j=1}^l \svdmax(A_{ij})^2. 
    \end{align}
\end{lem}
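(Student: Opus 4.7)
The plan is to work directly from the variational characterization $\svdmax(A)^2 = \sup_{x \neq 0} \|Ax\|^2/\|x\|^2$ and bound $\|Ax\|^2$ by exploiting the block structure via two applications of the Cauchy--Schwarz inequality.

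First I would partition any $x \in \BBR^m$ conformally with the column blocks as $x = [x_1^\rmT, \ldots, x_l^\rmT]^\rmT$ with $x_j \in \BBR^{m_j}$, so that $\|x\|^2 = \sum_{j=1}^l \|x_j\|^2$. Conformally with the row blocks, the $i$-th block of $Ax$ is $(Ax)_i = \sum_{j=1}^l A_{ij} x_j$, and therefore $\|Ax\|^2 = \sum_{i=1}^k \|\sum_{j=1}^l A_{ij} x_j\|^2$. Using the triangle inequality followed by $\|A_{ij} x_j\| \le \svdmax(A_{ij}) \|x_j\|$ gives the pointwise bound
\begin{align*}
    \Bigl\| \sum_{j=1}^l A_{ij} x_j \Bigr\|
    \le \sum_{j=1}^l \svdmax(A_{ij}) \|x_j\|.
\end{align*}

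Next I would apply Cauchy--Schwarz to the right-hand side, treating $(\svdmax(A_{ij}))_{j=1}^l$ and $(\|x_j\|)_{j=1}^l$ as vectors in $\BBR^l$, which yields
\begin{align*}
    \Bigl(\sum_{j=1}^l \svdmax(A_{ij}) \|x_j\|\Bigr)^2
    \le \Bigl(\sum_{j=1}^l \svdmax(A_{ij})^2\Bigr)\Bigl(\sum_{j=1}^l \|x_j\|^2\Bigr)
    = \Bigl(\sum_{j=1}^l \svdmax(A_{ij})^2\Bigr)\|x\|^2.
\end{align*}
Summing this bound over $i = 1, \ldots, k$ gives $\|Ax\|^2 \le \bigl(\sum_{i=1}^k \sum_{j=1}^l \svdmax(A_{ij})^2\bigr)\|x\|^2$, and taking the supremum over $x \neq 0$ delivers the claim.

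I do not anticipate any real obstacle here, since every step is a standard inequality and the block decomposition of $\|Ax\|^2$ is immediate from the partitioning. The only mild subtlety worth flagging is the order of the two estimates (triangle inequality on the inner sum before Cauchy--Schwarz) to end up with a clean $\sum_j \svdmax(A_{ij})^2$ factor rather than a cross-term-free quadratic form that one would get from attempting to diagonalize $A^\rmT A$ directly. No macros beyond those already used in the paper ($\svdmax$, $\BBR$, $\rmT$) are needed.
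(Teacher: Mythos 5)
Your argument is correct: the conformal partition of $x$, the block decomposition $\|Ax\|^2 = \sum_{i=1}^k \bigl\| \sum_{j=1}^l A_{ij}x_j \bigr\|^2$, the triangle inequality combined with $\|A_{ij}x_j\| \le \svdmax(A_{ij})\|x_j\|$, and the Cauchy--Schwarz step all go through exactly as you describe, and taking the supremum over $x \neq 0$ yields the claimed bound. The paper does not actually prove this lemma; it simply cites Theorem 1 of Bhatia and Kittaneh's norm-inequality paper, so your write-up is a genuinely different (and more self-contained) route. What your approach buys is an elementary, two-inequality proof requiring nothing beyond the variational characterization of $\svdmax$, at the cost of a few extra lines; what the citation buys the authors is brevity and access to the sharper family of results in that reference (the cited theorem is part of a broader set of Schatten-norm inequalities, of which the spectral-norm case used here is the most elementary instance). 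Your remark about the order of the two estimates is apt: applying Cauchy--Schwarz before the triangle inequality, or attempting to bound $\eigmax(A^\rmT A)$ by block-diagonalizing directly, would leave cross terms that do not collapse to the clean sum $\sum_{i}\sum_{j}\svdmax(A_{ij})^2$.
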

\begin{proof}
    See Theorem 1 of \cite{bhatia1990norm}.
\end{proof}

\section*{Appendix B: Discrete-Time Stability Theory}

% \subsection{Discrete-Time Stability Theory}
% \labelsubseccounter{subsec: appendix lyapunov stability}
Let $f\colon\BBN_0 \times \BBR^n\to\BBR^n$ and consider the system
\begin{align}
    x_{k+1} = f(k,x_k),
    \quad
    % x_{k_0} = x_0,
    % \quad 
        % k \ge 0, 
    \label{eq:NonLinSys}
\end{align}
where, for all $k \ge 0$, $x_k \in \BBR^n$, and $f(k,\cdot)$ is continuous.

\begin{defin}
    \label{defn: equilibrium}
    For $x_{\rm eq} \in \BBR^n$, $x_k \equiv x_{\rm eq}$ is an \textit{equilibrium} of system \eqref{eq:NonLinSys} if, for all $k \ge 0$, $f(k,0) = 0$.
\end{defin}

The following definition is given by Definition 13.7 in \cite[pp. 783, 784]{Haddad2008}.

\begin{defin}
\label{defin: Lyaponov Stabilities}
If $x_k \equiv 0$ is an equilibrium of \eqref{eq:NonLinSys}, then define the following:
\begin{enumerate}[leftmargin=*]
    \item[\textit{i)}] The equilibrium $x_k \equiv 0$ of \eqref{eq:NonLinSys} is \textit{Lyapunov stable} if, 
    for all $\varepsilon > 0$, $k_0  \ge 0$, and $x_{k_0} \in \BBR^n$, 
    there exists $\delta > 0$ such that, if  
    $\| x_{k_0} \| < \delta$, then, for all $k \ge k_0$, $\| x_k \| < \varepsilon$.
    \item[\textit{ii)}] The equilibrium $x_k \equiv 0$ of \eqref{eq:NonLinSys} is \textit{uniformly Lyapunov stable} if, 
    for all $\varepsilon > 0$, 
    there exists $\delta > 0$ such that, for all $k_0  \ge 0$ and $x_{k_0} \in \BBR^n$,
    if $\| x_{k_0} \| < \delta$, then, for all $k \ge k_0$, $\| x_k \| < \varepsilon$.
    \item[\textit{iii)}] The equilibrium $x_k \equiv 0$ of \eqref{eq:NonLinSys} is \textit{globally asymptotically  stable} if it is Lyapunov stable and, for all $k_0  \ge 0$ and $x_{k_0} \in \BBR^n$, $\lim_{k \to \infty} x_k = 0$.
    \item[\textit{iv)}] The equilibrium $x_k \equiv 0$ of \eqref{eq:NonLinSys} is \textit{globally uniformly exponentially stable} if there exist $\alpha > 0$ and $\beta>1$ such that, 
    for all $k_0  \ge 0$, $x_{k_0}  \in \BBR^n$, and $k \ge k_0$, $\| x_k \| \le \alpha \| x_{k_0} \| \beta^{-k}$.
\end{enumerate}
\end{defin}

The following result is a specialization of Theorem 13.11 given in \cite[pp. 784, 785]{Haddad2008}.

\begin{theo} 
\label{theo: lyapunov stability}
Let $ \SD \subset \BBR^n$ be an open set such that $0 \in \SD$ and let $x_k \equiv 0$ be an equilibrium of \eqref{eq:NonLinSys}.
Furthermore, let $V\colon \BBN_0 \times \BBR \to \BBR$ and assume that, for all $k \in \BBN_0$, $V(k,\cdot)$ is continuous.
Then the following statements hold:

\begin{enumerate}[leftmargin=*]
    \myitem{\textit{i)}}\label{item: theo lyp stability} 
    If there exists $\alpha > 0$ such that, for all $k \ge 0$ and $x \in \SD$,
    \begin{gather}
        V(k,0) = 0, \label{eq:LS_Cond1} \\
        \alpha \Vert x \Vert^2 \le  V(k,x), \label{eq:LS_Cond2} \\
        V(k+1, f(k,x)) - V(k,x) \le 0, \label{eq:LS_Cond3} 
    \end{gather}
    then the equilibrium $x_k \equiv 0$ of \eqref{eq:NonLinSys} is {Lyapunov stable}.

    \myitem{\textit{ii)}}\label{item: theo unif lyp stability} 
    If there exist $\alpha > 0$, and $\beta > 0$ such that, for all $k \ge 0$ and $x \in \SD$, \eqref{eq:LS_Cond2}, \eqref{eq:LS_Cond3}, and
    \begin{gather}
        V(k,x) \le \beta \Vert x \Vert ^2, \label{eq:ULS_Cond1}  
    \end{gather}
    then the equilibrium $x_k \equiv 0$ of \eqref{eq:NonLinSys} is {uniformly Lyapunov stable}.

    \myitem{\textit{iii)}}\label{item: theo asym stability} 
    If there exist $\alpha > 0$, and $\gamma > 0$ such that, for all $k \ge 0$ and $x \in \BBR^n$, \eqref{eq:LS_Cond1}, \eqref{eq:LS_Cond2} and
    \begin{gather}
        V(k+1, f(k,x)) - V(k,x) \le -\gamma \Vert x \Vert^2, \label{eq:ALS_Cond3}  
    \end{gather}
    then the equilibrium $x_k \equiv 0$ of \eqref{eq:NonLinSys} is {globally asymptotically stable}.
    \myitem{\textit{iv)}}\label{item: theo geo stability} 
    If there exist $\alpha > 0$, $\beta > 0$, $\gamma > 0$ such that, for all $k \ge 0$ and $x \in \BBR^n$, \eqref{eq:LS_Cond2}, \eqref{eq:ULS_Cond1}, and \eqref{eq:ALS_Cond3},
    then the equilibrium $x_k \equiv 0$ of \eqref{eq:NonLinSys} is {globally uniformly exponentially stable}.
\end{enumerate}
\end{theo}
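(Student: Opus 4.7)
The plan is to prove the four parts in order, each building on the previous, using the classical discrete-time Lyapunov argument: propagate bounds on $V(k,x_k)$ along trajectories using \eqref{eq:LS_Cond3} or \eqref{eq:ALS_Cond3}, then sandwich $\|x_k\|$ between the lower bound \eqref{eq:LS_Cond2} and, where needed, the upper bound \eqref{eq:ULS_Cond1}. Nothing here is specific to the RLS setting; this is the standard textbook argument, and the stated theorem is flagged as a specialization of a known result from \cite{Haddad2008}, so the proof is essentially bookkeeping.

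For part \ref{item: theo lyp stability}, fix $\varepsilon>0$, $k_0\ge 0$, and $x_{k_0}\in\BBR^n$. Since $V(k_0,\cdot)$ is continuous with $V(k_0,0)=0$ by \eqref{eq:LS_Cond1}, there exists $\delta>0$ such that $\|x_{k_0}\|<\delta$ implies $V(k_0,x_{k_0})<\alpha\varepsilon^2$. Iterating \eqref{eq:LS_Cond3} along the trajectory of \eqref{eq:NonLinSys} yields $V(k,x_k)\le V(k_0,x_{k_0})<\alpha\varepsilon^2$ for all $k\ge k_0$, and then \eqref{eq:LS_Cond2} gives $\|x_k\|^2\le V(k,x_k)/\alpha<\varepsilon^2$. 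For part \ref{item: theo unif lyp stability}, the only change is that \eqref{eq:ULS_Cond1} supplies an upper bound on $V(k_0,x_{k_0})$ that is uniform in $k_0$, so one may choose $\delta$ with $\beta\delta^2<\alpha\varepsilon^2$, e.g., $\delta=\sqrt{\alpha/\beta}\,\varepsilon$, independently of $k_0$, and repeat the previous chain.

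For part \ref{item: theo asym stability}, observe that \eqref{eq:ALS_Cond3} implies \eqref{eq:LS_Cond3}, so Lyapunov stability already follows from part \ref{item: theo lyp stability}. To obtain $x_k\to 0$ for arbitrary $x_{k_0}$, telescope \eqref{eq:ALS_Cond3} from $k_0$ to $K$ and use $V\ge 0$ to get
\begin{align}
\gamma\sum_{k=k_0}^{K}\|x_k\|^2 \le V(k_0,x_{k_0}) - V(K+1,x_{K+1}) \le V(k_0,x_{k_0}),
\end{align}
so $\sum_{k\ge k_0}\|x_k\|^2<\infty$ and hence $\|x_k\|\to 0$. For part \ref{item: theo geo stability}, combine \eqref{eq:ULS_Cond1} with \eqref{eq:ALS_Cond3} to get the geometric contraction
\begin{align}
V(k+1,x_{k+1}) \le V(k,x_k) - \gamma\|x_k\|^2 \le \left(1-\tfrac{\gamma}{\beta}\right)V(k,x_k).
\end{align}
Positivity of $V$ together with \eqref{eq:LS_Cond2} and \eqref{eq:ALS_Cond3} forces $\gamma\le\beta$, so $\rho\triangleq 1-\gamma/\beta\in[0,1)$. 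Iterating and then sandwiching with \eqref{eq:LS_Cond2} and \eqref{eq:ULS_Cond1} gives $\|x_k\|\le\sqrt{\beta/\alpha}\,\rho^{(k-k_0)/2}\|x_{k_0}\|$, which is the desired exponential decay (in the form of Definition \ref{defin: Lyaponov Stabilities}, with the $k_0$ dependence absorbed into the constant, as is standard for autonomous-style exponential bounds).

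The main obstacle, such as there is one, is part \ref{item: theo geo stability}: one must verify that $\gamma\le\beta$ is automatic from the remaining hypotheses, so that the contraction factor $\rho$ is genuinely in $[0,1)$, and one must reconcile the resulting bound $\|x_k\|\le c\,\rho^{(k-k_0)/2}\|x_{k_0}\|$ with the $\alpha\|x_{k_0}\|\beta^{-k}$ form in Definition \ref{defin: Lyaponov Stabilities} (which, to be consistent with the continuous $k_0$-dependence implicit in the definition, should be read as permitting an overall constant depending on the decay rate). The remaining steps are routine continuity/monotonicity arguments.
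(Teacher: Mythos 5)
The paper does not actually prove Theorem \ref{theo: lyapunov stability}: it is stated as a specialization of Theorem 13.11 of \cite{Haddad2008}, with the proof deferred entirely to that reference. Your proposal supplies the standard direct argument, and it is essentially correct; it is the same argument used in the cited source, so there is no genuinely different route to compare.

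Two points should be tightened. First, in parts \textit{i)} and \textit{ii)} the hypotheses \eqref{eq:LS_Cond2} and \eqref{eq:LS_Cond3} are only assumed for $x \in \SD$, so the chain ``iterate \eqref{eq:LS_Cond3} to get $V(k,x_k) \le V(k_0,x_{k_0})$, then apply \eqref{eq:LS_Cond2} at $x_k$'' is only licensed while the trajectory remains in $\SD$. The usual repair is to shrink $\varepsilon$ so that $\bar{\mathcal{B}}_\varepsilon(0) \subset \SD$ and argue by induction that the trajectory never leaves this ball; even then one must be careful, because in discrete time $x_{k+1} = f(k,x_k)$ can in principle land outside $\SD$ before \eqref{eq:LS_Cond2} can be invoked at $x_{k+1}$. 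This is immaterial for the way the theorem is used in Appendices D and E, where $V(k,\tilde{\theta}) = \tilde{\theta}^\rmT P_k^{-1} \tilde{\theta}$ and all bounds hold on all of $\BBR^n$, but as written for a general open $\SD$ your one-line iteration is incomplete. Second, part \textit{iv)} is right, including the observation that $\gamma \le \beta$ is forced (for any $x \ne 0$, $0 \le V(k+1,f(k,x)) \le (1-\gamma/\beta)V(k,x)$ with $V(k,x) > 0$), and the bound $\Vert x_k \Vert \le \sqrt{\beta/\alpha}\, \rho^{(k-k_0)/2} \Vert x_{k_0} \Vert$ is the correct form of uniform exponential decay; the mismatch you flag against part \textit{iv)} of Definition \ref{defin: Lyaponov Stabilities} is a typo in that definition ($\beta^{-k}$ should read $\beta^{-(k-k_0)}$), since with $\beta^{-k}$ and constants independent of $k_0$ the inequality at $k = k_0$ is unsatisfiable for large $k_0$ unless the state is zero; the $k_0$ dependence cannot simply be absorbed into a uniform constant. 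The telescoping argument in part \textit{iii)} and the uniform choice of $\delta$ in part \textit{ii)} are fine.
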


The following definition is given by 
Definition 13.9 in \cite[pp. 789, 790]{Haddad2008}.

\begin{defin}
\label{defin: UUB}
    The system \eqref{eq:NonLinSys} is \textit{globally uniformly ultimately bounded with bound $\varepsilon$} if, for all $\delta \in (0,\infty)$, there exists $K>0$ such that, for all $k_0 \ge 0$ and $x_{k_0} \in \BBR^n$, 
    if $\Vert x_{k_0} \Vert < \delta$, then, for all $k \ge k_0 + K$, $\Vert x_k \Vert < \varepsilon$.
\end{defin}

The following result is a specialization of Corollary 13.5 given in \cite[pp. 790, 791]{Haddad2008}.

\begin{theo}
\label{theo: UUB}
    Let $V\colon \BBN_0 \times \BBR^n \to \BBR$ and assume that, for all $k \in \BBN_0$, $V(k,\cdot)$ is continuous.
    Furthermore, assume that, for all $k \ge 0$ and $x \in \BBR^n$,
    \begin{align}
        \alpha \Vert x \Vert^2 \le  V(k,x) \le \beta \Vert x \Vert ^2.
    \end{align}
    Furthermore, assume there exist $\mu > 0$ and a continuous function $W\colon\BBR^n \rightarrow \BBR$ such that, for all $k \ge 0$ and $\Vert x \Vert > \mu$, $W(x) > 0$ and
    \begin{align}
        V(k+1, f(k,x)) - V(k,x) \le -W(x).
    \end{align}
    Finally, assume that $\sup_{(k,x) \in \BBN_0 \times \bar{\mathcal{B}}_{\mu}(0)} V(k+1,f(k,x))$ exists, where $\bar{\mathcal{B}}_{\mu}(0) \triangleq \{x \in \BBR^n\colon \Vert x \Vert \le \mu \}$. Then, for all $\varepsilon$ such that\footnote{Note that Corollary 13.5 of \cite{Haddad2008} writes $\sup_{(k,x) \in \cdots} V(k,f(k,x))$ which is a typo that has been verified with the author W. M. Haddad of \cite{Haddad2008}.}
    \begin{align}
        \varepsilon \ge \max \Big\{\mu , \sqrt{\sup_{(k,x) \in \BBN_0 \times \bar{\mathcal{B}}_{\mu}(0)} V(k+1,f(k,x))} \Big\},
    \end{align}
    the system \eqref{eq:NonLinSys} is globally uniformly ultimately bounded with bound $\varepsilon$. 
    %
    % Furthermore, $\limsup_{k \rightarrow \infty} \Vert x_k \Vert \le \varepsilon$.
\end{theo}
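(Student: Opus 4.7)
The plan is to prove this by a standard discrete-time Lyapunov sublevel-set argument; indeed the statement is explicitly flagged as a specialization of Corollary 13.5 in \cite{Haddad2008}, so one immediate route is simply to invoke that result. For a self-contained proof, I would first denote $\eta \triangleq \sup_{(k,x)\in\BBN_0\times \bar{\mathcal{B}}_\mu(0)} V(k+1,f(k,x))$, which exists by assumption, so the hypothesis on $\varepsilon$ reads $\varepsilon \ge \max\{\mu,\sqrt{\eta}\}$. The core construction is the time-indexed sublevel set $\Omega_k \triangleq \{x\in\BBR^n : V(k,x)\le \eta^\star\}$ with $\eta^\star \triangleq \max\{\alpha\mu^2,\eta\}$, chosen so that the quadratic sandwich $\alpha\Vert x\Vert^2 \le V(k,x)\le \beta\Vert x\Vert^2$ yields $\Omega_k \subseteq \bar{\mathcal{B}}_\varepsilon(0)$ under the assumed bound on $\varepsilon$.

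The two key steps are positive invariance of the family $\{\Omega_k\}_{k \ge 0}$ and a uniform finite-entry-time bound. For invariance I would split into two cases: if $\Vert x_k\Vert\le\mu$, then $V(k+1,x_{k+1})\le\eta\le\eta^\star$ by the very definition of $\eta$; if instead $\Vert x_k\Vert>\mu$ with $x_k\in\Omega_k$, then the decrease hypothesis gives $V(k+1,x_{k+1})\le V(k,x_k)-W(x_k)\le\eta^\star$. For the entry time, starting from $\Vert x_{k_0}\Vert<\delta$ one has $V(k_0,x_{k_0})\le\beta\delta^2$, and as long as $x_k\notin\Omega_k$ one has $\Vert x_k\Vert>\mu$, so the Lyapunov value strictly decreases by at least $W(x_k)>0$ per step. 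Bounding $W$ from below uniformly by some $w_\star>0$ on the compact annulus $\{\mu+\epsilon'\le\Vert x\Vert\le\sqrt{\beta\delta^2/\alpha}\}$ then gives entry within $K \triangleq \lceil\beta\delta^2/w_\star\rceil$ steps, a count depending only on $\delta$ and the fixed problem data, not on $k_0$.

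The main obstacle is securing the uniform positive lower bound $w_\star$: since $\{\Vert x\Vert>\mu\}$ is open and $W$ may vanish at its boundary, one must either work with the $\epsilon'$-margin above (slightly enlarging the target set and letting $\epsilon'\downarrow 0$) or compare the Lyapunov sequence directly to a monotonically decreasing scalar recursion driven to the sublevel set in finite time. Once this technicality is handled, combining positive invariance of $\Omega_k$ with the $k_0$-independent entry-time bound delivers global uniform ultimate boundedness with bound $\varepsilon$ in the sense of Definition \ref{defin: UUB}.
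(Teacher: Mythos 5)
The paper does not actually prove this statement: it is imported, modulo the footnoted typo, from Corollary 13.5 of \cite{Haddad2008}, so there is no in-paper argument to compare yours against. Your architecture --- invariance of the time-indexed sublevel sets $\Omega_k=\{x\in\BBR^n: V(k,x)\le\eta^\star\}$ plus a $k_0$-independent finite entry time --- is the standard route by which such results are established, and your invariance step (splitting on $\Vert x_k\Vert\le\mu$, handled by the definition of $\eta$, versus $\Vert x_k\Vert>\mu$, handled by the decrease hypothesis) is correct. You are also right to flag the entry-time technicality caused by $W$ being assumed positive only on the open set $\Vert x\Vert>\mu$.

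The genuine gap is in the constants, and it sits exactly where you write ``under the assumed bound on $\varepsilon$.'' First, with your choice $\eta^\star=\max\{\alpha\mu^2,\eta\}$ the claim ``$x_k\notin\Omega_k$ implies $\Vert x_k\Vert>\mu$'' fails whenever $\beta>\alpha$: from $V(k,x_k)>\alpha\mu^2$ and $V(k,x_k)\le\beta\Vert x_k\Vert^2$ you obtain only $\Vert x_k\Vert>\sqrt{\alpha/\beta}\,\mu$, so you cannot invoke the decrease hypothesis along the pre-entry trajectory; the threshold must be $\eta^\star=\max\{\beta\mu^2,\eta\}$. Second, and more seriously, the containment $\Omega_k\subseteq\bar{\mathcal{B}}_\varepsilon(0)$ does not follow from $\varepsilon\ge\max\{\mu,\sqrt{\eta}\}$: membership $V(k,x)\le\eta^\star$ combined with $\alpha\Vert x\Vert^2\le V(k,x)$ gives only $\Vert x\Vert\le\sqrt{\eta^\star/\alpha}=\max\bigl\{\sqrt{\beta/\alpha}\,\mu,\sqrt{\eta/\alpha}\bigr\}$. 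That is the ultimate bound your construction actually delivers (it is the quadratic specialization of $\alpha^{-1}(\max\{\beta(\mu),\eta\})$ from the cited corollary), and it reduces to the displayed $\max\{\mu,\sqrt{\eta}\}$ only under the additional normalizations $\beta\le\alpha$ and $\alpha\ge1$, neither of which is a hypothesis. As written, then, the proposal proves uniform ultimate boundedness but with a larger bound than the one asserted; to close it you must either carry the factors $\sqrt{\beta/\alpha}$ and $1/\sqrt{\alpha}$ into the conclusion or justify why they can be dropped, and the phrase ``under the assumed bound on $\varepsilon$'' is precisely where that accounting is missing.
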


Next, $k \ge 0$, define $f_k \colon \BBR^n \rightarrow \BBR^n$ by, for all $x \in \BBR^n$,
\begin{align}
    f_k(x) = f(k,x).
\end{align}
Furthermore, let $N \ge 1$ and, for all $l = 0,1,\hdots,N-1$, define $f_l^N\colon \BBN_0 \times \BBR^n\to\BBR^n$ by, for all $j \ge 0$ and $x \in \BBR^n$,
\begin{align}
    f_l^N(j,x) \triangleq (f_{jN+l+N-1} \circ \cdots \circ f_{jN+l+1} \circ f_{jN+l})(x),
\end{align}
and note that, for all $j \ge 0$, $f_l^N(j,\cdot)$ is continuous. Also note that, for all $j \ge 0$,
\begin{align}
    x_{(j+1)N + l} = f^N_l(j,x_{jN+l}).
\end{align}
In other words, $f^N_l$ can be used to evolve the states of \eqref{eq:NonLinSys} at time steps $\{l,N+l,2N+l,\hdots\}$.
Finally, for all $l = 0,1,\hdots,N-1$ and $j \ge 0$, define $x^N_{l,j} \in \BBR^n$ by 
\begin{align}
    x^N_{l,j} \triangleq x_{jN+l},
\end{align}
which gives the system
\begin{align}
    \label{eqn: nonlinsys N step}
    x^N_{l,j+1} = f_l^N(j,x^N_{l,j}).
\end{align}

\begin{lem}
\label{lem: N step UUB}
    Let $N \ge 1$, and assume that, for all $l = 0,\hdots,N-1$, the system \eqref{eqn: nonlinsys N step} is globally uniformly ultimately bounded with bound $\varepsilon$.
    Then, \eqref{eq:NonLinSys} is globally uniformly ultimately bounded with bound $\varepsilon$.
\end{lem}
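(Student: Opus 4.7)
The strategy exploits the elementary bookkeeping $x_k = x^N_{l,j}$ whenever $k = jN + l$ with $l \in \{0,\ldots,N-1\}$ and $j \ge 0$: since the original trajectory decomposes exactly into the $N$ subsampled trajectories (one per residue class of $k$ modulo $N$), uniform ultimate boundedness of each subsampled trajectory should cover its residue class, and patching the $N$ individual bounds together should yield the conclusion.

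Fix $\delta > 0$ and an arbitrary pair $(k_0, x_{k_0})$ with $\Vert x_{k_0} \Vert < \delta$, and write $k_0 = j_0 N + l_0$ so that $x^N_{l_0, j_0} = x_{k_0}$. The residue $l = l_0$ is handled directly: GUUB of the $l_0$-th subsampled system with initial bound $\delta$ produces a constant $K_{l_0}(\delta)$, uniform in $j_0$, such that $\Vert x_k \Vert = \Vert x^N_{l_0, j} \Vert < \varepsilon$ for every $k \equiv l_0 \pmod{N}$ with $k \ge k_0 + N K_{l_0}(\delta)$. For each $l \ne l_0$, let $j_l$ be the smallest integer with $k_l := j_l N + l \ge k_0$; then $k_l \in \{k_0 + 1, \ldots, k_0 + N - 1\}$ and $x^N_{l, j_l} = x_{k_l}$ is obtained from $x_{k_0}$ by at most $N - 1$ compositions of the continuous maps $f(k,\cdot)$. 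Bound $\Vert x^N_{l, j_l} \Vert \le \tilde\delta(\delta)$ uniformly in $k_0$, and apply GUUB of the $l$-th subsampled system with initial bound $\tilde\delta$ to obtain $K_l(\tilde\delta)$ such that $\Vert x_k \Vert < \varepsilon$ for all $k \equiv l \pmod{N}$ with $k \ge k_0 + N(K_l(\tilde\delta) + 1)$.

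Setting $K := N\bigl(1 + \max_{l \in \{0,\ldots,N-1\}} K_l(\tilde\delta)\bigr)$, every $k \ge k_0 + K$ lies in some residue class covered by the above, yielding $\Vert x_k \Vert < \varepsilon$, which is exactly the definition of GUUB of \eqref{eq:NonLinSys} with bound $\varepsilon$. The main obstacle is supplying the uniform transient bound $\tilde\delta$: pointwise continuity of $f(k, \cdot)$ alone only yields a bound depending on $k_0$, so one needs a supplementary hypothesis such as $\sup_{k \ge 0,\, \Vert x \Vert \le \delta} \Vert f(k, x) \Vert < \infty$ for every $\delta > 0$, or one exploits structural features of the application. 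In the context of \eqref{eqn: theta bar update}, the representation $\check\theta_{k+1} = M_k(\check\theta_k - \zeta_k)$ with uniformly bounded $M_k$ and $\zeta_k$ (per the bounds derived in the proof of Theorem \ref{theo: GFRLS UUB}) renders this transient step routine, so I expect the full proof in Appendix F to proceed along exactly these lines.
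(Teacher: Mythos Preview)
Your approach is essentially identical to the paper's: decompose the trajectory into the $N$ residue classes modulo $N$, handle the class of $k_0$ directly via the GUUB hypothesis, and for the remaining classes bound the first state $x_{k_0+i}$ ($i=1,\ldots,N-1$) and invoke the GUUB hypothesis again with that bound as the new $\delta$; then take $K$ as $N$ times one plus the maximum of the resulting waiting times.

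You are in fact more careful than the paper on one point. The paper simply writes ``there exists $\delta_i\in(0,\infty)$ such that $\Vert x_{k_0+i}\Vert<\delta_i$'' and proceeds, but as you observe, this $\delta_i$ a priori depends on the particular $(k_0,x_{k_0})$, so the resulting $K$ need not be uniform in $k_0$ as Definition~\ref{defin: UUB} demands. Your diagnosis is exactly right: under the bare hypotheses of the lemma (only continuity of $f(k,\cdot)$) the uniform transient bound $\tilde\delta(\delta)$ is not available, and the paper's proof as written has the same gap. Your proposed remedy---noting that in the intended application $\check\theta_{k+1}=M_k(\check\theta_k-\zeta_k)$ with $\Vert M_k\Vert\le 1+b\bar\beta$ and $\Vert\zeta_k\Vert\le\zeta$ uniformly in $k$, so that $\tilde\delta(\delta)=(1+b\bar\beta)^{N-1}(\delta+(N-1)\zeta)$ works---is precisely what is needed to close the argument, and is consistent with how the paper uses the lemma in Appendix~F.
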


\begin{proof}
    Let $\delta_0 \in (0,\infty)$, let $k_0 \ge 0$, and let $x_{k_0} \in \BBR^n$. Assume that $\Vert x_{k_0} \Vert < \delta_0$. 
    Note that there exist $j_0 \ge 0$ and $l_0 \in \{0,\hdots,N-1\}$ such that $k_0 \triangleq j_0 N + l_0$, and it follows from assumption that the system $x^N_{l_0,j+1} = f^N(j,x^N_{l_0,j})$ is globally uniformly ultimately bounded with bound $\varepsilon$. 
    Hence, there exists $J_0 \ge 0$ such that, for all $j \ge j_0 + J_0$, $\Vert x_{jN + l_0} \Vert < \varepsilon$.
    Equivalently, for all $j \ge J_0$, $\Vert x_{k_0 + jN} \Vert < \varepsilon$.

    Next, for all $i = 1,2,\hdots,N-1$, note that there exists $\delta_i \in (0,\infty)$ such that $\Vert x_{k_0+i} \Vert < \delta_i$. 
    By similar reasoning as before, there exists $J_i \ge 0$ such that, for all $j \ge J_i$, $\Vert x_{k_0 + i + jN} \Vert < \varepsilon$.

    Finally, let $K \triangleq (\max \{ J_0, \hdots , J_{N-1}\} +1 )N$. Note that, for all $k \ge k_0 + K$, there exist $i \in \{0,1,\hdots,N-1\}$ and $j \ge J_i$ such that $k = k_0 + i+ jN$, and hence $\Vert x_k \Vert < \varepsilon$.
\end{proof}

\section*{Appendix C: Proof of Theorem \ref{theo: GFRLS}}
% \subsection{Proof of Theorem \ref{theo: GFRLS}}
% \labelsubseccounter{subsec: GFRLS cost minimizer proof}

\begin{proof}[Proof of Theorem \ref{theo: GFRLS}]
First note that it follows from \eqref{eqn: GFRLS Cost} that $J_0(\hat{\theta})$ can be written as $J_0(\hat{\theta}) = \hat{\theta}^\rmT H_0 \hat{\theta} + 2b_0^\rmT \hat{\theta} + c_0$,
where
\begin{align*}
    H_0 &\triangleq \phi_0^\rmT \Gamma_0^{-1} \phi_0 + P_0^{-1} - F_0, \\
    b_0 &\triangleq -\phi_0^\rmT \Gamma_0^{-1} y_0 - (P_0^{-1} - F_0) \theta_0,  \\
    c_0 &\triangleq y_0^\rmT \Gamma_0^{-1} y_0 + \theta_0^\rmT (P_0^{-1}-F_0) \theta_0.
\end{align*}
Defining $P_1 \triangleq H_0^{-1}$, it follows that \eqref{eqn: GFRLS Pinv Update} holds for $k = 0$.
Furthermore, it follows from \eqref{eqn: Pinv - F is pos def GFRLS} with $k = 0$ that $P_0^{-1} - F_k \succ 0$, and hence $H_0 \succeq P_0^{-1} - F_k \succ 0$.
Therefore, Lemma \ref{lem: quadratic cost minimizer} implies that 
$J_0$ has the unique minimizer $\theta_1 \in \BBR^n$ given by
\begin{align*}
    & \theta_1 = -H_0^{-1}b_0 
    = P_1 [ \phi_0^\rmT \Gamma_0^{-1} y_0 + (P_0^{-1} - F_0) \theta_0  ] \\
    &= P_1 [ \phi_0^\rmT \Gamma_0^{-1} y_0 + (P_0^{-1} - F_0 + \phi_0^\rmT \Gamma_0^{-1} \phi_0) \theta_0 - \phi_0^\rmT \Gamma_0^{-1} \phi_0 \theta_0  ] \\
    &=  P_1 [ \phi_0^\rmT \Gamma_0^{-1} y_0 + P_1^{-1} \theta_0 - \phi_0^\rmT \Gamma_0^{-1} \phi_0 \theta_0  ] \\
    &= \theta_0  + P_1 \phi_0^\rmT \Gamma_0^{-1} (y_0 - \phi_0 \theta_0 ).
\end{align*}
Hence, \eqref{eqn: GFRLS theta Update} is satisfied for $k = 0$.

Now, let $k \ge 1$. Note that $J_k(\hat{\theta})$, given by \eqref{eqn: GFRLS Cost}, 
can be written as 
$J_k(\hat{\theta}) = \hat{\theta}^\rmT H_k \hat{\theta} + 2b_k^\rmT \hat{\theta} + c_k$,
where
\begin{align*}
    H_k &=  \sum_{i=0}^k \left(\phi_i^\rmT \Gamma_i^{-1} \phi_i - F_i \right) + P_0^{-1}, \\
    b_k &= \sum_{i=0}^k \left( -\phi_i^\rmT \Gamma_i^{-1} y_{i} + F_i \theta_i \right) - P_0^{-1} \theta_0, \\
    c_k &= \sum_{i=0}^k \left( y_i^\rmT \Gamma_i^{-1} y_i - \theta_i^\rmT F_i \theta_i \right) + \theta_0^\rmT P_0^{-1} \theta_0.
\end{align*}   
Furthermore, $H_k$ and $b_k$ can be written recursively as
\begin{align*}
    H_k = H_{k-1} - F_k + \phi_k^\rmT {\Gamma}_k^{-1} \phi_k, \\
    b_k = b_{k-1} - \phi_k^\rmT {\Gamma}_k^{-1} y_k + F_k \theta_k.
\end{align*}
Defining $P_{k+1} \triangleq H_k^{-1}$, it follows that \eqref{eqn: GFRLS Pinv Update} is satisfied. 
Furthermore, it follows from \eqref{eqn: Pinv - F is pos def GFRLS} that $H_k$ is positive definite. 
Therefore, Lemma \ref{lem: quadratic cost minimizer} implies that $J_k$ has the unique minimizer $\theta_{k+1}$ given by
\begin{align*}
    & \theta_{k+1} = -H_k^{-1} b_k = -P_{k+1}b_k \\
    &= -P_{k+1}(b_{k-1} - \phi_k^\rmT {\Gamma}_k^{-1} y_k + F_k \theta_k) \\
    &= P_{k+1}(P_k^{-1} \theta_k + \phi_k^\rmT {\Gamma}_k^{-1} y_k - F_k \theta_k) \\
    &= P_{k+1} \big[ (P_{k+1}^{-1} - \phi_k^\rmT {\Gamma}_k^{-1} \phi_k + F_k)\theta_k + \phi_k^\rmT {\Gamma}_k^{-1} y_k - F_k \theta_k \big] \\
    &= P_{k+1} \big[ P_{k+1}^{-1} \theta_k - \phi_k^\rmT {\Gamma}_k^{-1} \phi_k \theta_k + \phi_k^\rmT {\Gamma}_k^{-1} y_k \big] \\
    &= \theta_k + P_{k+1} \phi_k^\rmT {\Gamma}_k^{-1} (y_k - \phi_k \theta_k).
\end{align*}
Hence, \eqref{eqn: GFRLS theta Update} is satisfied.
\end{proof}

\section*{Appendix D: Proof of 1) and 2) of Theorem \ref{theo: GFRLS Lyapunov stability v2}}
% \subsection{Proof of statements 1) and 2) of Theorem \ref{theo: GFRLS Lyapunov stability v2}}
% \labelsubseccounter{subsec: GRFLS stability parts 1 and 2}

\begin{lem}
\label{lem: inv(Pinv - Fk) inequality}
    For all $k \ge 0$, let $F_k \succeq 0$ and assume there exists $b \in (0,\infty)$ such that $(P_k^{-1} - F_k)^{-1} \preceq b I_n$.
    Then, $P_k \preceq b I_n$.
\end{lem}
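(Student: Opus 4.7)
The plan is to exploit the fact that inversion reverses the Loewner order on positive-definite matrices, combined with the trivial bound that subtracting a positive-semidefinite matrix only decreases a symmetric matrix in the Loewner order.

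First, I would note that $F_k \succeq 0$ immediately gives
\begin{align*}
    P_k^{-1} - F_k \preceq P_k^{-1}.
\end{align*}
Next, I would observe that both sides of this inequality are positive definite: $P_k^{-1}$ is positive definite because $P_k$ is (by construction in Theorem \ref{theo: GFRLS}), and $P_k^{-1} - F_k$ is positive definite because it is assumed to have an inverse bounded above by $bI_n$ in the Loewner order (and any matrix with a positive-definite inverse is itself positive definite).

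The key step is then to invoke the order-reversing property of matrix inversion on the cone of positive-definite matrices: if $0 \prec A \preceq B$, then $B^{-1} \preceq A^{-1}$. Applying this to $0 \prec P_k^{-1} - F_k \preceq P_k^{-1}$ yields
\begin{align*}
    P_k = (P_k^{-1})^{-1} \preceq (P_k^{-1} - F_k)^{-1} \preceq bI_n,
\end{align*}
where the last inequality is the hypothesis. This chains into the desired conclusion $P_k \preceq bI_n$.

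I expect no real obstacle here; the entire content is the monotonicity of matrix inversion, which is standard. The only thing to be slightly careful about is verifying that $P_k^{-1} - F_k \succ 0$ before inverting, but this is immediate from the assumption that $(P_k^{-1} - F_k)^{-1}$ exists and is positive definite (being bounded above by $bI_n$ in the Loewner order does not by itself guarantee positivity, but the existence of the inverse together with the symmetry of $P_k^{-1} - F_k$ and the positive-definiteness of its inverse does).
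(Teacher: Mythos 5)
Your proof is correct and rests on the same fact as the paper's proof---antitonicity of matrix inversion on the positive-definite cone---merely applied once to the chain $0 \prec P_k^{-1}-F_k \preceq P_k^{-1}$, whereas the paper inverts the hypothesis to get $P_k^{-1}-F_k \succeq \tfrac{1}{b}I_n$, adds $F_k \succeq 0$, and inverts again at the end. One small caveat: your parenthetical justification that $P_k^{-1}-F_k \succ 0$ is circular as written (you cite the positive-definiteness of its inverse, which is equivalent to the claim being justified); the clean route, which the paper's own one-line proof also tacitly relies on, is that \eqref{eqn: Pinv - F is pos def GFRLS} together with Corollary \ref{cor: Pkinv - Fk is pos def GFRLS v2} gives $P_k^{-1}-F_k \succ 0$ directly.
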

\begin{proof}
    It follows from $(P_k^{-1} - F_k)^{-1} \preceq b I_n$ that $P_k^{-1} - F_k \succeq \frac{1}{b} I_n$, and hence $P_k^{-1} \succeq \frac{1}{b} I_n + F_k \succeq \frac{1}{b} I_n$. Therefore, $P_k \preceq b I_n$. 
\end{proof}

%%%%%%%%%%%%%%%%%%%%%%%%%%%%%%%%%%
For all $k \ge 0$, define $\Delta V_k \in \BBR^{n \times n}$ by
\begin{align}
\label{eqn: Delta V_k defn}
    \Delta V_k \triangleq - M_k^\rmT P_{k+1}^{-1} M_k + P_k^{-1},
\end{align}
where, for all $k \ge 0$, $M_k$ is defined in \eqref{eqn: M_k defn}.
%%%%%%%%%%%%%%%%%%%%%%%%%%%%%%%%%%%%%%

\begin{lem}
\label{lem: Delta V_k bound}
    For all $k \ge 0$,
    \begin{align}
        \label{eqn: Delta V_k bound}
        \Delta V_k 
        \succeq F_k + \frac{ \bar{\phi}_k^\rmT \bar{\phi}_k}
        {1+ \eigmax(\bar{\phi}_k \bar{\phi}_k^\rmT) \eigmax((P_k^{-1} - F_k)^{-1}) }.
    \end{align}
\end{lem}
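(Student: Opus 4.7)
The plan is to first rewrite $M_k$ in a more tractable form, then apply the matrix inversion lemma (Lemma 2) to produce an exact expression for $\Delta V_k$, and finally estimate the positive-semidefinite term that appears using standard Rayleigh-type bounds.

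First I would observe that, from \eqref{eqn: GFRLS Pinv Update} and the definition $M_k = I_n - P_{k+1}\bar\phi_k^\rmT \bar\phi_k$ in \eqref{eqn: M_k v2}, one has
\begin{align*}
M_k = P_{k+1}\bigl(P_{k+1}^{-1} - \bar\phi_k^\rmT \bar\phi_k\bigr) = P_{k+1}(P_k^{-1} - F_k),
\end{align*}
exactly as noted in the proof of Theorem~\ref{theo: GFRLS UUB}. Substituting this into \eqref{eqn: Delta V_k defn} and using symmetry of $P_k^{-1}-F_k$ yields
\begin{align*}
-M_k^\rmT P_{k+1}^{-1} M_k = -(P_k^{-1}-F_k)\,P_{k+1}\,(P_k^{-1}-F_k).
\end{align*}

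Second, I would apply \eqref{eqn: matrix inversion lemma} of Lemma~\ref{lem: matrix inversion lemma} with $A = P_k^{-1}-F_k$, $U = \bar\phi_k^\rmT$, $C = I_p$, and $V = \bar\phi_k$, which is legitimate because $P_k^{-1}-F_k \succ 0$ by Corollary~\ref{cor: Pkinv - Fk is pos def GFRLS v2}. This gives
\begin{align*}
P_{k+1} = (P_k^{-1}-F_k)^{-1} - (P_k^{-1}-F_k)^{-1}\bar\phi_k^\rmT\bigl(I_p + \bar\phi_k (P_k^{-1}-F_k)^{-1}\bar\phi_k^\rmT\bigr)^{-1}\bar\phi_k(P_k^{-1}-F_k)^{-1}.
\end{align*}
Multiplying on both sides by $P_k^{-1}-F_k$ and negating, the first term telescopes to $-(P_k^{-1}-F_k)$, so adding $P_k^{-1}$ collapses the $-F_k$ contribution into a clean $+F_k$. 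The net result is the identity
\begin{align*}
\Delta V_k = F_k + \bar\phi_k^\rmT\bigl(I_p + \bar\phi_k (P_k^{-1}-F_k)^{-1}\bar\phi_k^\rmT\bigr)^{-1}\bar\phi_k.
\end{align*}

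Third, to obtain the stated bound I would estimate the inner matrix from above. Using $(P_k^{-1}-F_k)^{-1} \preceq \eigmax((P_k^{-1}-F_k)^{-1}) I_n$ and $\bar\phi_k\bar\phi_k^\rmT \preceq \eigmax(\bar\phi_k\bar\phi_k^\rmT) I_p$, I get
\begin{align*}
I_p + \bar\phi_k (P_k^{-1}-F_k)^{-1}\bar\phi_k^\rmT \preceq \bigl(1 + \eigmax(\bar\phi_k\bar\phi_k^\rmT)\,\eigmax((P_k^{-1}-F_k)^{-1})\bigr) I_p.
\end{align*}
Inverting reverses the inequality, and conjugating by $\bar\phi_k^\rmT$ and $\bar\phi_k$ preserves it, producing the desired lower bound
\begin{align*}
\bar\phi_k^\rmT\bigl(I_p + \bar\phi_k (P_k^{-1}-F_k)^{-1}\bar\phi_k^\rmT\bigr)^{-1}\bar\phi_k \succeq \frac{\bar\phi_k^\rmT \bar\phi_k}{1 + \eigmax(\bar\phi_k\bar\phi_k^\rmT)\,\eigmax((P_k^{-1}-F_k)^{-1})},
\end{align*}
which combined with the exact identity for $\Delta V_k$ gives \eqref{eqn: Delta V_k bound}.

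The main obstacle I anticipate is getting the Woodbury step and the subsequent cancellation right so that the $-F_k$ term turns into the clean $+F_k$ appearing in the bound; the final Rayleigh-style estimate is routine once the exact identity for $\Delta V_k$ is in hand.
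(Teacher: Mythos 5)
Your proposal is correct and follows essentially the same route as the paper: both establish the exact identity $\Delta V_k = F_k + \bar\phi_k^\rmT\bigl(I_p + \bar\phi_k(P_k^{-1}-F_k)^{-1}\bar\phi_k^\rmT\bigr)^{-1}\bar\phi_k$ via the matrix inversion lemma and then bound the inner matrix by $\bigl(1+\eigmax(\bar\phi_k\bar\phi_k^\rmT)\eigmax((P_k^{-1}-F_k)^{-1})\bigr)I_p$. The only (immaterial) difference is that you apply Woodbury to $P_{k+1}$ after factoring $M_k = P_{k+1}(P_k^{-1}-F_k)$, whereas the paper expands $M_k^\rmT P_{k+1}^{-1}M_k$ directly and applies the same lemma to invert $I_p - \bar\phi_k P_{k+1}\bar\phi_k^\rmT$.
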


\begin{proof}
    Let $k \ge 0$. It follows from substituting \eqref{eqn: M_k v2} into \eqref{eqn: Delta V_k defn} that $\Delta V_k$ can be expanded as
    \begin{align}
        \Delta V_k = & - P_{k+1}^{-1} + 2\bar{\phi}_k^\rmT \bar{\phi}_k  - \bar{\phi}_k^\rmT \bar{\phi}_k P_{k+1} \bar{\phi}_k^\rmT \bar{\phi}_k + P_k^{-1}.
        \label{eqn: DeltaV temp 1}
    \end{align}
    It then follows from substituting \eqref{eqn: GFRLS Pinv Update} into \eqref{eqn: DeltaV temp 1} that
    \begin{align}
        \Delta V_k           
        &= F_k + \bar{\phi}_k^\rmT \bar{\phi}_k - \bar{\phi}_k^\rmT \bar{\phi}_k P_{k+1} \bar{\phi}_k^\rmT \bar{\phi}_k, \nonumber \\
        &=  F_k + \bar{\phi}_k^\rmT ( I_p - \bar{\phi}_k P_{k+1} \bar{\phi}_k^\rmT ) \bar{\phi}_k.
        \label{eqn: DeltaVk temp}
    \end{align}
    Next, define $G_k \in \BBR^{p \times p}$ by 
    \begin{align}
        \label{eqn: G_k defn}
        G_k \triangleq I_p + \bar{\phi}_k (P_k^{-1} - F_k)^{-1} \bar{\phi}_k^\rmT.
    \end{align}
    It follows from \eqref{eqn: Pinv - F is pos def GFRLS} and \eqref{eqn: Pinv - F is pos def GFRLS v2} that $P_k^{-1} - F_k \succ 0_{n \times n}$. 
    Therefore, $\bar{\phi}_k (P_k^{-1} - F_k)^{-1} \bar{\phi}_k^\rmT \succeq 0$. It then follows that $G_k \succeq I_p$ and hence $G_k$ is nonsingular. Next, it follows from substituting \eqref{eqn: GFRLS Pinv Update} into \eqref{eqn: G_k defn} that 
    \begin{align}
        G_k = I_p + \bar{\phi}_k [P_{k+1}^{-1} - \bar{\phi}_k^\rmT \bar{\phi}_k]^{-1} \bar{\phi}_k^\rmT.
        \label{eqn: G_k temp}
    \end{align}
    Finally, applying \eqref{eqn: matrix inversion lemma} of Lemma \ref{lem: matrix inversion lemma} to \eqref{eqn: G_k temp} gives that 
    \begin{align}
        G_k^{-1} = I_p - \bar{\phi}_k P_{k+1} \bar{\phi}_k^\rmT . \label{eqn: deltaV internal term inverse}
    \end{align}
    Substituting \eqref{eqn: deltaV internal term inverse} into \eqref{eqn: DeltaVk temp}, it follows that
    \begin{align}
        \Delta V_k = & F_k  + \bar{\phi}_k^\rmT G_k^{-1} \bar{\phi}_k.
        \label{eqn: DeltaV temp 2}
    \end{align}
    Finally, it follows from \eqref{eqn: G_k defn} that
    \begin{align}
         G_k \preceq \left[ 1+ \eigmax(\bar{\phi}_k^\rmT \bar{\phi}_k) \eigmax\left((P_k^{-1} - F_k)^{-1}\right) \right] I_p.
         \label{eqn: G_k ineq temp}
    \end{align}
    Combining \eqref{eqn: DeltaV temp 2} and \eqref{eqn: G_k ineq temp} yields \eqref{eqn: Delta V_k bound}.
\end{proof}

{\it Proof of statements 1) and 2) of Theorem \ref{theo: GFRLS Lyapunov stability v2}:}
    Define $V \colon \BBN_0 \times \BBR^n \rightarrow \BBR$ by $V(k,\tilde{\theta}) \triangleq \tilde{\theta}^\rmT P_k^{-1} \tilde{\theta}$.
    Note that, for all $k \ge 0$, 
    \begin{align}
        \label{eqn: V(k,0) = 0}
        V(k,0) = 0.
    \end{align}
    Next, from \eqref{eqn: theta tilde update}, 
    we define $f\colon \BBN_0 \times \BBR^n \rightarrow \BBR^n$ by $f(k,\tilde{\theta}) \triangleq M_k \tilde{\theta}$.
    Note that, for all $k \ge 0$ and $\tilde{\theta} \in \BBR^n$, 
    \begin{align*}
        V(k+1,f(k,\tilde{\theta})) - V(k,\tilde{\theta})  =  -\tilde{\theta}^\rmT \Delta V_k \tilde{\theta}.
    \end{align*}
    Then, for all $k \ge 0$ and $\tilde{\theta} \in \BBR^n$, it follows from Lemma \ref{lem: Delta V_k bound} and condition \ref{item: cond F PSD} that 
    \begin{align}
        \label{eqn: V diff le 0}
        V(k+1,f(k,\tilde{\theta})) - V(k,\tilde{\theta}) \le -\tilde{\theta}^\rmT F_k \tilde{\theta} \le 0.
    \end{align}
    We now prove statements \ref{item: stability statement 1} and \ref{item: stability statement 2}:
    \begin{enumerate}[leftmargin=*]
        \item[\ref{item: stability statement 1}] By Lemma \ref{lem: inv(Pinv - Fk) inequality}, conditions \ref{item: cond F PSD} and \ref{item: cond inv(Pkinv - F) upper bound} imply that, for all $k \ge 0$ and $\tilde{\theta} \in \BBR^n$,
        \begin{align}
            \label{eqn: V lower bound}
            \frac{1}{b} \Vert \tilde{\theta} \Vert^2 \le V(k,\tilde{\theta}).
        \end{align}
        Equations \eqref{eqn: V(k,0) = 0}, \eqref{eqn: V diff le 0}, and \eqref{eqn: V lower bound} imply that \eqref{eq:LS_Cond1}, \eqref{eq:LS_Cond2}, and \eqref{eq:LS_Cond3} are satisfied. It then follows from part \ref{item: theo lyp stability} of Theorem \ref{theo: lyapunov stability}  that the equilibrium $\tilde{\theta}_k \equiv 0$ of \eqref{eqn: theta tilde update} is Lyapunov stable.
        \item[\ref{item: stability statement 2}] \ref{item: cond Pk lower bound} further implies that, for all $k \ge 0$ and $\tilde{\theta} \in \BBR^n$,
        \begin{align}
            \label{eqn: V upper bound}
            V(k,\tilde{\theta}) \le \frac{1}{a} \Vert \tilde{\theta} \Vert^2.
        \end{align}
        Equations \eqref{eqn: V diff le 0}, \eqref{eqn: V lower bound}, and \eqref{eqn: V upper bound} imply that \eqref{eq:LS_Cond2}, \eqref{eq:LS_Cond3}, and \eqref{eq:ULS_Cond1} are satisfied. By part \ref{item: theo unif lyp stability} of Theorem \ref{theo: lyapunov stability}, it follows that the equilibrium $\tilde{\theta}_k \equiv 0$ of \eqref{eqn: theta tilde update} is uniformly Lyapunov stable. \hfill {\mbox{$\blacksquare$}}
    \end{enumerate}

\section*{Appendix E: Proof of 3) and 4) of Theorem \ref{theo: GFRLS Lyapunov stability v2}}
% \subsection{Proof of Theorem \ref{theo: GFRLS Lyapunov stability v2}, Parts 3) and 4)}
% \labelsubseccounter{subsec: GRFLS stability parts 3 and 4}
For all $k \ge 0$ and $N \ge 1$, define $\Delta^N V_k \in \BBR^{n \times n}$ by
\begin{align}
\label{eqn: DeltaN V_k defn}
    \Delta^N V_k \triangleq - M_k^\rmT \cdots M_{k+N-1}^\rmT P_{k+N}^{-1} M_{k+N-1} \cdots M_{k} + P_k^{-1}.
\end{align}
Next, all $k \ge 0$, $i \ge 1$, define $W_{k,i} \in \BBR^{p \times p}$ by
\begin{align}
    \label{eqn: W_k,(i,j) defn}
    W_{k,i} &\triangleq \bar{\phi}_{k+i} P_{k+1} \bar{\phi}_k^\rmT.
\end{align}
For all $k \ge 0$, define $\Phi_{k,1} \triangleq \bar{\phi}_k$, $\Psi_{k,1} \triangleq \bar{\phi}_k$ and $\mathcal{W}_{k,N} = I_p$. Furthermore, for all $k \ge 0$ and $N \ge 2$, define $\Phi_{k,N} \in \BBR^{Np \times n}$, $\Psi_{k,N} \in \BBR^{Np \times Np}$, and $\mathcal{W}_{k,N} \in \BBR^{Np \times Np}$ by
\begin{align}
    \Phi_{k,N} &\triangleq \begin{bmatrix}
        \bar{\phi}_{k} \\ \vdots \\ \bar{\phi}_{k+N-1}
    \end{bmatrix},
    \\
    \label{eqn: Psi defn}
    \Psi_{k,N} &\triangleq \begin{bmatrix}
        \bar{\phi}_k
        \\
        \bar{\phi}_{k+1} M_k
        \\
        \vdots
        \\
        \bar{\phi}_{k+N-1} M_{k+N-2} \cdots M_{k+1} M_{k}
    \end{bmatrix},
    \\
    \mathcal{W}_{k,N} &\triangleq \begin{bmatrix}
        I_p & 0 & 0 & \cdots & 0  \\
        W_{k,1} & I_p & 0 & \cdots  & 0 \\
        W_{k,2} & W_{k+1,1} & I_p & \cdots  & 0 \\
        \vdots & \vdots & \vdots & \ddots  & \vdots \\
        W_{k,N-1} &  W_{k+1,N-2} &  W_{k+2,N-3} & \cdots & I_p
    \end{bmatrix} .
\end{align}

\begin{lem}
\label{lem: W Psi = Phi}
    For all $k \ge 0$ and $N \ge 1$,
    \begin{align}
    \label{eqn: W Psi = Phi}
        \mathcal{W}_{k,N} 
        \Psi_{k,N}
        =
        \Phi_{k,N},
    \end{align}
\end{lem}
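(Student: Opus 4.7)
The plan is to verify the identity row-by-row. For fixed $k \ge 0$ and $N \ge 1$, let $A_j \triangleq M_{k+j-1} \cdots M_k$ for $j \ge 1$ with the convention $A_0 \triangleq I_n$, so that by \eqref{eqn: Psi defn} the $j$-th block row of $\Psi_{k,N}$ is $\bar{\phi}_{k+j} A_j$. Since $\mathcal{W}_{k,N}$ is block lower triangular with $I_p$ on the diagonal and entry $W_{k+j,\,i-j}$ in block position $(i,j)$ for $j<i$, the $i$-th block row of $\mathcal{W}_{k,N}\Psi_{k,N}$ reads
\[
\sum_{j=0}^{i-1} W_{k+j,\,i-j}\, \bar{\phi}_{k+j} A_j \;+\; \bar{\phi}_{k+i} A_i.
\]

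Substituting the definition \eqref{eqn: W_k,(i,j) defn}, which gives $W_{k+j,\,i-j} = \bar{\phi}_{k+i}\, P_{k+j+1}\, \bar{\phi}_{k+j}^{\rmT}$, one can factor $\bar{\phi}_{k+i}$ out on the left of every term. The proof then reduces to showing that the bracketed $n\times n$ expression
\[
\sum_{j=0}^{i-1} P_{k+j+1}\, \bar{\phi}_{k+j}^{\rmT}\, \bar{\phi}_{k+j}\, A_j \;+\; A_i
\]
equals $I_n$. This is the heart of the argument and the only place where the specific form of $M_k$ is used.

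The key step is to rewrite \eqref{eqn: M_k v2} as $P_{k+j+1}\bar{\phi}_{k+j}^{\rmT}\bar{\phi}_{k+j} = I_n - M_{k+j}$, so that each summand collapses to $(I_n - M_{k+j})A_j = A_j - M_{k+j}A_j = A_j - A_{j+1}$. The sum then telescopes to $A_0 - A_i = I_n - A_i$, and adding the trailing $A_i$ yields $I_n$. Hence the $i$-th block row of $\mathcal{W}_{k,N}\Psi_{k,N}$ equals $\bar{\phi}_{k+i}$, which matches the $i$-th block row of $\Phi_{k,N}$; since $i \in \{0,\ldots,N-1\}$ was arbitrary, \eqref{eqn: W Psi = Phi} follows.

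I do not expect any serious obstacle: once the indexing is written out, the identity is essentially a bookkeeping exercise that exploits the special form $M_{k+j} = I_n - P_{k+j+1}\bar{\phi}_{k+j}^{\rmT}\bar{\phi}_{k+j}$ to generate a telescoping sum. The only mildly delicate point is keeping the two indexing conventions straight (the column index $j$ in $\mathcal{W}_{k,N}$ versus the inner index $i-j$ appearing inside $W$), and making sure the empty-product convention $A_0 = I_n$ handles the boundary case $i = 0$ uniformly (in that case, the sum is empty and the row simply reads $\bar{\phi}_k A_0 = \bar{\phi}_k$).
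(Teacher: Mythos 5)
Your proof is correct and follows essentially the same route as the paper: both arguments reduce each block row of $\mathcal{W}_{k,N}\Psi_{k,N}$ to a telescoping sum via the identity $P_{k+j+1}\bar{\phi}_{k+j}^{\rmT}\bar{\phi}_{k+j} = I_n - M_{k+j}$. Your explicit row-index bookkeeping with the products $A_j$ is just a cleaner packaging of the paper's computation leading to its equation \eqref{eqn: W Psi = Phi prep}.
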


\begin{proof}
    For all $k \ge 0$, and $N = 1$, \eqref{eqn: W Psi = Phi} simplifies to $\bar{\phi}_k = \bar{\phi}_k$. Next, note that, for all $k \ge 0$ and $N \ge 2$, $\mathcal{W}_{k,N} \Psi_{k,N}$ can be written as
    \begin{align}
    \label{eqn: W Psi expanded}
        \mathcal{W}_{k,N} \Psi_{k,N} = \begin{bmatrix}
            \bar{\phi}_k
            \\
            \begin{bmatrix}
                W_{k,1} & I_p
            \end{bmatrix}
            \Psi_{k,2}
            \\
            \begin{bmatrix}
                W_{k,2} & W_{k+1,1} & I_p
            \end{bmatrix}
            \Psi_{k,3}
            \\
            \vdots
            \\
            \begin{bmatrix}
            W_{k,N-1} & 
            %W_{k+1,N-2} & 
            \cdots & W_{k+N-2,1} & I_{p}
            \end{bmatrix} 
            \Psi_{k,N}
        \end{bmatrix}.
    \end{align}
    Next, \eqref{eqn: W_k,(i,j) defn} implies that, for all $N \ge 2$ and $0 \le i \le N-2$,
    \begin{align*}
        W_{k+i,N-1-i} \bar{\phi}_{k+i} 
        &= \bar{\phi}_{k+N-1} P_{k+i+1} \bar{\phi}_{k+i}^\rmT \bar{\phi}_{k+i} 
        \\
        &= \bar{\phi}_{k+N-1}(I_p - M_{k+i}) .
    \end{align*}
    Substituting this identity into the left-hand side of \eqref{eqn: W Psi temp identity}, it follows that, for all $N \ge 2$,
    \begin{align*}
        &\begin{bmatrix}
            W_{k,N-1} & W_{k+1,N-2} & \cdots & W_{k+N-2,1}
        \end{bmatrix} \Psi_{k,N-1} 
        \\
        &= W_{k,N-1} \bar{\phi}_k + \sum_{i=0}^{N-2} W_{k+i,N-1-i} \bar{\phi}_{k+i} M_{k+i-1} \cdots M_k
        \\
        &= \bar{\phi}_{k+N-1} \big( I_p - M_k + \sum_{i=1}^{N-2} (I - M_{k+i}) M_{k+i-1} \cdots M_k \big).
    \end{align*}
    Note that this forms a telescoping series, which, by cancellation of successive terms, simplifies to
    \begin{align}
    \label{eqn: W Psi temp identity}
        \begin{bmatrix}
            W_{k,N-1} & W_{k+1,N-2} & \cdots & W_{k+N-2,1}
        \end{bmatrix} 
         \Psi_{k,N-1} \nonumber
        \\
        \hspace{10pt} = \bar{\phi}_{k+N-1}( I_p - M_{k+N-2} \cdots M_{k+1} M_k). 
    \end{align}
    Next, adding $\bar{\phi}_{k+N-1} M_{k+N-2} \cdots M_{k+1} M_k$ to both sides of \eqref{eqn: W Psi temp identity} implies that, for all $N \ge 2$,
    \begin{align}
    \label{eqn: W Psi = Phi prep}
        \begin{bmatrix}
            W_{k,N-1} & W_{k+1,N-2} & \cdots & W_{k+N-2,1} & I_{p}
        \end{bmatrix} 
        \Psi_{k,N} \nonumber
        \\
        =
         \bar{\phi}_{k+N-1}.
    \end{align}
    
    Hence, applying \eqref{eqn: W Psi = Phi prep} to the row partitions of \eqref{eqn: W Psi expanded} yields \eqref{eqn: W Psi = Phi}.
\end{proof}

\begin{lem}
\label{lem: DeltaN V bound}
    Assume that, for all $k \ge 0$, $F_k \succeq 0_{n \times n}$. Also assume there exists $b \in (0,\infty)$ such that, for all $k \ge 0$, $P_k \preceq b I_n$. Furthermore, assume $( \bar{\phi}_k )_{k=0}^\infty$ is persistently exciting with lower bound $\bar{\alpha} > 0$ and persistency window $N$ and bounded with upper bound $\bar{\beta} \in (0,\infty)$.
    Then, for all $k \ge 0$ and $N \ge 1$,
    \begin{align}
        \label{eqn: DeltaN V bound}
        \Delta^N V_k \succeq c_N I_n \succ 0_{n \times n},
    \end{align}
    where
    \begin{align}
        \label{eqn: c_N defn}
        c_N \triangleq \frac{\bar{\alpha}}{N}
        (1+b \bar{\beta})^{-1}
        \left[ 1 + \frac{N-1}{2} \left( b \bar{\beta} \right)^2 \right]^{-1}.
    \end{align}
\end{lem}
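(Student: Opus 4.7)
The plan is to prove the bound by telescoping $\Delta^N V_k$ into a sum of one-step differences $\Delta V_{k+i}$ (sandwiched between products of transition matrices $M_{k+j}$), applying the single-step bound of Lemma \ref{lem: Delta V_k bound}, and then relating the resulting quadratic form in $\Psi_{k,N}$ back to the persistent-excitation quadratic form in $\Phi_{k,N}$ via Lemma \ref{lem: W Psi = Phi}.

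First I would write the telescoping identity
\begin{align*}
\Delta^N V_k = \sum_{i=0}^{N-1} M_k^\rmT \cdots M_{k+i-1}^\rmT \, \Delta V_{k+i} \, M_{k+i-1} \cdots M_k,
\end{align*}
which follows directly from the definitions \eqref{eqn: Delta V_k defn} and \eqref{eqn: DeltaN V_k defn} by adding and subtracting the intermediate terms $M_k^\rmT \cdots M_{k+i-1}^\rmT P_{k+i}^{-1} M_{k+i-1} \cdots M_k$. Then I would apply Lemma \ref{lem: Delta V_k bound} together with $F_{k+i} \succeq 0$, the bound $(P_{k+i}^{-1}-F_{k+i})^{-1} \preceq b I_n$ (equivalently, the combination of $F_{k+i}\succeq 0$ and $P_{k+i}\preceq bI_n$ that the hypotheses provide), and $\bar{\phi}_{k+i}\bar{\phi}_{k+i}^\rmT \preceq \bar{\beta} I_p$, to obtain $\Delta V_{k+i} \succeq (1+b\bar{\beta})^{-1} \bar{\phi}_{k+i}^\rmT \bar{\phi}_{k+i}$. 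Recognizing the sandwiched sum through the definition \eqref{eqn: Psi defn} of $\Psi_{k,N}$ gives
\begin{align*}
\Delta^N V_k \succeq \frac{1}{1+b\bar{\beta}} \, \Psi_{k,N}^\rmT \Psi_{k,N}.
\end{align*}

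The core step is then to lower-bound $\Psi_{k,N}^\rmT \Psi_{k,N}$ using persistent excitation. By Lemma \ref{lem: W Psi = Phi}, $\Phi_{k,N} = \mathcal{W}_{k,N}\Psi_{k,N}$, so $\Phi_{k,N}^\rmT \Phi_{k,N} = \Psi_{k,N}^\rmT \mathcal{W}_{k,N}^\rmT \mathcal{W}_{k,N} \Psi_{k,N} \preceq \svdmax(\mathcal{W}_{k,N})^2 \, \Psi_{k,N}^\rmT \Psi_{k,N}$. Because $\Phi_{k,N}^\rmT \Phi_{k,N} = \sum_{i=0}^{N-1} \bar{\phi}_{k+i}^\rmT \bar{\phi}_{k+i} \succeq \bar{\alpha} I_n$ by condition \ref{item: cond weighted regressor PE and bounded}, this yields $\Psi_{k,N}^\rmT \Psi_{k,N} \succeq \bar{\alpha}\,\svdmax(\mathcal{W}_{k,N})^{-2} I_n$.

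The remaining work is to bound $\svdmax(\mathcal{W}_{k,N})^2$ from above, which I expect to be the main obstacle since $\mathcal{W}_{k,N}$ is a lower-triangular block matrix whose entries depend on $k$. I would apply Lemma \ref{lem: block matrix max singular value squared} to the block partition of $\mathcal{W}_{k,N}$: the $N$ diagonal identity blocks contribute $N \cdot 1$, and each of the $\tfrac{N(N-1)}{2}$ strictly lower blocks $W_{k+j,i} = \bar{\phi}_{k+j+i} P_{k+j+1} \bar{\phi}_{k+j}^\rmT$ satisfies $\svdmax(W_{k+j,i}) \le \bar{\beta}^{1/2}\cdot b \cdot \bar{\beta}^{1/2} = b\bar{\beta}$ using the upper bounds on $P$ and on $\bar{\phi}$. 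This gives
\begin{align*}
\svdmax(\mathcal{W}_{k,N})^2 \le N + \tfrac{N(N-1)}{2}(b\bar{\beta})^2 = N\bigl[1 + \tfrac{N-1}{2}(b\bar{\beta})^2\bigr].
\end{align*}
Combining this with the two previous displayed bounds reproduces $c_N$ as in \eqref{eqn: c_N defn}, and strict positivity of $c_N$ is immediate since $\bar{\alpha}, b, \bar{\beta} > 0$ and $N \ge 1$.
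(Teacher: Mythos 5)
Your proposal is, in substance, the paper's own proof: your telescoping identity is exactly the unrolled form of the paper's induction $\Delta^N V_k = M_k^\rmT \Delta^{N-1} V_{k+1} M_k + \Delta V_k$, and the remaining steps --- recognizing $\Psi_{k,N}^\rmT\Psi_{k,N}$, inverting $\mathcal{W}_{k,N}$ via Lemma \ref{lem: W Psi = Phi}, bounding $\svdmax(\mathcal{W}_{k,N})^2$ by $N[1+\tfrac{N-1}{2}(b\bar{\beta})^2]$ with Lemma \ref{lem: block matrix max singular value squared}, and invoking persistent excitation of $\Phi_{k,N}$ --- match the paper line for line.

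One caveat on the single-step bound. To extract $\Delta V_{k+i}\succeq (1+b\bar{\beta})^{-1}\bar{\phi}_{k+i}^\rmT\bar{\phi}_{k+i}$ from Lemma \ref{lem: Delta V_k bound} you need $\eigmax\big((P_{k+i}^{-1}-F_{k+i})^{-1}\big)\le b$, and you assert this is ``equivalent'' to the lemma's hypotheses $F_{k+i}\succeq 0$ and $P_{k+i}\preceq bI_n$. That equivalence is false: Lemma \ref{lem: inv(Pinv - Fk) inequality} gives only the implication $(P_k^{-1}-F_k)^{-1}\preceq bI_n \Rightarrow P_k\preceq bI_n$ under $F_k\succeq 0$, and the converse fails. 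For instance, with $n=1$, $b=1$, $P_k=1$, $F_k=\tfrac12$, $\bar{\phi}_k^\rmT\bar{\phi}_k=\tfrac32$, one gets $P_{k+1}=\tfrac12\le b$ and $F_k\ge 0$, yet $(P_k^{-1}-F_k)^{-1}=2>b$, so the denominator in Lemma \ref{lem: Delta V_k bound} exceeds $1+b\bar{\beta}$. The bound you actually need is condition \ref{item: cond inv(Pkinv - F) upper bound}, which is what holds wherever this lemma is invoked (and which, via Lemma \ref{lem: inv(Pinv - Fk) inequality}, supplies $P_k\preceq bI_n$ as a consequence rather than as a substitute). The paper's own proof makes the same silent substitution, so nothing downstream is affected, but you should cite \ref{item: cond inv(Pkinv - F) upper bound} directly instead of claiming the two bounds are interchangeable.
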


\vspace{5pt}

\begin{proof}
    To begin, we show that, for all $k \ge 0$ and $N \ge 1$,
    \begin{align}
        \label{eqn: DeltaN V bound prep} 
        \Delta^N V_k \succeq (1+ b \bar{\beta})^{-1} \Psi_{k,N}^\rmT \Psi_{k,N}.
    \end{align}
    For brevity, we define $\nu \triangleq (1+ b \bar{\beta})^{-1}$. Proof of \eqref{eqn: DeltaN V bound prep} follows by induction on $N$. First, let $k \ge 0$ and consider the base case $N = 1$. Note that $\Delta^1 V_k = \Delta V_k$ and $\Psi_{k,1} = \bar{\phi}_{k}$. 
    Hence, it follows from Lemma \ref{lem: Delta V_k bound} that 
    $\Delta^1 V_k \succeq F_k + \nu \bar{\phi}_k^\rmT \bar{\phi}_k 
    \succeq \nu \Psi_{k,1}^\rmT \Psi_{k,1}$.
    Next, let $N \ge 2$. Note that $\Delta^N V_k$, given by \eqref{eqn: DeltaN V_k defn} can be expressed recursively as
    \begin{align*}
        \Delta^N V_k &= M_k^\rmT (\Delta^{N-1} V_{k+1} - P_{k+1}^{-1}) M_k + P_k^{-1} 
        \\
        &= M_k^\rmT \Delta^{N-1} V_{k+1} M_k + \Delta V_k.
    \end{align*}
    It follows from inductive hypothesis that $\Delta^{N-1} V_{k+1} \succeq \nu \Psi_{k+1,N-1}^\rmT \Psi_{k+1,N-1}$. Substituting into the previous equation gives
    \begin{align*}
        \Delta^N V_k & \succeq \nu M_k^\rmT \Psi_{k+1,N-1}^\rmT \Psi_{k+1,N-1} M_k + \nu \bar{\phi}_k^\rmT \bar{\phi}_k 
        \\
        & = \nu \begin{bmatrix}
             \bar{\phi}_k^\rmT & M_k^\rmT \Psi_{k+1,N-1}^\rmT
        \end{bmatrix}
        \begin{bmatrix}
            \bar{\phi}_k \\ \Psi_{k+1,N-1} M_k
        \end{bmatrix}.
    \end{align*}
    Note that \eqref{eqn: Psi defn} implies that 
    $\begin{bmatrix} \bar{\phi}_k^\rmT & M_k^\rmT \Psi_{k+1,N-1}^\rmT \end{bmatrix} = \Psi_{k,N}^\rmT$ and \eqref{eqn: DeltaN V bound prep} is proven.

    Next, note that, for all $k \ge 0$ and $N \ge 1$, $\mathcal{W}_{k,N}$ is lower triangular with all ones on the main diagonal, and hence is nonsingular. 
    Thus, Lemma \ref{lem: W Psi = Phi} implies that, for all $k \ge 0$ and $N \ge 1$, $\Psi_{k,N} = \mathcal{W}_{k,N}^{-1} \Phi_{k,N}$ and thus 
    \begin{align}
    \label{eqn: DeltaN V bound prep 1}
        \Psi_{k,N}^\rmT \Psi_{k,N} \succeq % \svdmin(\mathcal{W}_{k,N}^{-1})^2 \Phi_{k,N}^\rmT \Phi_{k,N} = 
        \frac{\Phi_{k,N}^\rmT \Phi_{k,N}}{\svdmax(\mathcal{W}_{k,N})^2}  .
    \end{align}
    Note that if $N = 1$ then, for all $k \ge 0$, $\SW_{k,1} = I_p$ and hence
    \begin{align}
    \label{eqn: svdmax of W temp 1}
        \svdmax(\mathcal{W}_{k,1})^2 = 1.
    \end{align}
    Next, if $N \ge 2$ then Lemma \ref{lem: block matrix max singular value squared} implies that 
    $\svdmax(\mathcal{W}_{k,N})^2 \le N \svdmax(I_p)^2 + \sum_{i = 1}^{N-1} \sum_{j = 1}^{N-i} \svdmax(W_{k-1+i,j})^2.$
    Note that, for all $1 \le i \le N-1$ and $1 \le j \le N-i$, it follows from \eqref{eqn: W_k,(i,j) defn} that $\svdmax(W_{k-1+i,j}) \le b \bar{\beta}$.
    Using this inequality, it follows that, for all $k \ge 0$ and $N \ge 2$, $\svdmax(\SW_{k,N})^2 \le N + \sum_{i = 1}^{N-1} \sum_{j = 1}^{N-i} \left(b \bar{\beta} \right)^2$, which simplifies to
    \begin{align}
    \label{eqn: svdmax of W temp 2}
        \svdmax(\SW_{k,N})^2 \le N + \frac{N(N-1)}{2} \left(b \bar{\beta} \right)^2.
    \end{align}
    It then follows from \eqref{eqn: svdmax of W temp 1} and \eqref{eqn: svdmax of W temp 2} that, for all $k \ge 0$ and $N \ge 1$,
    \begin{align}
    \label{eqn: DeltaN V bound prep 2}
        \frac{1}{\svdmax(\mathcal{W}_{k,N})^2} \ge \frac{1}{N} \left[ 1 + \frac{N-1}{2} \left( b \bar{\beta} \right)^2 \right]^{-1}.
    \end{align}
    Furthermore, persistent excitation of $( \bar{\phi}_k )_{k=0}^\infty$ implies that, for all $k \ge 0$ and $N \ge 1$,
    \begin{align}
    \label{eqn: DeltaN V bound prep 3}
        \Phi_{k,N}^\rmT \Phi_{k,N} \succeq \bar{\alpha} I_n.
    \end{align}
    Finally, substituting \eqref{eqn: DeltaN V bound prep 1}, \eqref{eqn: DeltaN V bound prep 2}, and \eqref{eqn: DeltaN V bound prep 3} into \eqref{eqn: DeltaN V bound prep} yields \eqref{eqn: DeltaN V bound}.
\end{proof}

% %
% \textit{Proof: }
% % \begin{proof}
%     Let $k \ge 0$. It follows from Lemma \ref{lem: inv(Pinv - Fk) inequality} that, for all $k \ge 0$, $P_k \preceq b I_n$. Next, it follows from \eqref{eqn: theta tilde update} and \eqref{eqn: M_k v2} that 
%     %
%     \begin{align*}
%         \Vert \tilde{\theta}_{k+1} \Vert % &\le \svdmax(I_n - P_{k+1}\phi_k^\rmT \Gamma_k^{-1} \phi_k) \Vert \tilde{\theta}_k \Vert \\
%         %
%         &\le (\svdmax(I_n) + \svdmax(P_{k+1}) \svdmax(\bar{\phi}_k^\rmT \bar{\phi}_k) )\Vert \tilde{\theta}_k \Vert 
%         %
%         \\
%         &\le (1 + b \bar{\beta} )  \Vert \tilde{\theta}_k \Vert. \tag*{\mbox{$\blacksquare$}}
%     \end{align*} 
% % \end{proof}

{\it Proof of statements 3) and 4) of Theorem \ref{theo: GFRLS Lyapunov stability v2}.}
    Note that repeated substitution of \eqref{eqn: theta tilde update} gives, for all $j \ge 0$, 
    \begin{align*}
        \tilde{\theta}_{(j+1)N} = M_{(j+1)N - 1} \cdots M_{jN} \tilde{\theta}_{jN}.
    \end{align*}
    Hence, we define $f^N \colon \BBN_0 \times \BBR^n \rightarrow \BBR^n$ by, for all $j \ge 0$ and $\tilde{\theta} \in \BBR^n$,
    \begin{align*}
        f^N(j,\tilde{\theta}) \triangleq M_{jN + N - 1} \cdots M_{jN+1} M_{jN}  \tilde{\theta}.
    \end{align*}
    Further, for all $j \ge 0$, define $\tilde{\theta}_{j}^N \in \BBR^n$ by $\tilde{\theta}_{j}^N \triangleq \tilde{\theta}_{jN}$, which yields the system
    \begin{align}
    \label{eqn: proof N step theta tilde update}
        \tilde{\theta}_{j+1}^N = f^N(j,\tilde{\theta}_{j}^N).
    \end{align}
    
    Next, define $V^N \colon \BBN_0 \times \BBR^n \rightarrow \BBR$ by
    \begin{align*}
        V^N(j,\tilde{\theta}) \triangleq \tilde{\theta}^\rmT P_{jN}^{-1} \tilde{\theta},
    \end{align*}
    and note that, for all $j \ge 0$,
    \begin{align}
        \label{eqn: VN(j,0) = 0}
        V^N(j,0) = 0.
    \end{align}
    Also, note that, for all $j \ge 0$,
    \begin{align*}
        V^N\big(j+1,f^N(j,\tilde{\theta})\big) - V^N\big(j,\tilde{\theta}\big) = -\tilde{\theta}^\rmT \Delta^N V_{jN} \tilde{\theta},
    \end{align*}
    where $\Delta^N V_{jN} \in \BBR^{n \times n}$ is defined in \eqref{eqn: DeltaN V_k defn}. It then follows from Lemma \ref{lem: DeltaN V bound} that, for all $j \ge 0$, 
    \begin{align}
        \label{eqn: proof, V N step diff}
        V^N\big(j+1,f^N(j,\tilde{\theta})\big) - V^N\big(j,\tilde{\theta}\big) 
        \le - c_N \Vert \tilde{\theta} \Vert^2,
    \end{align}
    where $c_N > 0$ is defined in \eqref{eqn: c_N defn}.
    Next, it follows from Lemma \ref{lem: inv(Pinv - Fk) inequality} that, for all $k \ge 0$, $P_k \preceq b I_n$ and it follows then from \eqref{eqn: theta tilde update} and \eqref{eqn: M_k v2} that, for all $k \ge 0$,
    \begin{align}
        \Vert \tilde{\theta}_{k+1} \Vert % &\le \svdmax(I_n - P_{k+1}\phi_k^\rmT \Gamma_k^{-1} \phi_k) \Vert \tilde{\theta}_k \Vert \\
        &\le (\svdmax(I_n) + \svdmax(P_{k+1}) \svdmax(\bar{\phi}_k^\rmT \bar{\phi}_k) )\Vert \tilde{\theta}_k \Vert \nonumber
        \\
        &\le (1 + b \bar{\beta} )  \Vert \tilde{\theta}_k \Vert. \label{eqn: norm theta tilde bound}
    \end{align} 
    Hence, \eqref{eqn: norm theta tilde bound} implies that, for all $j \ge 0$ and $l = 1,\hdots,N-1$, 
    \begin{align}
        \label{eqn: proof theta tilde N step bound}
        \Vert \tilde{\theta}_{jN+l} \Vert \le \left(1 + b \bar{\beta} \right)^{N-1} \Vert \tilde{\theta}_{jN} \Vert.
    \end{align}

    We now prove statements \ref{item: stability statement 3} and \ref{item: stability statement 4} of Theorem \ref{theo: GFRLS Lyapunov stability v2}:
    \begin{enumerate}[leftmargin=*]
        \item[\ref{item: stability statement 3}] By Lemma \ref{lem: inv(Pinv - Fk) inequality},  conditions \ref{item: cond F PSD} and \ref{item: cond inv(Pkinv - F) upper bound} implies that, for all $j \ge 0$ and $\tilde{\theta} \in \BBR^n$,
        \begin{align}
            \label{eqn: Vj lower bound}
            \frac{1}{b} \Vert \tilde{\theta} \Vert ^2 \le V^N(j,\tilde{\theta}).
        \end{align}
        Equations \eqref{eqn: VN(j,0) = 0}, \eqref{eqn: proof, V N step diff}, and \eqref{eqn: Vj lower bound} imply that \eqref{eq:LS_Cond1}, \eqref{eq:LS_Cond2}, and \eqref{eq:ALS_Cond3} are satisfied. Hence, by part \ref{item: theo asym stability} of Theorem \ref{theo: lyapunov stability}, the equilibrium $\tilde{\theta}_{j}^N \equiv 0$ of \eqref{eqn: proof N step theta tilde update} is globally asymptotically stable. 
        
        We now show that the equilibrium $\tilde{\theta}_{k} \equiv 0$ of \eqref{eqn: theta tilde update} is also globally asymptotically stable. 
        Let $\varepsilon > 0$ and $k_0 \ge 0$. Write $k_0 = j_0 N + l_0$, where $j_0 \ge 0$, and $0 \le l_0 \le N-1$. Since the equilibrium $\tilde{\theta}_{j}^N \equiv 0$ of \eqref{eqn: proof N step theta tilde update} is Lyapunov stable, we can choose $\delta$ such that $\Vert \tilde{\theta}_{j_0 N} \Vert < \delta$ implies that, for all $j \ge j_0$, 
        \begin{align}
        \label{eqn: GFRLS proof thetatilde_JN inequality}
            \Vert \tilde{\theta}_{jN} \Vert < \varepsilon (1 + b \bar{\beta})^{\frac{1}{N-1}}. 
        \end{align}
        Let $\Vert \tilde{\theta}_{k_0} \Vert < \delta$. For all $k^* \ge k_0$, write $k^* = j^*N + l^*$ and note that $j^* \ge j_0$. It then follows from \eqref{eqn: proof theta tilde N step bound} and \eqref{eqn: GFRLS proof thetatilde_JN inequality} that, for all $k^* \ge k_0$,
        \begin{align*}
            \Vert \tilde{\theta}_{k^*} \Vert 
            &\le (1+b \bar{\beta})^{N-1} \Vert \tilde{\theta}_{j^*N} \Vert 
            \\
            &< (1+b \bar{\beta})^{N-1} \varepsilon (1+b \bar{\beta})^{\frac{1}{N-1}} 
            = \varepsilon.
        \end{align*}
        Hence, the equilibrium $\tilde{\theta}_{k} \equiv 0$ of \eqref{eqn: theta tilde update} is Lyapunov stable. A similar argument using \eqref{eqn: proof theta tilde N step bound} can be used to show that $\lim_{j \rightarrow \infty} \tilde{\theta}_{jN} = 0$ implies that $\lim_{k \rightarrow \infty} \tilde{\theta}_{k} = 0$. Therefore, the equilibrium $\tilde{\theta}_{k} \equiv 0$ of \eqref{eqn: theta tilde update} is globally asymptotically stable.

        \item[\ref{item: stability statement 4}] Condition \ref{item: cond Pk lower bound} further implies that, for all $j \ge 0$ and $\tilde{\theta} \in \BBR^n$, 
        \begin{align}
            \label{eqn: Vj upper bound}
            V^N(j,\tilde{\theta}) \le \frac{1}{a} \Vert \tilde{\theta} \Vert ^2.
        \end{align}
       Equations \eqref{eqn: proof, V N step diff}, \eqref{eqn: Vj lower bound}, and \eqref{eqn: Vj upper bound} imply that \eqref{eq:LS_Cond2}, \eqref{eq:ULS_Cond1}, and \eqref{eq:ALS_Cond3} are satisfied. 
        Hence, by part \ref{item: theo geo stability} of Theorem \ref{theo: lyapunov stability}, the equilibrium $\tilde{\theta}_{j}^N \equiv 0$ of \eqref{eqn: proof N step theta tilde update} is globally uniformly exponentially stable. 
        Using \eqref{eqn: proof theta tilde N step bound} in an argument similar to the proof of statement 3), it can be shown that the equilibrium $\tilde{\theta}_{k} \equiv 0$ of \eqref{eqn: theta tilde update} is also globally uniformly exponentially stable. 
        \hfill {\mbox{$\blacksquare$}}
    \end{enumerate}
    %
% \end{proof}

\section*{Appendix F: Lemmas used in the proof of Theorem \ref{theo: GFRLS UUB}}
% \subsection{Lemmas used in the proof of Theorem \ref{theo: GFRLS UUB}}
% \labelsubseccounter{subsec: GRFLS UUB Proof}

Suppose, for all $k \ge 0$, there exists $\zeta_k \in \BBR^n$ such that, 
\begin{align}
    \label{eqn: thetacheck update assumption}
    \check{\theta}_{k+1} = M_k(\check{\theta}_k - \zeta_k),
\end{align}
where $\check{\theta}_k$ is defined in \eqref{eqn: theta bar defn} and $M_k$ is defined in \eqref{eqn: M_k defn}.
Next, for all $k \ge 0$ and $i \ge 1$, define $\mathcal{M}_{k,i} \in \BBR^{n \times n}$ by
\begin{align}
    \mathcal{M}_{k,i} \triangleq \begin{cases}
        M_k & i = 1, \\
        M_{k+i-1} \cdots M_{k+1} M_k & i \ge 2.
    \end{cases}
\end{align}
Moreover, for all $k \ge 0$, let $P_{k}^{-\frac{1}{2}} \in \BBR^{n \times n}$ be the unique positive-semidefinite matrix such that $P_{k}^{-1} = P_{k}^{-\frac{1}{2}\rmT} P_{k}^{-\frac{1}{2}}$.

\begin{lem}
\label{lem: thetabar N step update}
    Assume, for all $k \ge 0$, there exists $\zeta_k \in \BBR^n$ such that \eqref{eqn: thetacheck update assumption} holds.
    Then, for all $N \ge 1$ and $k \ge 0$, 
    \begin{align}
        \label{eqn: thetabar N step update}
        \check{\theta}_{k+N} = \mathcal{M}_{k,N} \check{\theta}_k - \overline{\mathcal{M}}_{k,N} \bar{\zeta}_{k,N},
    \end{align}
    where $\overline{\mathcal{M}}_{k,N} \in \BBR^{n \times Nn}$ and $\bar{\zeta}_{k,N} \in \BBR^{Nn \times 1}$ are defined
    \begin{align}
    \label{eqn: M overline defn}
        \overline{\mathcal{M}}_{k,N} &\triangleq \begin{bmatrix}
            \mathcal{M}_{k,N} & \mathcal{M}_{k+1,N-1} & \cdots %& \mathcal{M}_{k+N-2,2} 
            & \mathcal{M}_{k+N-1,1}
        \end{bmatrix}, 
        \\
        \label{eqn: zeta bar defn}
        \bar{\zeta}_{k,N} &\triangleq \begin{bmatrix}
            \zeta_{k}^\rmT & \zeta_{k+1}^\rmT & \cdots & \zeta_{k+N-1}^\rmT
        \end{bmatrix}^\rmT.
    \end{align}
\end{lem}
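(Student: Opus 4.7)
\textbf{Proof plan for Lemma \ref{lem: thetabar N step update}.}
The plan is to prove the identity by induction on $N$. The base case $N=1$ is essentially a restatement of the hypothesis \eqref{eqn: thetacheck update assumption}: by definition $\mathcal{M}_{k,1}=M_k$, the block matrix $\overline{\mathcal{M}}_{k,1}$ collapses to the single block $\mathcal{M}_{k,1}=M_k$, and $\bar{\zeta}_{k,1}=\zeta_k$, so \eqref{eqn: thetacheck update assumption} reads $\check{\theta}_{k+1}=M_k\check{\theta}_k-M_k\zeta_k$, which is exactly the claimed identity with $N=1$.

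For the inductive step, suppose \eqref{eqn: thetabar N step update} holds for some $N\ge 1$ and all $k\ge 0$. Applying \eqref{eqn: thetacheck update assumption} at step $k+N$ and then substituting the inductive hypothesis yields
\begin{align*}
\check{\theta}_{k+N+1}
&= M_{k+N}\bigl(\check{\theta}_{k+N}-\zeta_{k+N}\bigr) \\
&= M_{k+N}\mathcal{M}_{k,N}\check{\theta}_k
   - M_{k+N}\overline{\mathcal{M}}_{k,N}\bar{\zeta}_{k,N}
   - M_{k+N}\zeta_{k+N}.
\end{align*}
The remaining work is purely bookkeeping on the block structure. Using the definition of $\mathcal{M}_{k,i}$, we have $M_{k+N}\mathcal{M}_{k+j,N-j}=\mathcal{M}_{k+j,N-j+1}$ for each $j=0,\dots,N-1$, and $M_{k+N}=\mathcal{M}_{k+N,1}$. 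Left-multiplying $\overline{\mathcal{M}}_{k,N}$ column-block by column-block by $M_{k+N}$ therefore shifts each block's second index up by one, and appending the extra column block $M_{k+N}=\mathcal{M}_{k+N,1}$ on the right (to absorb the $-M_{k+N}\zeta_{k+N}$ term) produces exactly $\overline{\mathcal{M}}_{k,N+1}$ as given by \eqref{eqn: M overline defn}. Similarly, appending $\zeta_{k+N}$ at the bottom of $\bar{\zeta}_{k,N}$ produces $\bar{\zeta}_{k,N+1}$ per \eqref{eqn: zeta bar defn}. This gives $\check{\theta}_{k+N+1}=\mathcal{M}_{k,N+1}\check{\theta}_k-\overline{\mathcal{M}}_{k,N+1}\bar{\zeta}_{k,N+1}$, completing the induction.

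There is no real obstacle here; the lemma is just an unrolling of a linear time-varying recursion with affine perturbations, and the claim fixes the notation for the state-transition matrices $\mathcal{M}_{k,i}$ and stacked perturbation $\bar{\zeta}_{k,N}$ that will be used later. The only thing to be slightly careful about is matching the indexing convention in \eqref{eqn: M overline defn} (namely that the $j$-th block of $\overline{\mathcal{M}}_{k,N}$ is $\mathcal{M}_{k+j-1,N-j+1}$, corresponding to the perturbation $\zeta_{k+j-1}$ entering at time $k+j-1$ and then being propagated through $N-j+1$ remaining $M$-factors). Once this indexing is verified, the inductive step is a one-line column-block manipulation and the proof is complete.
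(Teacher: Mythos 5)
Your proof is correct and follows essentially the same route as the paper's: induction on $N$, with the base case $N=1$ reducing to \eqref{eqn: thetacheck update assumption} and the inductive step applying the one-step recursion once more and absorbing the extra factor $M_{k+N}$ into the block structure via $M_{k+N}\mathcal{M}_{k+j,N-j}=\mathcal{M}_{k+j,N-j+1}$. The only cosmetic difference is that you step from $N$ to $N+1$ while the paper steps from $N-1$ to $N$; the block-indexing verification you give is accurate.
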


\begin{proof}
    Let $k \ge 0$ and proof follows by induction on $N \ge 1$. First, consider the base case $N = 1$. Note that $\overline{\mathcal{M}}_{k,1} = \mathcal{M}_{k,1} = M_k$ and $\bar{\zeta}_{k,1} = \zeta_{k}$. 
    Hence, \eqref{eqn: thetabar N step update} follows immediately from \eqref{eqn: thetacheck update assumption}. Next, let $N \ge 2$. By inductive hypothesis, $\check{\theta}_{k+N-1} = \mathcal{M}_{k,N-1} \check{\theta}_k - \overline{\mathcal{M}}_{k,N-1} \bar{\zeta}_{k,N-1}$. 
    Furthermore, it follows from \eqref{eqn: thetacheck update assumption} that $\check{\theta}_{k+N} = M_{k+N-1} (\check{\theta}_{k+N-1} - \zeta_{k+N-1})$. Combining these two equalities gives
    \begin{align*}
        \check{\theta}_{k+N}&  =  M_{k+N-1} \mathcal{M}_{k,N-1} \check{\theta}_k \nonumber
        \\
        &+ 
        \begin{bmatrix}
            M_{k+N-1} \overline{\mathcal{M}}_{k,N-1} & M_{k+N-1}
        \end{bmatrix}
        \begin{bmatrix}
            \bar{\zeta}_{k,N-1} \\ \zeta_{k+N-1}
        \end{bmatrix},
    \end{align*}
    which can be rewritten as \eqref{eqn: thetabar N step update}.
\end{proof}

Note that from \eqref{eqn: thetacheck update assumption} and Lemma \ref{lem: thetabar N step update}, it follows that, for all $l = 0,1,\hdots,N-1$ and $j \ge 0$, 
\begin{align}
    \check{\theta}_{(j+1)N + l} = \mathcal{M}_{jN+l,N} \check{\theta}_{jN+l} - \overline{\mathcal{M}}_{jN+l,N} \bar{\zeta}_{jN+l,N}.
\end{align}
Next, for all $l = 0,1,\hdots,N-1$ and $j \ge 0$, define $\check{\theta}^N_{l,j} \in \BBR^n$ by
\begin{align}
    \check{\theta}^N_{l,j} \triangleq \check{\theta}_{jN + l},
\end{align}
which, for all $l = 0,1,\hdots,N-1$, gives the system
\begin{align}
    \label{eqn: thetabar N step with l}
    \check{\theta}^N_{l,j+1} = \mathcal{M}_{jN+l,N} \check{\theta}^N_{l,j} - \overline{\mathcal{M}}_{jN+l,N} \bar{\zeta}_{jN+l,N}.
\end{align}

\begin{lem}
    \label{lem: GFRLS UUB N step}
    Assume, for all $k \ge 0$, there exists $\zeta_k \in \BBR^n$ such that \eqref{eqn: thetacheck update assumption} holds.
    Furthermore, assume conditions \ref{item: cond F PSD}, \ref{item: cond inv(Pkinv - F) upper bound}, \ref{item: cond Pk lower bound}, and \ref{item: cond weighted regressor PE and bounded} hold and assume there exists $\zeta \ge 0$ such that, for all $k \ge 0$, 
    \begin{align}
    \label{eqn: zeta_k bound}
        \Vert \zeta_k \Vert \le \zeta.
    \end{align}
    Then, for all $l = 0,1,\hdots,N-1$, the system \eqref{eqn: thetabar N step with l} is globally uniformly ultimately bounded with bound $\varepsilon^* \zeta$, where $\varepsilon^*$ is given by \eqref{eqn: epsilonbar star defn}.
\end{lem}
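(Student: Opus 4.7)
The plan is to apply the discrete-time ultimate-boundedness criterion of Theorem \ref{theo: UUB} to the $N$-step system \eqref{eqn: thetabar N step with l}, using the same Lyapunov candidate $V^N(j,\check{\theta}) \triangleq \check{\theta}^\rmT P^{-1}_{jN+l}\check{\theta}$ employed in Appendix E. Conditions \ref{item: cond inv(Pkinv - F) upper bound} and \ref{item: cond Pk lower bound}, together with Lemma \ref{lem: inv(Pinv - Fk) inequality}, yield the quadratic sandwich $\frac{1}{b}\Vert\check{\theta}\Vert^2 \le V^N(j,\check{\theta}) \le \frac{1}{a}\Vert\check{\theta}\Vert^2$ required by Theorem \ref{theo: UUB}.

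The heart of the argument is a one-step (in units of $N$) increment identity. Abbreviating $k = jN+l$, $P_+ = P_{(j+1)N+l}$, and $\check{\theta}_+ = \check{\theta}^N_{l,j+1}$, direct expansion using \eqref{eqn: thetabar N step with l} gives
\begin{align*}
V^N(j+1,\check{\theta}_+) - V^N(j,\check{\theta}) = -\check{\theta}^\rmT \Delta^N V_k \check{\theta} - 2 \check{\theta}^\rmT \mathcal{M}_{k,N}^\rmT P_+^{-1} \overline{\mathcal{M}}_{k,N} \bar{\zeta}_{k,N} + \bar{\zeta}_{k,N}^\rmT \overline{\mathcal{M}}_{k,N}^\rmT P_+^{-1}\overline{\mathcal{M}}_{k,N}\bar{\zeta}_{k,N}.
\end{align*}
Lemma \ref{lem: DeltaN V bound} supplies the strictly negative contribution $\check{\theta}^\rmT \Delta^N V_k \check{\theta} \ge c_N \Vert \check{\theta}\Vert^2$ with $c_N$ as in \eqref{eqn: c_N defn}, provided its hypotheses are met through conditions \ref{item: cond F PSD}, \ref{item: cond inv(Pkinv - F) upper bound}, and \ref{item: cond weighted regressor PE and bounded}.

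For the two noise contributions I would exploit the telescoping contraction $\mathcal{M}_{k',i}^\rmT P_{k'+i}^{-1}\mathcal{M}_{k',i} \preceq P_{k'}^{-1}$, obtained by iterating the one-step inequality $M_{k''}^\rmT P_{k''+1}^{-1} M_{k''} \preceq P_{k''}^{-1}$, which is itself a direct consequence of $\Delta V_{k''} \succeq F_{k''} \succeq 0$ under condition \ref{item: cond F PSD}. Combined with $P_{k'}^{-1} \preceq (1/a) I_n$ from \ref{item: cond Pk lower bound}, this yields the per-block bound $\mathcal{M}_{k+N-i,i}^\rmT P_+^{-1}\mathcal{M}_{k+N-i,i} \preceq (1/a) I_n$ for each $i = 1, \dots, N$. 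Applying Lemma \ref{lem: block matrix max singular value squared} to $P_+^{-\frac{1}{2}}\overline{\mathcal{M}}_{k,N}$ and using $\Vert \bar{\zeta}_{k,N}\Vert^2 \le N\zeta^2$ from \eqref{eqn: zeta_k bound} then bounds the pure noise term by $N^2\zeta^2/a$; Cauchy--Schwarz in the $P_+^{-1}$ inner product handles the cross term via the same block estimate and $\check{\theta}^\rmT \mathcal{M}_{k,N}^\rmT P_+^{-1} \mathcal{M}_{k,N} \check{\theta} \le V^N(j,\check{\theta}) \le \Vert\check{\theta}\Vert^2/a$. Assembling the pieces bounds the increment by a quadratic in $\Vert\check{\theta}\Vert$ and $\zeta$; requiring strict negativity determines a threshold $\mu$, and Theorem \ref{theo: UUB} then delivers UUB with bound $\max\{\mu, \sqrt{\sup V^N(j+1,\check{\theta}_+)}\}$ over $\check{\theta} \in \bar{\mathcal{B}}_\mu(0)$. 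The identity $c_N a = 1/(\Delta_N + 1)$ consolidates these square roots into the form $\varepsilon^* \zeta$ in \eqref{eqn: epsilonbar star defn}, with the factor $\max\{1,1/\sqrt{a}\}$ reflecting which of the two terms dominates in the outer maximum.

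The principal obstacle is preventing an exponential-in-$N$ blowup of the noise amplification. A naive factorwise estimate $\svdmax(M_{k''}) \le \sqrt{b/a}$ applied to the $N$ factors composing $\mathcal{M}_{k,N}$ would yield a $(b/a)^{N/2}$ constant, which is inconsistent with the polynomial-in-$N$ form of $\varepsilon^*$. The telescoping identity above, which crucially uses that $V^N$ is nonincreasing along the \emph{unperturbed} dynamics (i.e., condition \ref{item: cond F PSD}), is precisely what forces the block bound on $P_+^{-\frac{1}{2}}\overline{\mathcal{M}}_{k,N}$ to grow only linearly in $N$. Beyond this conceptual point, reproducing the exact coefficients in \eqref{eqn: epsilonbar star defn}--\eqref{eqn: Delta_N bar defn} is a tedious but routine algebraic bookkeeping exercise.
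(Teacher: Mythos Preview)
Your overall architecture matches the paper's proof: same Lyapunov candidate, same three-term increment identity, same use of Lemma \ref{lem: DeltaN V bound} for the strictly negative term, same block-matrix estimate via Lemma \ref{lem: block matrix max singular value squared}, and finally Theorem \ref{theo: UUB}. The telescoping observation $\mathcal{M}_{k',i}^\rmT P_{k'+i}^{-1}\mathcal{M}_{k',i}\preceq P_{k'}^{-1}$ is correct and does prevent the exponential-in-$N$ blowup you flag.

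The gap is quantitative. Your per-block estimate $\mathcal{M}_{k+N-i,i}^\rmT P_+^{-1}\mathcal{M}_{k+N-i,i}\preceq (1/a)I_n$ is strictly weaker than what the paper uses. The paper instead invokes Lemma \ref{lem: DeltaN V bound} on each block to obtain
\[
\mathcal{M}_{k',i}^\rmT P_{k'+i}^{-1}\mathcal{M}_{k',i}=P_{k'}^{-1}-\Delta^i V_{k'}\preceq P_{k'}^{-1}-c_i I_n\preceq\Big(\frac{1}{a}-c_i\Big)I_n\preceq\Big(\frac{1}{a}-c_N\Big)I_n,
\]
the last step using the monotonicity $c_{i+1}<c_i$. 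With this sharper constant the increment bound becomes
\[
-c_N\Vert\check\theta\Vert^2+2N\zeta\Big(\frac{1}{a}-c_N\Big)\Vert\check\theta\Vert+N^2\zeta^2\Big(\frac{1}{a}-c_N\Big),
\]
and dividing by $c_N$ produces coefficients $\Delta_N=\frac{1}{ac_N}-1$ rather than your $\Delta_N+1=\frac{1}{ac_N}$. Your looser bound therefore yields a strictly larger threshold $\mu$ and a strictly larger $\sqrt{\sup V}$, so the algebra cannot collapse to the exact $\varepsilon^*$ of \eqref{eqn: epsilonbar star defn}; it is not merely ``routine bookkeeping.'' To recover the stated bound you must use Lemma \ref{lem: DeltaN V bound} a second time---on each column block of $\overline{\mathcal{M}}_{k,N}$---not only on the leading $\Delta^N V_k$ term.
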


\begin{proof}
    For brevity, we prove the case $l = 0$. The cases $l = 1,\hdots,N-1$ can be shown similarly.

    Define $\check{f}^N \colon \BBN_0 \times \BBR^n \rightarrow \BBR^n$ by
    \begin{align}
        \label{eqn: checkf^N defn}
        \check{f}^N(j,\check{\theta}) \triangleq \mathcal{M}_{jN,N} \check{\theta} - \overline{\mathcal{M}}_{jN,N} \bar{\zeta}_{jN,N},
    \end{align}
    and note that, for all $j \ge 0$, $\check{\theta}^N_{0,j+1} = f^N(j,\check{\theta}^N_{0,j})$. Furthermore, define $V^N \colon \BBN_0 \times \BBR^n \rightarrow \BBR$ by
    \begin{align}
    \label{eqn: V^N defn UUB proof}
        V^N(j,\check{\theta}) \triangleq \check{\theta}^\rmT P_{jN}^{-1} \check{\theta}.
    \end{align}
    Note that, for all $j \ge 0$ and $\check{\theta} \in \BBR^n$,
    \begin{align}
    \label{eqn: V^N bound UUB proof}
        \frac{1}{b} \Vert \check{\theta} \Vert^2 \le V^N(j,\check{\theta}) \le \frac{1}{a} \Vert \check{\theta} \Vert^2, 
    \end{align}
    where the lower bound follows from Lemma \ref{lem: inv(Pinv - Fk) inequality} and conditions \ref{item: cond F PSD} and \ref{item: cond inv(Pkinv - F) upper bound} and where the upper bound follows from condition \ref{item: cond Pk lower bound}. 
    
    Next, by substituting \eqref{eqn: checkf^N defn} into \eqref{eqn: V^N defn UUB proof}, it follows that, for all $j \ge 0$ and $\check{\theta} \in \BBR^n$,
    \begin{align}
        V^N( & j+1,\check{f}^N(j,\check{\theta})) - V^N(j,\check{\theta}) \nonumber
        \\
        = & - \check{\theta}^\rmT \Delta^N V_{jN} \check{\theta} 
        - 2 \check{\theta}^\rmT \mathcal{M}_{jN,N}^\rmT P_{(j+1)N}^{-1} \overline{\mathcal{M}}_{jN,N} \bar{\zeta}_{jN,N} \nonumber
        \\
        & + \bar{\zeta}_{jN,N}^\rmT  \overline{\mathcal{M}}_{jN,N}^\rmT P_{(j+1)N}^{-1} \overline{\mathcal{M}}_{jN,N} \bar{\zeta}_{jN,N},
        \label{eqn: UUB time-varying deltaV 1}
    \end{align}  
    where $\Delta^N V_{jN} \in \BBR^{n \times n}$ is defined in \eqref{eqn: DeltaN V_k defn}.
    
    Next, since conditions \ref{item: cond F PSD}, \ref{item: cond inv(Pkinv - F) upper bound}, \ref{item: cond Pk lower bound}, and \ref{item: cond weighted regressor PE and bounded} hold, it then follows from Lemma \ref{lem: DeltaN V bound} that, for all $k \ge 0$ and $N \ge 1$, 
    \begin{align}
        \label{eqn: theo 3 temp 0}
        -\Delta^N V_{k} = \mathcal{M}_{k,N}^\rmT P_{k+N}^{-1} \mathcal{M}_{k,N} - P_k^{-1} \preceq - c_N I_n,
    \end{align}
    where $c_N > 0$ is defined in \eqref{eqn: c_N defn}.
    It then follows from \eqref{eqn: theo 3 temp 0} that, for all $k \ge 0$ and $N \ge 1$,
    \begin{align}
        \mathcal{M}_{k,N}^\rmT P_{k+N}^{-1} \mathcal{M}_{k,N} 
        &= (P_{k+N}^{-\frac{1}{2}} \mathcal{M}_{k,N})^\rmT (P_{k+N}^{-\frac{1}{2}} \mathcal{M}_{k,N}) \nonumber
        \\
        &\preceq P_k^{-1} -c_N I_n \preceq (\frac{1}{a} - c_N) I_n.
        \label{eqn: theo 3 temp 0.5}
    \end{align}
    Therefore, \eqref{eqn: theo 3 temp 0.5} implies that, for all $k \ge 0$ and $N \ge 1$,
    \begin{align}
    \label{eqn: P^1/2 M max singular value}
        \svdmax(P_{k+N}^{-\frac{1}{2}} \mathcal{M}_{k,N}) \le \sqrt{\frac{1}{a} - c_N}.
    \end{align}

    Next, it follows from \eqref{eqn: M overline defn} that, for all $j \ge 0$,
    \begin{align}
    \label{eqn: theo 3 temp 1}
        & P_{(j+1)N}^{-\frac{1}{2}} \overline{\mathcal{M}}_{jN,N} \nonumber
        \\
        &= 
        \begin{bmatrix}
            P_{(j+1)N}^{-\frac{1}{2}} \mathcal{M}_{jN,N} 
            & \cdots 
            %& P_{(j+1)N}^{-\frac{1}{2}} \mathcal{M}_{jN + N-2,2} 
            & P_{(j+1)N}^{-\frac{1}{2}} \mathcal{M}_{jN+N-1,1}
        \end{bmatrix}.
    \end{align}
    Applying Lemma \ref{lem: block matrix max singular value squared} to \eqref{eqn: theo 3 temp 1} and substituting the bound \eqref{eqn: P^1/2 M max singular value} then gives
    \begin{align}
    \label{eqn:  P*M-overline inequality 0}
        \svdmax(P_{(j+1)N}^{-\frac{1}{2}} \overline{\mathcal{M}}_{jN,N}) \le \sqrt{\sum\nolimits_{i=1}^N (\frac{1}{a} - c_i)}.
    \end{align}
    Furthermore, it is easy to verify from definition \eqref{eqn: c_N defn} that, for all $i \ge 1$, $c_{i+1} < c_i$. Therefore, for all $j \ge 0$,
    \begin{align}
    \label{eqn:  P*M-overline inequality}
        \svdmax(P_{(j+1)N}^{-\frac{1}{2}} \overline{\mathcal{M}}_{jN,N}) \le \sqrt{N (\frac{1}{a} - c_N)}.
    \end{align}
    Bounds \eqref{eqn: theo 3 temp 0.5} and \eqref{eqn:  P*M-overline inequality} then imply that, for all $j \ge 0$,
    \begin{align}
        \label{eqn: UUB quad ineq 1}
        \mathcal{M}_{jN,N}^\rmT P_{(j+1)N}^{-1} \overline{\mathcal{M}}_{jN,N} 
        %
        % & = (P_{(j+1)N}^{-\frac{1}{2}} \mathcal{M}_{jN,N})^\rmT (P_{(j+1)N}^{-\frac{1}{2}} \overline{\mathcal{M}}_{jN,N}) 
        %
        \preceq \sqrt{N} (\frac{1}{a} - c_N) I_n,
        \\
        \label{eqn: UUB quad ineq 2}
        \overline{\mathcal{M}}_{jN,N}^\rmT P_{(j+1)N}^{-1} \overline{\mathcal{M}}_{jN,N}
        %
        % & = (P_{(j+1)N}^{-\frac{1}{2}} \overline{\mathcal{M}}_{jN,N})^\rmT (P_{(j+1)N}^{-\frac{1}{2}} \overline{\mathcal{M}}_{jN,N}) 
        %
        \preceq N (\frac{1}{a} - c_N) I_n.
    \end{align}
    Finally, it follows from \eqref{eqn: zeta_k bound} and \eqref{eqn: zeta bar defn} that, for all $j \ge 0$,
    \begin{align}
        \label{eqn: deltabar bound}
        \Vert \bar{\zeta}_{jN,N} \Vert \le \sqrt{N} \zeta.
    \end{align}
    Substituting \eqref{eqn: theo 3 temp 0}, \eqref{eqn: UUB quad ineq 1}, \eqref{eqn: UUB quad ineq 2}, and \eqref{eqn: deltabar bound} into \eqref{eqn: UUB time-varying deltaV 1} and applying the Cauchy Schwarz inequality gives, for all $j \ge 0$ and $\check{\theta} \in \BBR^n$, the bound
    \begin{align}
        & V^N (j+1,\check{f}^N(j,\check{\theta})) - V^N(j,\check{\theta}) \nonumber
        \\
        & \le -c_N \Vert \check{\theta} \Vert ^2 + 2 N \zeta (\frac{1}{a} - c_N) \Vert \check{\theta} \Vert + N^2 \zeta^2 (\frac{1}{a} - c_N). \label{eqn: V diff theo 3}
    \end{align}

    Next, note that $\Delta_N \in \BBR$, defined in \eqref{eqn: Delta_N bar defn}, can be written as
    \begin{align}
    \label{eqn: DeltaN identity}
        \Delta_N = \frac{1}{ac_N}-1.
    \end{align}
    It then follows from \eqref{eqn: V diff theo 3} that, for all $j \ge 0$ and $\check{\theta} \in \BBR^n$,
    \begin{align}
        \frac{1}{c_N} & \left[ V^N (j+1,\check{f}^N(j,\check{\theta})) - V^N(j,\check{\theta}) \right] \nonumber
        \\
        & \le - \Vert \check{\theta} \Vert ^2 + 2 N \zeta (\frac{1}{a c_N} - 1) \Vert \check{\theta} \Vert + N^2 \zeta^2 (\frac{1}{a c_N} - 1). \nonumber
        \\
        & = - \Vert \check{\theta} \Vert ^2 + 2 N \zeta \Delta_N \Vert \check{\theta} \Vert + N^2 \zeta^2 \Delta_N.\label{eqn: Vdiff quad eq}
    \end{align} 
    Next, we define $\mu_N \in \BBR$ by 
    \begin{align}
    \label{eqn: mu_N defn}
        \mu_N \triangleq \big(\Delta_N + \sqrt{\Delta_N + \Delta_N^2} \, \big)N \zeta.
    \end{align}
    It follows from solving the quadratic equation \eqref{eqn: Vdiff quad eq} that, for all $j \ge 0$ and $\check{\theta} \in \BBR^n$ such that $\Vert \check{\theta} \Vert > \mu_N$,
    \begin{align}
    \label{eqn: V^N diff negative}
        V^N(j+1,\check{f}^N(j,\check{\theta})) - V^N(j,\check{\theta}) < 0.
    \end{align}
    
    Next, it follows from substituting \eqref{eqn: checkf^N defn} into \eqref{eqn: V^N defn UUB proof} that, for all $j \ge 0$ and $\check{\theta} \in \BBR^n$,
    \begin{align}
        \sqrt{V^N (j+1, \check{f}^N (j,\check{\theta}))} \hspace{-40pt} & \nonumber 
        \\
        =& \Vert \mathcal{M}_{jN,N} \check{\theta} - \overline{\mathcal{M}}_{jN,N} \bar{\zeta}_{jN,N} \Vert _{P_{(j+1)N}^{-1}} \nonumber 
        \\
        =& \Vert P_{(j+1)N}^{-\frac{1}{2}} \mathcal{M}_{jN,N} \check{\theta} - P_{(j+1)N}^{-\frac{1}{2}} \overline{\mathcal{M}}_{jN,N} \bar{\zeta}_{jN,N} \Vert \nonumber 
        \\
        \le &  \svdmax(P_{(j+1)N}^{-\frac{1}{2}} \mathcal{M}_{jN,N}) \Vert \check{\theta} \Vert \nonumber 
        \\
        & + \svdmax(P_{(j+1)N}^{-\frac{1}{2}} \overline{\mathcal{M}}_{jN,N}) \Vert  \bar{\zeta}_{jN,N} \Vert,
        \label{eqn: sqrt V^N diff temp 1}
    \end{align}
    where the last inequality follows from triangle inequality.
    Applying inequalities \eqref{eqn: P^1/2 M max singular value}, \eqref{eqn:  P*M-overline inequality}, and \eqref{eqn: deltabar bound} to \eqref{eqn: sqrt V^N diff temp 1}, it follows that, for all $j \ge 0$ and $\check{\theta} \in \BBR^n$,
    \begin{align}
    \label{eqn: sqrt V^N diff prep}
        \sqrt{V^N(j+1,\check{f}^N(j,\check{\theta}))} \le \sqrt{\frac{1}{a} - c_N} \Vert \check{\theta} \Vert  +\sqrt{\frac{1}{a} - c_N} N \zeta.
    \end{align}
    For brevity, denote
    \begin{align*}
        \sqrt{\sup V} \triangleq \sqrt{\sup_{(j,\check{\theta}) \in \BBN_0 \times \bar{\mathcal{B}}_{\mu_N}(0)} V^N(j+1,\check{f}^N(j,\check{\theta}))}. 
    \end{align*}
    It then follows from \eqref{eqn: sqrt V^N diff prep} that
    \begin{align}
    \label{eqn: supV inequality setup0}
        \sqrt{\sup V} \le \sqrt{\frac{1}{a} - c_N}  \Big(\mu_N    +    N \zeta\Big).
    \end{align}
    Substituting \eqref{eqn: mu_N defn} into \eqref{eqn: supV inequality setup0}, it follows that
    \begin{align}
    \label{eqn: supV inequality setup1}
        \sqrt{\sup V} 
        \le \sqrt{\frac{1}{a} - c_N} \left(1 + \Delta_N + \sqrt{\Delta_N + \Delta_N^2
    } \right) N \zeta.
    \end{align}
    Next, note that \eqref{eqn: DeltaN identity} implies that 
    \begin{align}
    \label{eqn: 1/a - cN}
        \frac{1}{a} - c_N = \frac{1}{a} \frac{\Delta_N}{1 + \Delta_N}.
    \end{align}
    Substituting \eqref{eqn: 1/a - cN} into \eqref{eqn: supV inequality setup1} and simplifying, it follows that
    \begin{align}
    \label{eqn: supV bound UUB proof}
        \sqrt{\sup V}
        \le \frac{1}{\sqrt{a}} \Big(\Delta_N + \sqrt{\Delta_N + \Delta_N^2} \Big) N \zeta.
    \end{align}
    Applying Theorem \ref{theo: UUB}, it follows from \eqref{eqn: V^N bound UUB proof}, \eqref{eqn: mu_N defn}, \eqref{eqn: V^N diff negative}, and \eqref{eqn: supV bound UUB proof} that the system \eqref{eqn: thetabar N step with l} with $l = 0$ is globally uniformly ultimately bounded with bound $\varepsilon^* \zeta$, where $\varepsilon^*$ is given by \eqref{eqn: epsilonbar star defn}. 
\end{proof}

\begin{lem}
    \label{lem: GFRLS UUB}
    Assume, for all $k \ge 0$, there exists $\zeta_k \in \BBR^n$ such that \eqref{eqn: thetacheck update assumption} holds.
    Furthermore, assume conditions \ref{item: cond F PSD}, \ref{item: cond inv(Pkinv - F) upper bound}, \ref{item: cond Pk lower bound}, and \ref{item: cond weighted regressor PE and bounded} hold.
    Finally, assume there exists $\zeta \ge 0$ such that, for all $k \ge 0$, $\Vert \zeta_k \Vert \le \zeta$.
    Then, the system \eqref{eqn: thetacheck update assumption} is globally uniformly ultimately bounded with bound $\varepsilon^* \zeta$, where $\varepsilon^*$ is given by \eqref{eqn: epsilonbar star defn}.
\end{lem}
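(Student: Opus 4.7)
The proof will be essentially a one-step assembly of two results already proved in the appendices: Lemma \ref{lem: GFRLS UUB N step}, which gives global uniform ultimate boundedness of each of the $N$ subsampled systems \eqref{eqn: thetabar N step with l}, and Lemma \ref{lem: N step UUB}, which lifts uniform ultimate boundedness of all $N$ subsampled systems back to the original one-step system. The plan is to identify the system \eqref{eqn: thetacheck update assumption} with the nonlinear system \eqref{eq:NonLinSys} and verify that the $N$-step maps $f^N_l$ constructed in Appendix B coincide with the update maps in \eqref{eqn: thetabar N step with l}.

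More concretely, first I would observe that, under the hypotheses of the lemma, the map $\check{f}(k,\check{\theta}) \triangleq M_k(\check{\theta} - \zeta_k)$ is continuous in $\check{\theta}$ for every $k$, so \eqref{eqn: thetacheck update assumption} fits the template \eqref{eq:NonLinSys}. Then by repeated substitution (Lemma \ref{lem: thetabar N step update}), the $N$-fold composition $f^N_l$ defined in Appendix B satisfies, for each $l \in \{0,\dots,N-1\}$ and $j \ge 0$,
\begin{align*}
    f^N_l(j,\check{\theta}) = \mathcal{M}_{jN+l,N}\check{\theta} - \overline{\mathcal{M}}_{jN+l,N}\bar{\zeta}_{jN+l,N},
\end{align*}
so the subsampled system $\check{\theta}^N_{l,j+1} = f^N_l(j,\check{\theta}^N_{l,j})$ coincides exactly with \eqref{eqn: thetabar N step with l}.

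Next, I would invoke Lemma \ref{lem: GFRLS UUB N step} once for each $l \in \{0,\dots,N-1\}$: since conditions \ref{item: cond F PSD}, \ref{item: cond inv(Pkinv - F) upper bound}, \ref{item: cond Pk lower bound}, \ref{item: cond weighted regressor PE and bounded} and the bound $\|\zeta_k\| \le \zeta$ all hold, each subsampled system is globally uniformly ultimately bounded with bound $\varepsilon^*\zeta$, where $\varepsilon^*$ is given by \eqref{eqn: epsilonbar star defn}. Finally, applying Lemma \ref{lem: N step UUB} with the common bound $\varepsilon = \varepsilon^*\zeta$ yields global uniform ultimate boundedness of the one-step system \eqref{eqn: thetacheck update assumption} with the same bound, which is the claim.

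There is essentially no hard step here; the only thing requiring care is the bookkeeping to confirm that the subsampled dynamics in Appendix B match those in \eqref{eqn: thetabar N step with l} for every starting offset $l$, which is immediate from Lemma \ref{lem: thetabar N step update}. The case $l = 0$ is done explicitly in the proof of Lemma \ref{lem: GFRLS UUB N step}, and the remark there that the cases $l = 1,\dots,N-1$ follow similarly supplies what is needed, so Lemma \ref{lem: N step UUB} can be applied without further work.
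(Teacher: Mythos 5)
Your proposal is correct and follows exactly the paper's own proof: invoke Lemma \ref{lem: GFRLS UUB N step} for each offset $l = 0,\hdots,N-1$ to get the common bound $\varepsilon^*\zeta$ on every subsampled system \eqref{eqn: thetabar N step with l}, then apply Lemma \ref{lem: N step UUB} to transfer that bound to the one-step system \eqref{eqn: thetacheck update assumption}. The additional bookkeeping you supply (matching the maps $f^N_l$ of Appendix B with Lemma \ref{lem: thetabar N step update}) is consistent with, and slightly more explicit than, what the paper writes.
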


\begin{proof}
    It follows from Lemma \ref{lem: GFRLS UUB N step} that, for all $l = 0,1,\hdots,N-1$, the system \eqref{eqn: thetabar N step with l} is globally uniformly ultimately bounded with bound $\varepsilon^* \zeta$, where $\varepsilon^*$ is given by \eqref{eqn: epsilonbar star defn}.
    It then follows from Lemma \ref{lem: N step UUB} that the system \eqref{eqn: thetacheck update assumption} is also globally uniformly ultimately bounded with bound $\varepsilon^* \zeta$, where $\varepsilon^*$ is given by \eqref{eqn: epsilonbar star defn}.
\end{proof}

\begin{figure}[ht]
    \centering
    \includegraphics[width = 0.5\textwidth]{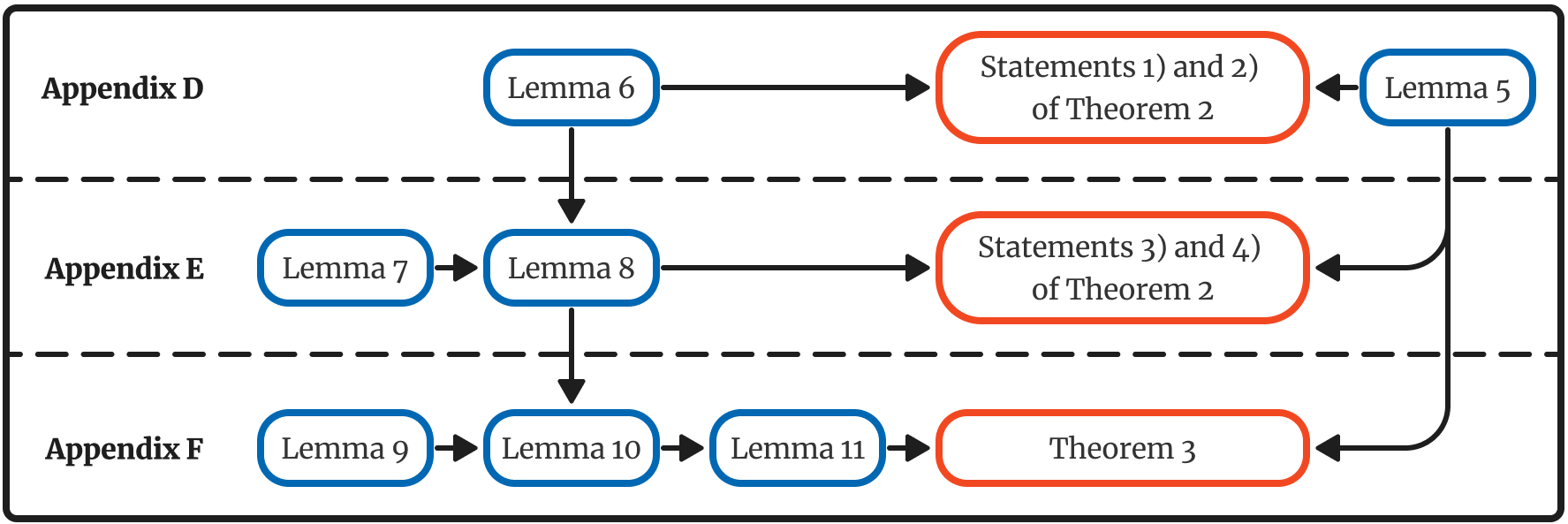}
    \caption{Proof ``roadmap" of Theorem \ref{theo: GFRLS Lyapunov stability v2} and Theorem \ref{theo: GFRLS UUB}.}
    \label{fig:proof-roadmap}
\end{figure}

% \section*{Acknowledgment}

\bibliographystyle{IEEEtran}
\bibliography{refs}

\begin{IEEEbiography}[{\includegraphics[width=1in,height=1.25in,clip,keepaspectratio]{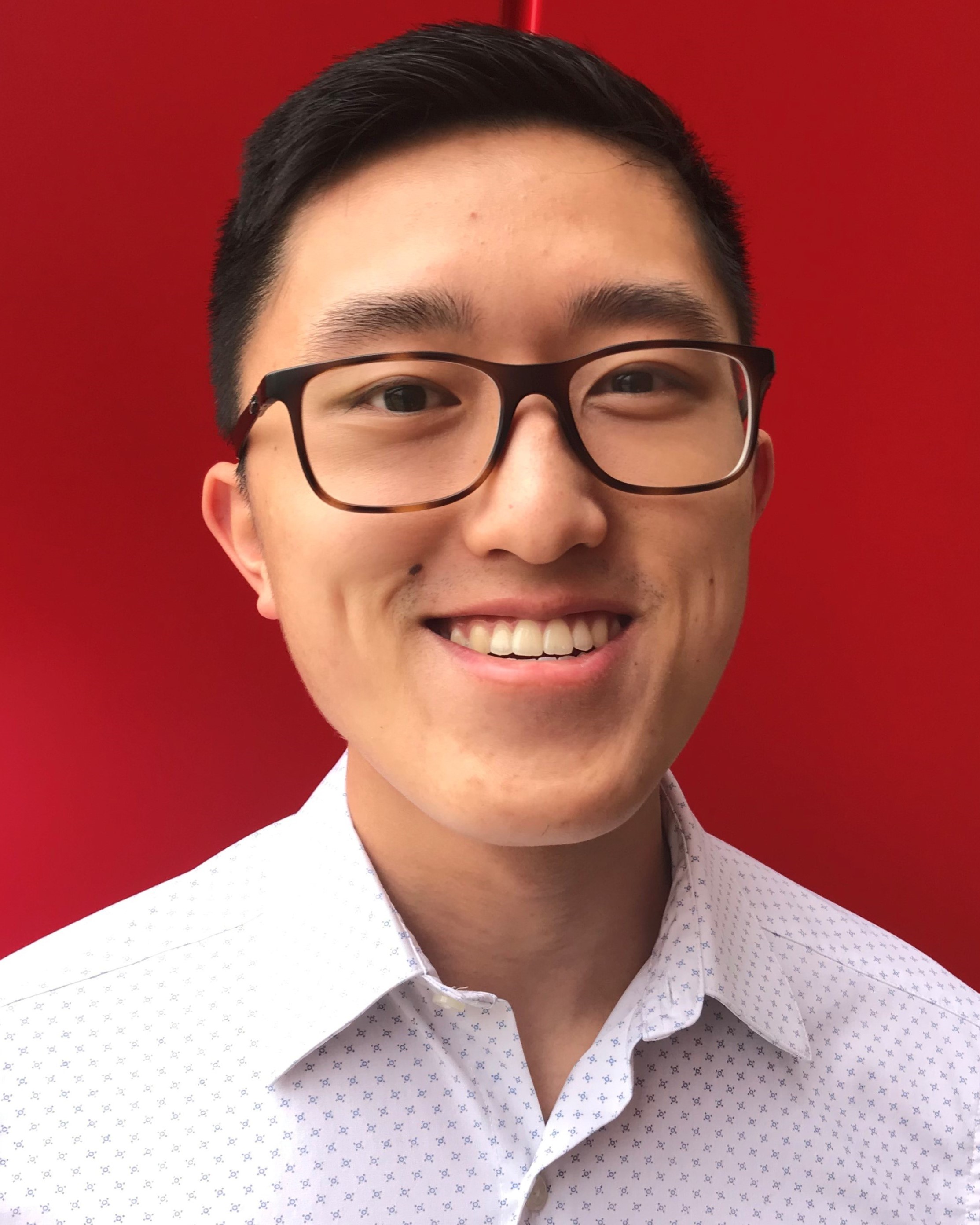}}]{Brian Lai}
received the B.S. degree in mechanical engineering and mathematics from Rutgers University, New Brunswick, NJ, USA, in 2020 
and the M.S. degree in Aerospace Engineering from the University of Michigan, Ann Arbor, MI, USA, in 2022.

He is currently a Ph.D. candidate at the University of Michigan, Ann Arbor, MI, USA.
His research interests include data-driven identification, estimation, and adaptive control for aerospace applications.

Mr. Lai was a recipient of the National Science Foundation Graduate Research Fellowship (GRFP) in 2021.
\end{IEEEbiography}

\begin{IEEEbiography}[{\includegraphics[width=1in,height=1.25in,clip,keepaspectratio]{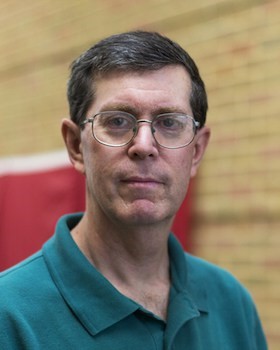}}]{Dennis S. Bernstein} (Life Fellow, IEEE) 
received the Sc.B. degree in applied mathematics from Brown University, Providence, RI, USA, in 1977 
and the M.S.E. and Ph.D. degrees in computer, information, and control engineering from the University of Michigan, Ann Arbor, MI, USA, in 1979 and 1982, respectively.

He is currently the James E. Knott professor in the Aerospace Engineering Department at the University of Michigan in Ann Arbor, Michigan.
His research interests include identification, estimation, and control for aerospace applications.

Prof. Bernstein was Editor-in-Chief of the IEEE Control Systems Magazine from 2003 to 2011 and is the author of \textit{Scalar, Vector, and Matrix Mathematics},
third edition published by Princeton University Press in 2018.

\end{IEEEbiography}

% \begin{IEEEbiography}[{\includegraphics[width=1in,height=1.25in,clip,keepaspectratio]{a1.png}}]{First Author} and all authors may include 
% biographies. Biographies are
% often not included in conference-related papers.
% This author is an IEEE Fellow. The first paragraph
% may contain a place and/or date of birth (list
% place, then date). Next, the author’s educational
% background is listed. The degrees should be listed
% with type of degree in what field, which institution,
% city, state, and country, and year the degree was
% earned. The author’s major field of study should
% be lower-cased.

% The second paragraph uses the pronoun of the person (he or she) and
% not the author’s last name. It lists military and work experience, including
% summer and fellowship jobs. Job titles are capitalized. The current job must
% have a location; previous positions may be listed without one. Information
% concerning previous publications may be included. Try not to list more than
% three books or published articles. The format for listing publishers of a book
% within the biography is: title of book (publisher name, year) similar to a
% reference. Current and previous research interests end the paragraph.

% The third paragraph begins with the author’s title and last name (e.g.,
% Dr. Smith, Prof. Jones, Mr. Kajor, Ms. Hunter). List any memberships in
% professional societies other than the IEEE. Finally, list any awards and work
% for IEEE committees and publications. If a photograph is provided, it should
% be of good quality, and professional-looking.
% \end{IEEEbiography}

\end{document}